
\documentclass[a4paper,12pt]{amsart}
\usepackage{wiaspreprint}

\usepackage[pdftex]{hyperref}

\hypersetup{%
pdftitle = {A new perspective on the Propagation-Separation approach:
Taking advantage of the propagation condition}
pdfauthor = {Saskia Becker, Peter Math\'{e}},
 pdfkeywords = {Structural adaptive smoothing, Propagation, Separation, Local likelihood, Exponential families, Memory},
 pdfstartview = {FitH},
 colorlinks = {false},
 breaklinks = {true},
 pdfborder = {0 0 0},
 linkcolor = {blue},
  plainpages = {false},
}

\usepackage{natbib}
\usepackage{url}
\usepackage{amsmath, amssymb, amsfonts}
\usepackage{color}

\allowdisplaybreaks[3] 


\usepackage{amsthm}
\numberwithin{equation}{section}
\theoremstyle{definition}
\newtheorem{Def}{Definition}[section]
\newtheorem{Alg}{Algorithm}
\newtheorem{Not}[Def]{Notation}
\newtheorem{Ass}{Assumption}
\renewcommand{\theAss}{A\arabic{Ass}}
\theoremstyle{plain}
\newtheorem{Thm}{Theorem}
\newtheorem{Prop}[Def]{Proposition}
\newtheorem{Lem}[Def]{Lemma}
\newtheorem{PS}{PS}

\theoremstyle{remark}
\newtheorem{Cor}[Def]{Corollary}
\newtheorem{Rem}[Def]{Remark}
\newtheorem{Ex}[Def]{Example}

\begin{document}
\author{%
\nofnmark{}				
Saskia Becker,
\footnote{Weierstrass Institute \\
Mohrenstr. 39\\ 10117 Berlin \\ Germany\\
E-Mail: saskia.becker@wias-berlin.de\\
E-Mail: peter.mathe@wias-berlin.de}
Peter Math\'{e} 
}
\title{A new perspective on the Propagation-Separation approach:
Taking advantage of the propagation condition}

\nopreprint{1766}               	
\selectlanguage{english}		
\date{February 5, 2013}			
\subjclass[2010]{ 62G05, 
}	
\keywords{Structural adaptive smoothing,
	 	  Propagation,
		  Separation,
		  Local likelihood,
		  Exponential families}

\begin{abstract}
The Propagation-Separation approach is an iterative procedure for pointwise estimation of local constant and local polynomial functions. The estimator is defined as a weighted mean of the observations with data-driven weights. Within homogeneous regions it ensures a similar behavior as non-adaptive smoothing (propagation), while avoiding smoothing among distinct regions (separation). 
In order to enable a proof of stability of estimates, the authors of the original study introduced an additional memory step aggregating the estimators of the successive iteration steps.
Here, we study theoretical properties of the simplified algorithm, where the memory step is omitted.
In particular, we introduce a new strategy for the choice of the adaptation parameter yielding propagation and stability for local constant functions with sharp discontinuities. 
\end{abstract}

\maketitle

\thispagestyle{empty}

\section{Introduction}

The Propagation-Separation approach~\citep{PoSp05} is an adaptive method for nonparametric estimation. This iterative procedure relates to Lepski's method~\citep{MR1091202, MR2240642} and extends the Adaptive Weights Smoothing (AWS) procedure from~\citet{PoSp00}. 
The Propagation-Separation approach supposes a local parametric model. It is especially powerful in case of large homogeneous regions and sharp discontinuities. However, it can be extended to local linear or local polynomial parameter functions, as well. Hence, the method is applicable to a broad class of nonparametric models. In our study, we concentrate on the local constant model for the sake of simplicity. Important application can be found in image processing, where the local constant model is often satisfied.

In this study, we aim to provide a better understanding of the procedure and its properties. 
The crucial point of the algorithm is the choice of the adaptation bandwidth. 
We present a new formulation of what is known as propagation condition ensuring an appropriate choice. 
This allows the verification of propagation and stability of estimates for local constant parameter functions with sharp discontinuities. 

In comparison to the study of~\citet{PoSp05}, there are two important differences which we want to emphasize. First, we avoid the problematic Assumption S0 on which the theoretical results in~\citep{PoSp05} were partially based. 
Further, we omit the memory step which was included into the algorithm to enable a theoretical study. In each iteration step, the new estimate is compared with the estimate from the previous iteration step. In case of a significant difference the new estimate is replaced by a value between the two estimates, providing a smooth transition, that is relaxation. This is related to the work of \citet{MR2363972} about spatial aggregation of local likelihood estimates 
The theoretical results in \citep{PoSp05} are mainly based on the memory step.
However, we show for piecewise constant functions that the adaptivity of the method yields similar results even if the memory step is removed from the algorithm. This gains importance as it turned out, that for practical use the memory step is questionable. 
Therefore, in later application of the algorithm, the memory step had been omitted, see e.~g. \citet{poas, Li2012, MR2853730, dti_Tabelow08, PoDi08} still yielding the desired behavior in practice. 
This article aims to justify the simplified Propagation-Separation algorithm, where the memory step is removed.

The outline is as follows. After a short introduction of the model and the estimation procedure we introduce a new parameter choice strategy for the adaptation bandwidth. Then, we consider some numerical examples that illustrate the general behavior of the algorithm. The main properties, that is propagation, separation and stability of estimates, will be verified in Section~\ref{sec:theory} for piecewise constant parameter functions with sharp discontinuities. In Section~\ref{sec:discussion}, we justify our new choice of the adaptation bandwidth by analyzing its dependence of the unknown parameter function and by discussing some further questions concerning its application in practice. We finish with a generalization of the setting of our study.

We use two results from~\citet{PoSp05} which do not base on Assumption S0. These are given in Appendix~\ref{app:reminder}.
In order to avoid confusion we refer to them by (PS~\ref{PS 5.2}) and~(PS~\ref{PS 2.1}).

\section{Model and methodology}

In this section we briefly introduce the setting of our study and the estimation procedure resulting from the Propagation-Separation approach. 
The behavior of the algorithm depends on the adaptation bandwidth, and here we introduce a new strategy for its choice.

\subsection{Model}

We consider a local parametric model.

\begin{Not}[Setting]\label{not:setting}
Let $Z_1, ....., Z_n$ be independent random variables with $Z_i = (X_i, Y_i) \in \mathcal{X} \times \mathcal{Y}$. Here, the metric space~$\mathcal{X}$ denotes the design space and $\mathcal{Y} \subseteq \mathbb{R}$ the observation space. The observations~$Y_i$ are assumed to follow the distribution $\mathbb{P}_{\theta(X_i)} \in \mathcal{P}$, where~$\mathcal{P}$ denotes some parametric family of probability distributions and $\theta: \mathcal{X} \to \Theta \subseteq \mathbb{R}$ is the parameter function that we aim to estimate. We suppose the design~$\lbrace X_i \rbrace_{i=1}^n$ to be known.
\end{Not}

Typical examples of this general setting are Gaussian regression or the inhomogeneous Ber\-noulli, Exponential, and Poisson models, see~\citep[Section 2]{PoSp05} for a detailed description. 
In general, the procedure may work for any vector space~$\mathcal{Y} \subseteq M$ with $Y_i \sim \mathbb{P}_{\theta(X_i)}$, $\theta: \mathcal{X} \to \Theta \subseteq M$, where~$M$ is a metric space. 
Following~\citet{PoSp05} we suppose the parametric family to be an exponential family with standard regularity conditions. This allows an explicit expression of the Kullback-Leibler divergence simplifying our following analysis.

\begin{Ass}[Local exponential family model]\label{A1}
$\mathcal{P} =(\mathbb{P}_{\theta}, \theta \in \Theta)$ is an exponential family 
with a compact and convex parameter set~$\Theta$ and non-decreasing functions $C,B \in C^2\left( \Theta, \mathbb{R} \right)$ such that
\[
	p(y, \theta) := d \mathbb{P}_{\theta} / d \mathbb{P} (y) = p(y) \exp \left[ T(y) C(\theta) - B(\theta) \right], \qquad \theta \in \Theta,
\]
where~$p(y)$ is some non-negative function on~$\mathcal{Y}$, $T: \mathcal{Y} \to \mathbb{R} $, and $B'(\theta) = \theta \, C'(\theta)$. For the parameter~$\theta$ it holds
\begin{equation}\label{eq:theta}
	\int p(y, \theta) \mathbb{P}(dy) = 1 \quad \text{ and } \quad 
	\mathbb{E}_{\theta} \left[ T(Y) \right] = \int T(y) p(y, \theta) \mathbb{P}(dy) = \theta.
\end{equation}
\end{Ass}

\begin{Rem}\label{rem:A1}\hspace{1 pt}
\begin{itemize}
\item In \citep[Assumption (A1)]{PoSp05}, the authors assumed $T(y) \equiv y$, i.e. the identity map. 
Any invertible transformation~$T$ leaves the Kullback-Leibler divergence unchanged. 
Since the results (PS~\ref{PS 5.2}) and (PS~\ref{PS 2.1}), see Appendix~\ref{app:reminder}, depend on the Kullback-Leibler divergence only, they remain valid for invertible maps~$T$.
In this study, we consider the general case explicitly in order to clarify, where this transformation $T$ comes into play.
\item Equation~(\ref{eq:theta}), i.e. $\mathbb{E}_{\theta} \left[ T(Y) \right] = \theta$, can be achieved via reparametrization with $\theta := t(\vartheta)$, where $t(\vartheta) := \mathbb{E}_{\vartheta} \left[ T(Y) \right]$. However, this leads to estimation of~$\theta$ instead of~$\vartheta$ such that the theoretical properties in Section~\ref{sec:theory} do not apply for~$\vartheta$. This will be discussed in Section~\ref{sec:genModel}.
\item A list of parametric families satisfying Assumption~(\ref{A1}), probably after reparametrization, is given in Appendix~\ref{app:A1}.
\item We suppose Assumption~(\ref{A1}) throughout this article while all later Assumptions will be required for specific results only.
\end{itemize}
\end{Rem}

In our subsequent analysis the notions of the Kullback--Leibler
divergence, given here as
\[
	\mathcal{KL}\left( \mathbb{P}_{\theta},\mathbb{P}_{\theta'} \right)  
	:= \int \ln \left( \frac{ d( \mathbb{P}_{\theta} ) }{ d( \mathbb{P}_{\theta'} ) } \right) \mathbb{P}_{\theta} (dy), \quad \theta, \theta' \in \Theta,
\]
and the Fisher information
\[
	I (\theta) :=  - \mathbb{E} \left[ \frac{ \partial^2 }{ \partial \theta^2 } \log p(y, \theta) \right], \quad \theta \in \Theta,
\]
will be important.

\begin{Lem}[Fisher information and Kullback-Leibler
  divergence]\label{lem:A1}
Under Assumption~(\ref{A1})
we have that $I(\theta) = C'(\theta),\ \theta\in\Theta$.
Moreover, the following holds.
\begin{itemize}
\item For every constant~$\varkappa \geq 1$ there is a compact  and convex subset $ \Theta_{\varkappa} \subseteq \Theta$ such that 
\begin{equation}\label{eq:varkappa}
	\frac{I(\theta_1)}{I(\theta_2)} \leq \varkappa^2,\quad
        \theta_1, \theta_2 \in \Theta_{\varkappa}.
\end{equation} 
\item The Kullback-Leibler divergence is convex w.r.t. the first argument. It satisfies
\begin{equation}\label{eq:KL}
	\mathcal{KL} \left( \mathbb{P}_{\theta},\mathbb{P}_{\theta'} \right) 
	= \theta \left[ C(\theta) - C(\theta') \right] - \left[ B(\theta) - B(\theta') \right]
	\approx I(\theta) \left[ \theta - \theta' \right]^2 / 2.
\end{equation}
\end{itemize}
\end{Lem}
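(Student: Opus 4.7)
The plan is to treat the three assertions separately, each by direct computation from the exponential family representation in Assumption~(\ref{A1}). For the identity $I(\theta) = C'(\theta)$, I would start from $\log p(y,\theta) = \log p(y) + T(y) C(\theta) - B(\theta)$ and differentiate twice in $\theta$, obtaining $\partial^2_\theta \log p(y,\theta) = T(y) C''(\theta) - B''(\theta)$. Taking expectation under $\mathbb{P}_\theta$ and using $\mathbb{E}_\theta[T(Y)] = \theta$ from~(\ref{eq:theta}) gives $I(\theta) = B''(\theta) - \theta C''(\theta)$. Differentiating the structural identity $B'(\theta) = \theta C'(\theta)$ once yields $B''(\theta) = C'(\theta) + \theta C''(\theta)$, and a substitution produces the claim.

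For the Kullback--Leibler part, I would insert the density into the definition to get
\[
\mathcal{KL}(\mathbb{P}_\theta,\mathbb{P}_{\theta'}) = \mathbb{E}_\theta[T(Y)]\,[C(\theta) - C(\theta')] - [B(\theta) - B(\theta')],
\]
and then apply~(\ref{eq:theta}) to replace $\mathbb{E}_\theta[T(Y)]$ by $\theta$. Convexity in the first argument is read off from two differentiations: the first $\theta$-derivative equals $C(\theta) - C(\theta') + \theta C'(\theta) - B'(\theta) = C(\theta) - C(\theta')$ by $B' = \theta C'$, and the second derivative equals $C'(\theta) = I(\theta) \geq 0$, using that $C$ is non-decreasing. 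For the approximation $\mathcal{KL} \approx I(\theta)(\theta-\theta')^2/2$, I would Taylor-expand $\theta \mapsto \mathcal{KL}(\mathbb{P}_\theta,\mathbb{P}_{\theta'})$ around $\theta = \theta'$; both the value and the first derivative vanish at that point, and the second derivative is $I(\theta)$, so the expansion starts precisely at the claimed quadratic order.

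For the Fisher ratio bound~(\ref{eq:varkappa}), I would rely on continuity and compactness. Since $C \in C^2(\Theta,\mathbb{R})$, the map $I = C'$ is continuous on $\Theta$, and under the standard regularity understood in Assumption~(\ref{A1}) it is strictly positive. Hence for any interior point $\theta_0 \in \Theta$ the function $(\theta_1,\theta_2) \mapsto I(\theta_1)/I(\theta_2)$ is continuous on a neighborhood of $(\theta_0,\theta_0)$ with value $1$ there, so given $\varkappa \geq 1$ a sufficiently small closed ball around $\theta_0$ intersected with $\Theta$ (convex, compact) provides the required set $\Theta_\varkappa$. The only real subtlety, which I would flag explicitly, is the non-vanishing of $I$: without $C' > 0$ the ratio in~(\ref{eq:varkappa}) is not even well defined, so this regularity of the exponential family must be invoked before invoking continuity.
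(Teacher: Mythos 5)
Your proposal is correct and takes essentially the same route as the paper's own (sketched) proof: the identity $B'(\theta)=\theta C'(\theta)$ for $I(\theta)=C'(\theta)$, the explicit KL formula via $\mathbb{E}_{\theta}[T(Y)]=\theta$, convexity from the second derivative $C'(\theta)$, a Taylor expansion for the quadratic approximation, and continuity plus compactness (with the implicitly assumed positivity of $C'$, which you rightly flag) for~(\ref{eq:varkappa}). The only cosmetic difference is that you expand $\mathcal{KL}$ in its first argument around $\theta'$, which yields $I(\theta')(\theta-\theta')^2/2$ rather than the paper's expansion of $B$ and $C$ at $\theta$ giving $I(\theta)(\theta-\theta')^2/2$; this is immaterial for the informal ``$\approx$'' in~(\ref{eq:KL}).
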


\begin{proof}[Proof sketch]
The first assertion follows with $B'(\theta) = \theta C'(\theta)$.
Then, Equation~(\ref{eq:varkappa}) holds due to the compactness of~$\Theta_{\varkappa}$ and $C \in C^2 (\Theta, \mathbb{R} )$.
The convexity is satisfied since the second derivative of the Kullback-Leibler divergence is non-negative
\[
	\tfrac{\partial^2}{\partial \theta^2} \, \mathcal{KL} \left( \mathbb{P}_{\theta},\mathbb{P}_{\theta'} \right) = C'(\theta) > 0.
\]
The Taylor expansions of~$B$ and~$C$ yield for the Kullback-Leibler divergence
\begin{eqnarray}
	\mathcal{KL} \left( \mathbb{P}_{\theta}, \mathbb{P}_{\theta'} \right)
	\approx  
	\left[ - \theta C''(\theta) + B''(\theta) \right] ( \theta - \theta' )^2/2 
	= C'(\theta) ( \theta - \theta' )^2/2,\nonumber 
\end{eqnarray}
where $\theta,\theta' \in \Theta$.
\end{proof}

The set~$\Theta_{\varkappa}$ should be sufficiently large such that $\theta(X_i) \in \Theta_{\varkappa}$ holds for all $i \in \lbrace 1,...,n \rbrace$. Later on, we require that even the corresponding estimators are elements of~$\Theta_{\varkappa}$, see Assumption~(\ref{AEst}).
In Remark~\ref{rem:AEst}, we discuss how this can be achieved without increasing~$\varkappa$ overly.

\subsection{Methodology of the Propagation-Separation approach}

The algorithm is iterative, and in each iteration step the pointwise estimator of the parameter function is defined as a weighted mean of the observations. 
In each design point the weights are chosen adaptively as product of two kernel functions.
The \emph{location kernel} acts on the design space~$\mathcal{X}$, and the \emph{adaptation kernel} compares the pointwise parameter estimates of the previous iteration step in terms of the Kullback-Leibler divergence. 
For each of the two kernels, a bandwith controls how much information is taken into account.
The location bandwidth increases along the number of iterations. Starting at a small vicinity, in each iteration step the considered region is extended. 
The increasing number of included observations enables a monotone variance reduction during iteration, while the adaptation kernel leads to a  decreasing or (in case of model misspecification) bounded estimation bias. It will be clear from the subsequent analysis that, by doing so, one obtains similar results as non-adaptive smoothing within homogeneity regions (propagation) and avoids smoothing across structural borders (separation). 

We turn to a formal description, and we start with introducing some notation. 

\begin{Not}\label{not:algorithm}\hspace{1pt}
\begin{itemize}
\item $\theta_i := \theta(X_i)$;
\item $\Delta$ denotes a metric on~$\mathcal{X}$;
\item $\mathcal{KL}(\theta,\theta') := \mathcal{KL}(\mathbb{P}_{\theta},\mathbb{P}_{\theta'})$ is the Kullback-Leibler divergence of~$\mathbb{P}_{\theta}$ and~$\mathbb{P}_{\theta'}$, $\theta, \theta' \in \Theta$;
\item $K_{\mathrm{loc}}, K_{\mathrm{ad}}: \mathbb{R}^+ \to [0,1]$ are non-increasing kernels with compact support~$[0,1]$ and $K_{\cdot}(0) = 1$, where~$K_{\mathrm{loc}}$ denotes the location and~$K_{\mathrm{ad}}$ the adaptation kernel;
\item $\lbrace h^{(k)} \rbrace_{k=0}^{k^*}$ is an increasing sequence of bandwidths for the location kernel with $h^{(0)} > 0$; 
\item $\lambda>0$ is the bandwidth of the adaptation kernel; 
\item $ U_i^{(k)} := \lbrace X_j \in \mathcal{X}: \Delta( X_i, X_j) \leq h^{(k)} \rbrace $.
\end{itemize}
\end{Not}

For comparison and the initialization of the algorithm we define the non-adaptive estimator~$ \overline{\theta}_i^{(k)} $.

\begin{Def}[Non-adaptive estimator]\label{def:MLE}
Let $i \in \lbrace 1,...,n \rbrace$ and $k \in \lbrace 0,...,k^{*} \rbrace$. The non-adaptive estimator~$\overline{\theta}_i^{(k)}$ of~$\theta_i$ is defined by
\[
	\overline{\theta}_i^{(k)} := \sum_{j=1}^n \overline{w}_{ij}^{(k)} T(Y_j) / \overline{N}_i^{(k)}
\] 
with weights $\overline{w}_{ij}^{(k)} := K_{\mathrm{loc}} \left( \Delta(X_i, X_j) / h^{(k)} \right)$, 
and $\overline{N}_i^{(k)} := \sum_j \overline{w}_{ij}^{(k)}$.
\end{Def}

\begin{Cor}[Relation to maximum likelihood estimation]\label{cor:KL}
Assumption~(\ref{A1}) implies that the standard local weighted maximum likelihood estimator
\[
	\theta_i^{(\mathrm{MLE})} := \mathrm{argsup}_{\theta} L(\overline{W}_i^{(k)}, \theta) \quad \text{with} \quad L(\overline{W}_i^{(k)}, \theta) := \sum_j \overline{w}_{ij}^{(k)} \log p(Y_j, \theta),
\]
where $\overline{W}_i^{(k)} := \lbrace \overline{w}_{ij}^{(k)} \rbrace_j$, equals the non-adaptive estimator~$\overline{\theta}_i^{(k)}$ in Definition~\ref{def:MLE}. Further, it follows for the ''fitted log-likelihood'' with $\theta \in \Theta$ that
\[
	L(\overline{W}_i^{(k)}, \theta_i^{(\mathrm{MLE})}, \theta) 
	:= L(\overline{W}_i^{(k)}, \theta_i^{(\mathrm{MLE})}) - L(\overline{W}_i^{(k)}, \theta)
	= \overline{N}_i^{(k)} \mathcal{KL} \left( \overline{\theta}_i^{(k)}, \theta \right).
\]
\end{Cor}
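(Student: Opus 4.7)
The plan is to carry out a direct calculation using the explicit form of the log-density from Assumption~(\ref{A1}), namely
\[
\log p(y,\theta) = \log p(y) + T(y) C(\theta) - B(\theta).
\]
Summing against the weights $\overline{w}_{ij}^{(k)}$ gives
\[
L(\overline{W}_i^{(k)}, \theta) = \sum_j \overline{w}_{ij}^{(k)} \log p(Y_j) + C(\theta) \sum_j \overline{w}_{ij}^{(k)} T(Y_j) - B(\theta) \overline{N}_i^{(k)},
\]
which is a smooth function of $\theta$ on $\Theta$.

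For the first assertion, I would differentiate with respect to $\theta$ and use the relation $B'(\theta) = \theta\, C'(\theta)$ from Assumption~(\ref{A1}). The stationarity condition becomes
\[
C'(\theta) \Bigl[ \sum_j \overline{w}_{ij}^{(k)} T(Y_j) - \theta\, \overline{N}_i^{(k)} \Bigr] = 0.
\]
Since $C'(\theta) = I(\theta) > 0$ by Lemma~\ref{lem:A1}, this forces $\theta = \sum_j \overline{w}_{ij}^{(k)} T(Y_j)/\overline{N}_i^{(k)} = \overline{\theta}_i^{(k)}$. Uniqueness of the maximizer follows because $\partial_\theta^2 L = C''(\theta) \sum_j \overline{w}_{ij}^{(k)} T(Y_j) - B''(\theta)\overline{N}_i^{(k)}$ evaluated at $\overline{\theta}_i^{(k)}$ equals $-C'(\theta)\overline{N}_i^{(k)} < 0$ (using $B''=\theta C'' + C'$), so $L(\overline{W}_i^{(k)},\cdot)$ is strictly concave at the critical point; a short argument via the convexity of $\mathcal{KL}$ established in Lemma~\ref{lem:A1} likewise shows the maximizer is global. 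Hence $\theta_i^{(\mathrm{MLE})} = \overline{\theta}_i^{(k)}$.

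For the second assertion, I would simply subtract $L(\overline{W}_i^{(k)}, \theta)$ from $L(\overline{W}_i^{(k)}, \overline{\theta}_i^{(k)})$. The $\log p(Y_j)$ terms cancel, and what remains is
\[
\bigl[ C(\overline{\theta}_i^{(k)}) - C(\theta) \bigr] \sum_j \overline{w}_{ij}^{(k)} T(Y_j) - \bigl[ B(\overline{\theta}_i^{(k)}) - B(\theta) \bigr] \overline{N}_i^{(k)}.
\]
Now I would substitute $\sum_j \overline{w}_{ij}^{(k)} T(Y_j) = \overline{\theta}_i^{(k)} \overline{N}_i^{(k)}$, factor out $\overline{N}_i^{(k)}$, and recognize the bracketed expression as the closed-form Kullback--Leibler divergence recorded in equation~(\ref{eq:KL}) of Lemma~\ref{lem:A1}. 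This yields $L(\overline{W}_i^{(k)}, \theta_i^{(\mathrm{MLE})}, \theta) = \overline{N}_i^{(k)} \mathcal{KL}(\overline{\theta}_i^{(k)}, \theta)$, as claimed.

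There is essentially no obstacle here beyond bookkeeping: the result is a consequence of the exponential-family structure together with the normalization $\mathbb{E}_\theta[T(Y)] = \theta$ and the identity $B' = \theta C'$. The only mild subtlety is justifying that the unique critical point is indeed the supremum, which I would handle in one line either by the sign of the second derivative or by appealing to the convexity of $\mathcal{KL}$ in its first argument.
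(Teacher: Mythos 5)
Your computation is correct and is exactly the standard exponential-family argument the paper leaves implicit (the Corollary is stated there without proof): expand the weighted log-likelihood, use $B'(\theta)=\theta C'(\theta)$ to locate the unique critical point $\overline{\theta}_i^{(k)}$, and recognize the closed form of the Kullback--Leibler divergence from Lemma~\ref{lem:A1} in the difference of log-likelihoods. One small remark: globality of the maximizer is obtained most cleanly from your own identity $L(\overline{W}_i^{(k)},\overline{\theta}_i^{(k)})-L(\overline{W}_i^{(k)},\theta)=\overline{N}_i^{(k)}\,\mathcal{KL}(\overline{\theta}_i^{(k)},\theta)\ge 0$ together with nonnegativity of $\mathcal{KL}$, rather than from the convexity statement in Lemma~\ref{lem:A1}, which concerns the \emph{first} argument of $\mathcal{KL}$ and not the map $\theta\mapsto\mathcal{KL}(\overline{\theta}_i^{(k)},\theta)$ appearing here.
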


Now, we present the (slightly modified) algorithm of the Propagation-Separation approach allowing $ T(y) \neq y $ and omitting the memory step~\cite[Section 3.2]{PoSp05} by setting $\eta_i \equiv 1$. More details can be found in~\cite[Section 3]{PoSp05}. 

\begin{Alg}[Propagation-Separation approach]\label{algorithm}\hspace{1 pt}
\begin{itemize}
\item Input parameters: Sequence of bandwidths~$\lbrace h^{(k)} \rbrace_{k=0}^{k^*}$ and adaptation bandwidth~$\lambda$.
\item Initialization: $\tilde{\theta}_i^{(0)} := \overline{\theta}_i^{(0)}$ and $\tilde{N}_i^{(0)} := \overline{N}_i^{(0)}$ for all $i \in \lbrace 1,...,n \rbrace$,~$k:= 1$. 
\item Iteration: Do for every $i=1,...,n$
\begin{equation}\label{eq:adEst}
	\tilde{\theta}_i^{(k)} := \sum_{j=1}^n \tilde{w}_{ij}^{(k)} T(Y_j) / \tilde{N}_i^{(k)}
\end{equation}
with weights $\tilde{w}_{ij}^{(k)} := K_{\mathrm{loc}} \left( \Delta(X_i, X_j) / h^{(k)} \right) \cdot K_{\mathrm{ad}} \left( s_{ij}^{(k)} / \lambda \right)$, \\
where $s_{ij}^{(k)} := \tilde{N}_i^{(k-1)} \mathcal{KL}(\tilde{\theta}_i^{(k-1)}, \tilde{\theta}_j^{(k-1)}) $ 
and $\tilde{N}_i^{(k)} := \sum_j \tilde{w}_{ij}^{(k)}$.
\item Stopping: Stop if~$k = k^*$, otherwise increase~$k$ by~$1$.
\end{itemize}
\end{Alg}

\begin{Rem}[Choice of the input parameters]\hspace{1 pt}
\begin{itemize}
\item The amount of adaptivity is determined by the adaptation bandwidth~$\lambda$ which can be specified by the propagation condition independent of the observations at hand, see Sections~\ref{sec:propCond} and~\ref{sec:Lambda} and~\citep[Sections 3.4 and 3.5]{PoSp05}. The choice~$\lambda = \infty$ yields non-adaptive smoothing.
\item The initial location bandwidth~$h^{(0)}$ should be sufficiently small in order to avoid smoothing among distinct homogeneous compartments, before adaptation starts. In practice, any choice of~$h^{(0)}$ such that $U_i^{(0)} = \lbrace X_i \rbrace$ for every $i \in \lbrace 1,...,n \rbrace$ seems to be recommendable. Its drawback is discussed in Remark~\ref{rem:AEst}.
\item The sequence of bandwidth~$\lbrace h^{(k)} \rbrace_{k=0}^{k^*}$ can be chosen such that $h^{(k)} := a^k h^{(0)}$ with $a \approx 1.25^{1/d}$ if~$d$ denotes the dimension of the design space~$\mathcal{X}$, see \citet[Section 3.4]{PoSp05}. Alternatively, we could ensure a constant variance reduction of the estimator, see \citet{poas}.
\item Note, that the procedure provides an intrinsic stopping criterion yielding a certain stability of estimates, see Section~\ref{sec:theory} and the simulations in Figures~\ref{fig:stepF} and~\ref{fig:polynF}. 
Hence, the maximal bandwidth~$h^{(k^*)}$, specified by the maximal number of iterations~$k^*$, is only bounded by the available computation time.
\end{itemize}
\end{Rem}

\subsection{Propagation condition}\label{sec:propCond}

As mentioned above, an appropriate choice of the adaptation bandwidth~$\lambda$ is crucial for the behavior of the algorithm.~\citet[Section 3.5]{PoSp05} suggested a choice, called \emph{propagation condition}. The basic idea is that the impact of the statistical penalty in the adaptive weights should be negligible under homogeneity yielding almost free smoothing within homogeneous regions. More precisely, the authors proposed to adjust~$\lambda$ by Monte-Carlo simulations in accordance with the following criterion, where an artificial data set is considered.
\begin{quote}
"(...) the parameter~$\lambda$ can be selected as the minimal value of~$\lambda$ that, in case of a homogeneous (parametric) model $\theta(x) \equiv \theta$, provides a prescribed probability to obtain the global model at the end of the iteration process."
\end{quote}
Here, we formally introduce a new criterion which allows, in the setting of Algorithm~\ref{algorithm}, the verification of propagation and stability under (local) homogeneity. 
Additionally, it provides a better interpretability than earlier formulations, see e.g. \citet{polzehletal10a}.

Under homogeneity, i.e. if~$\theta(.) \equiv \theta$, (PS~\ref{PS 2.1}) in Appendix~\ref{app:reminder} shows that the non-adaptive estimator satisfies 
$\mathbb{P} \left( \overline{N}_i^{(k)} \mathcal{KL} (\overline{\theta}_i^{(k)}, \theta ) > z \right) \leq 2 e^{-z}$ for all $i \in \lbrace 1,...,n \rbrace$ and every $k \in \lbrace 0,..., k^* \rbrace$.
Hence, $\mathcal{KL} (\overline{\theta}_i^{(k)}, \theta )$ decreases at least with rate~$\overline{N}_i^{(k)}$. 
The following condition ensures a similar behavior for the adaptive
estimator. We introduce the function~$\mathfrak{Z}_{\lambda}: \lbrace 0,...,k^* \rbrace \times (0, 1) \times \Theta \to \mathbb{R}^+$ with~$\lambda > 0$, defined as
\[
	\mathfrak{Z}_{\lambda}(k, p; \theta ) := \inf \left\{ z > 0: \mathbb{P} \left( \overline{N}_i^{(k)} \mathcal{KL}(\tilde{\theta}_i^{(k)}(\lambda), \theta) > z \right) \leq p \right\},
\]
where~$\tilde{\theta}_i^{(k)}(\lambda)$ denotes the adaptive estimator resulting from the Propagation-Separation approach with adaptation bandwidth~$\lambda > 0$ and observations $Y_i \sim \mathbb{P}_{\theta}$ for all $i \in \lbrace 1,...,n \rbrace$, i.e.~$\theta(.) \equiv \theta$.

\begin{Def}[Propagation condition]\label{def:propCond}
We say that~$\lambda$ is chosen in accordance with the propagation condition at level~$\epsilon > 0$ for $\theta \in \Theta$ if the function~$\mathfrak{Z}_{\lambda}(., p; \theta)$ is non-increasing for all $p \in (\epsilon, 1)$.
\end{Def}

As before, the propagation condition is formulated w.r.t. some fixed parameter $\theta \in \Theta$. 
In practice, the parameter function~$\theta(.)$ is unknown. Hence, we need to ensure that the propagation condition is satisfied for all~$\theta_i$ with $i \in \lbrace 1,..., n \rbrace$. At best, the choice of~$\lambda$ by the propagation condition is independent of the underlying parameter~$\theta$.
The study in Section~\ref{sec:Independence} points out that this is the case for Gaussian and exponential distribution and as a consequence for log-normal, Rayleigh, Weibull, and Pareto distribution. Else, we recommend to identify some parameter~$\theta^*$ yielding a sufficiently large choice of the adaptation bandwidth~$\lambda$ such that the propagation condition remains valid for all~$\theta_i$ with $i \in \lbrace 1,..., n \rbrace$, see Section~\ref{sec:Independence} for more details.

\begin{Rem}\label{rem:propCond}\hspace{1 pt}
\begin{itemize}
\item In Section~\ref{sec:Independence}, we consider some examples for Gaussian, exponential and Poisson distribution, see Figures~\ref{fig:Gauss}, \ref{fig:Exp}, and~\ref{fig:Poisson}.
\item If the function~$\mathfrak{Z}_{\lambda}(., p_0, \theta)$, $\theta \in \Theta$, in Definition~\ref{def:propCond} is non-increasing for some $p_0 \in (0, 1)$ then it is non-increasing for all~$p \geq p_0$ by monotonicity.
\item The propagation condition yields a lower bound for the choice of~$\lambda$. In general, it is advantageous to allow as much adaptation as possible without violating the propagation condition. Hence, the optimal choice of~$\lambda$ is
\[
	\lambda_{opt}(\epsilon, \theta) := \inf \left\{ \lambda > 0: \mathfrak{Z}_{\lambda}(.,\epsilon; \theta) \text{ is a non-increasing function} \right\}.
\]
\item In Theorem~\ref{thm:locPropNeuMS} we need~$\epsilon$ to be strictly smaller than~$1/n$. However, this is based on a quite rough upper bound. In practice, it seems advantageous to choose~$\epsilon$ appropriately for the respective application. Note, that~$\lambda_{opt}(\epsilon, \theta)$ increases if~$\epsilon$ decreases.
\item The probability $\mathbb{P} \left( \overline{N}_i^{(k)} \mathcal{KL}(\tilde{\theta}_i^{(k)}(\lambda), \theta) > z \right)$ cannot be calculated exactly. In Section~\ref{sec:approxPropCond}, we introduce an appropriate approximation which can be used in practice. 
\end{itemize}
\end{Rem}

\subsection[Heuristics]{Some heuristic observations}\label{sec:Heurstics}

In order to provide some intuition, we illustrate the general behavior of Algorithm~\ref{algorithm} on two examples, see Figures~\ref{fig:stepF} and~\ref{fig:polynF}. We apply the R-package
\texttt{aws} \citep{aws}. Here, the memory step is skipped by default. It can be included setting \texttt{memory = TRUE}. 

\begin{figure}
\includegraphics[width = 0.32\textwidth]{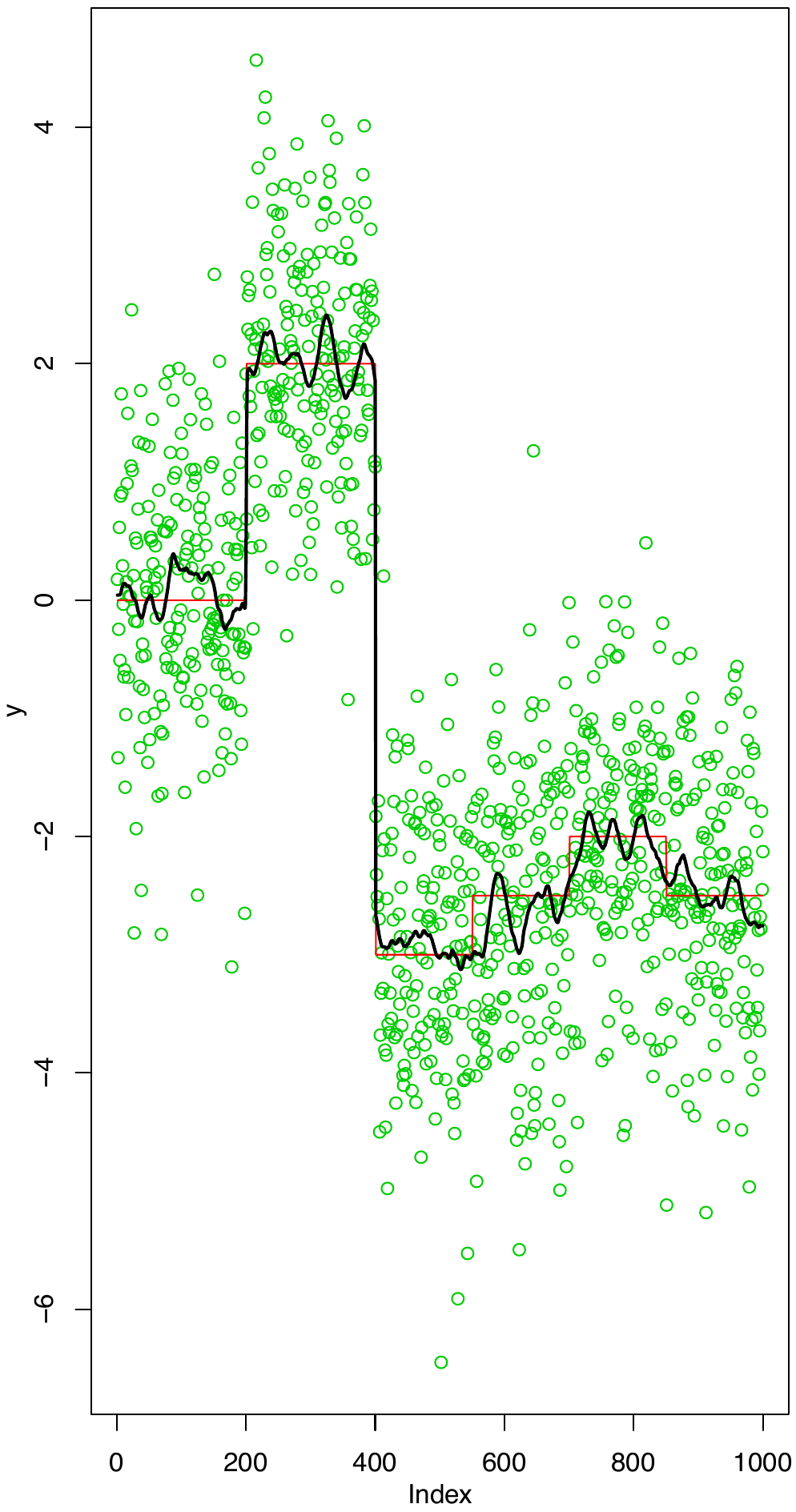} \hspace{1 pt}
\includegraphics[width = 0.32\textwidth]{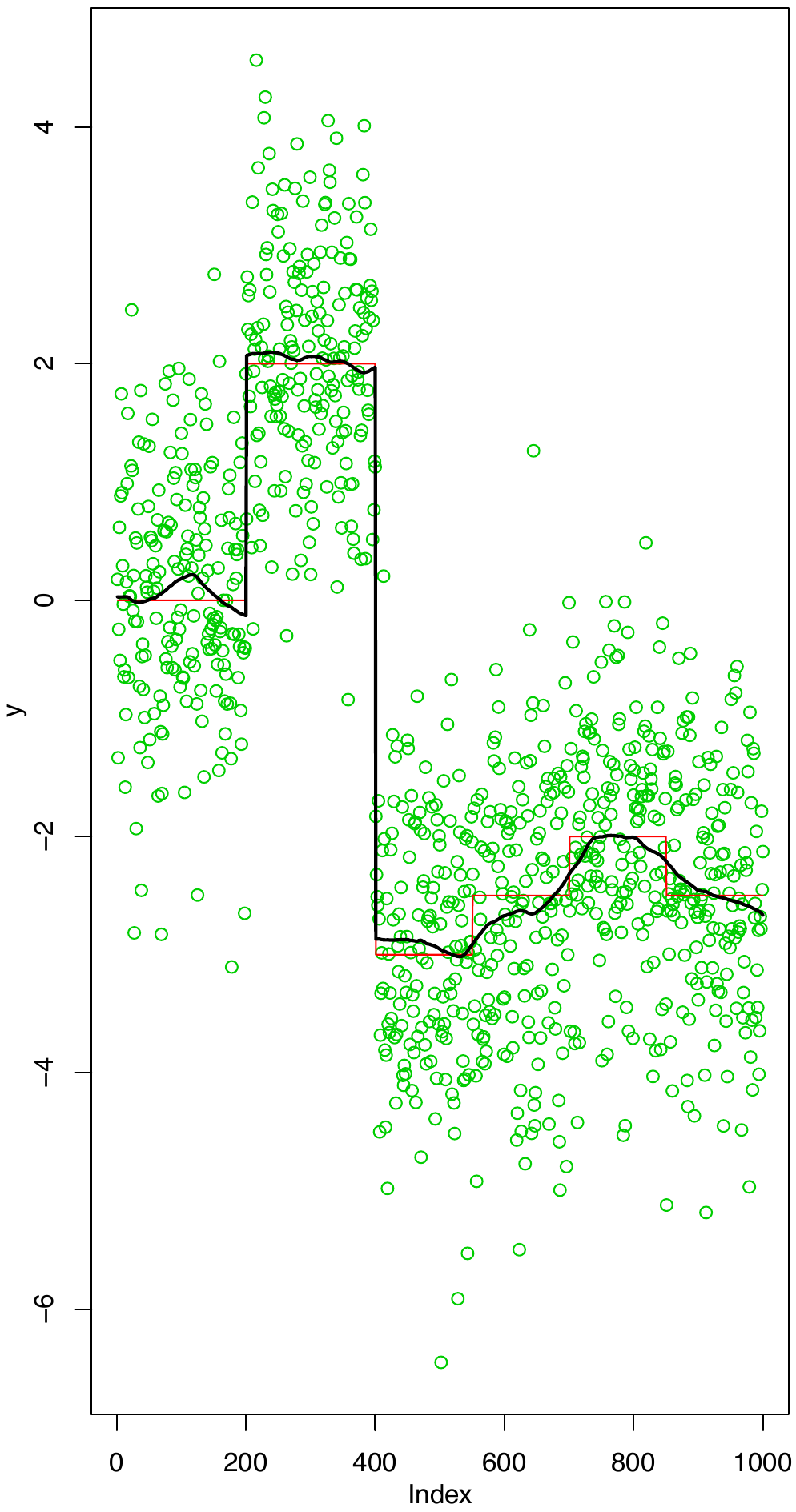} \hspace{1 pt}
\includegraphics[width = 0.32\textwidth]{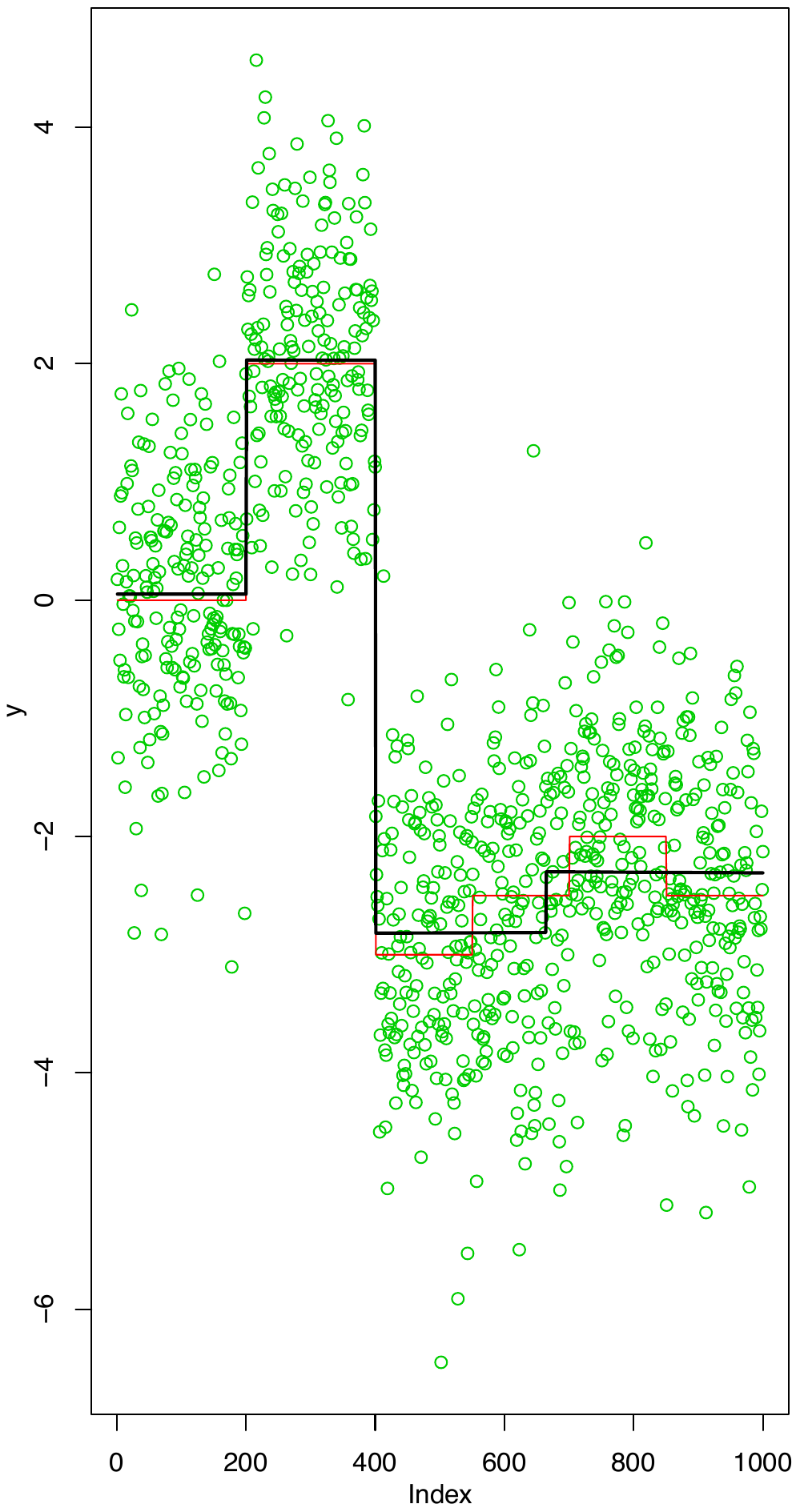}
\caption{{\footnotesize Results of Algorithm~\ref{algorithm} (black line) for the piecewise constant parameter function~$\theta_1(.)$ (red line) with adaptation bandwidth $\lambda_1 = 14.6$ and location bandwidths (f.l.t.r.) $h_1 = 17.1, 52, 947$. The green circles represent the Gaussian distributed observations.}}
\label{fig:stepF}
\end{figure}

On $\mathcal{X} := \lbrace 1,...,1000 \rbrace $, the first test function is piecewise constant
\[
	\theta_1(x) := \begin{cases}
		0, \quad &\text{if } x \in \lbrace 1,...,200 \rbrace\\
		2, \quad &\text{if } x \in \lbrace 201,...,400 \rbrace\\
		-3, \quad &\text{if } x \in \lbrace 401,...,550 \rbrace\\
		-2.5, \quad &\text{if } x \in \lbrace 551,...,700 \rbrace\\
		-2, \quad &\text{if } x \in \lbrace 701,...,850 \rbrace\\
		-2.5, \quad &\text{if } x \in \lbrace 851,...,1000 \rbrace
	        \end{cases}
\]
and the second one is piecewise polynomial
\[
	\theta_2(x) := \begin{cases}
		x / 300, \quad &\text{if } x \in \lbrace 1,...,300 \rbrace\\
		4 + ((x/100-5))^2/2, \quad &\text{if } x \in \lbrace 301,...,800 \rbrace\\
		15 - 2x/100, \quad &\text{if } x \in \lbrace 801,...,1000 \rbrace.
        	\end{cases}
\]
The observations follow a Gaussian distribution, i.e. $Y_i \sim \mathcal{N} \left( \theta(X_i), 1 \right)$. 

\begin{figure}
\includegraphics[width = 0.32\textwidth]{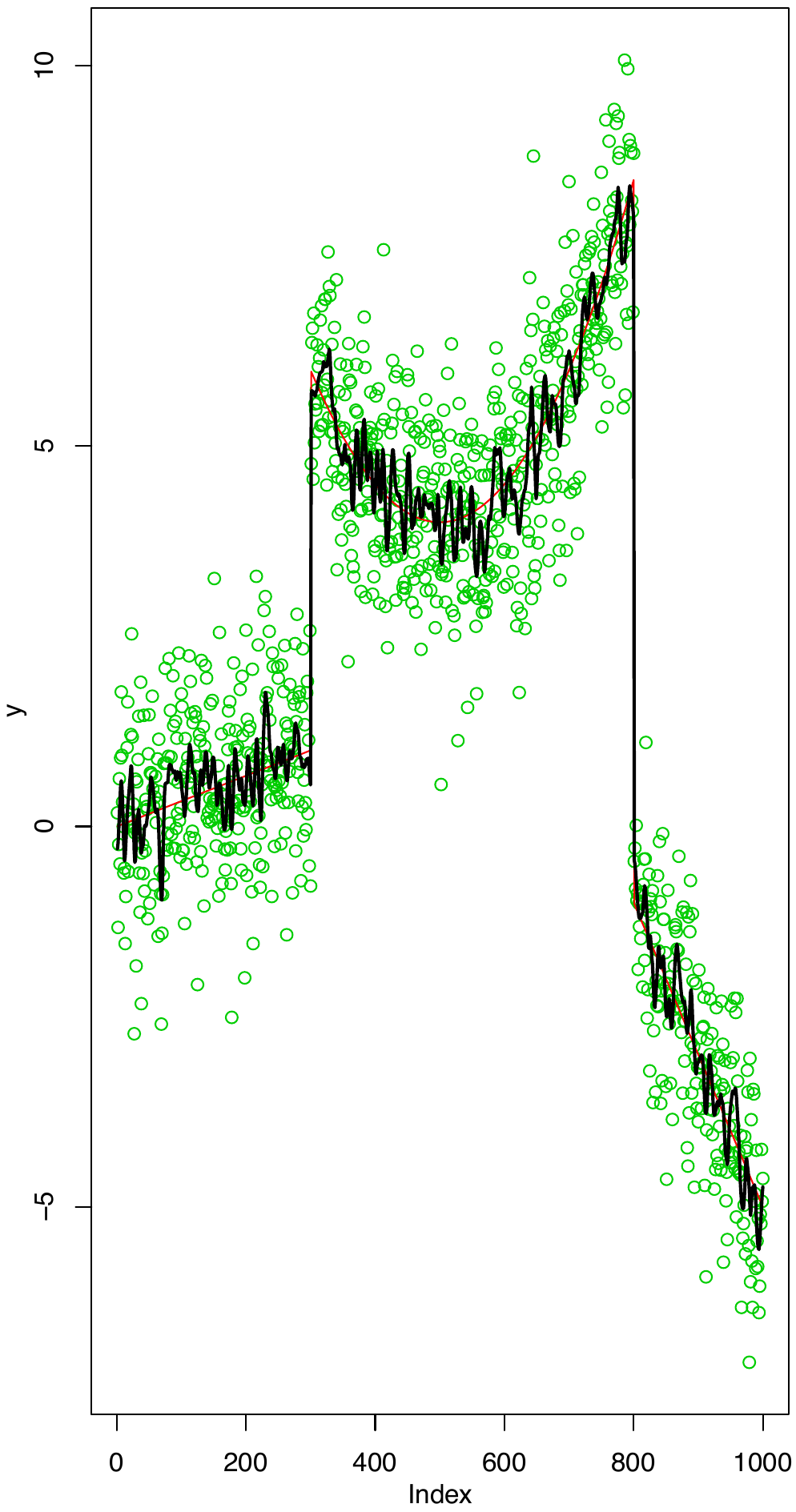} \hspace{1 pt}
\includegraphics[width = 0.32\textwidth]{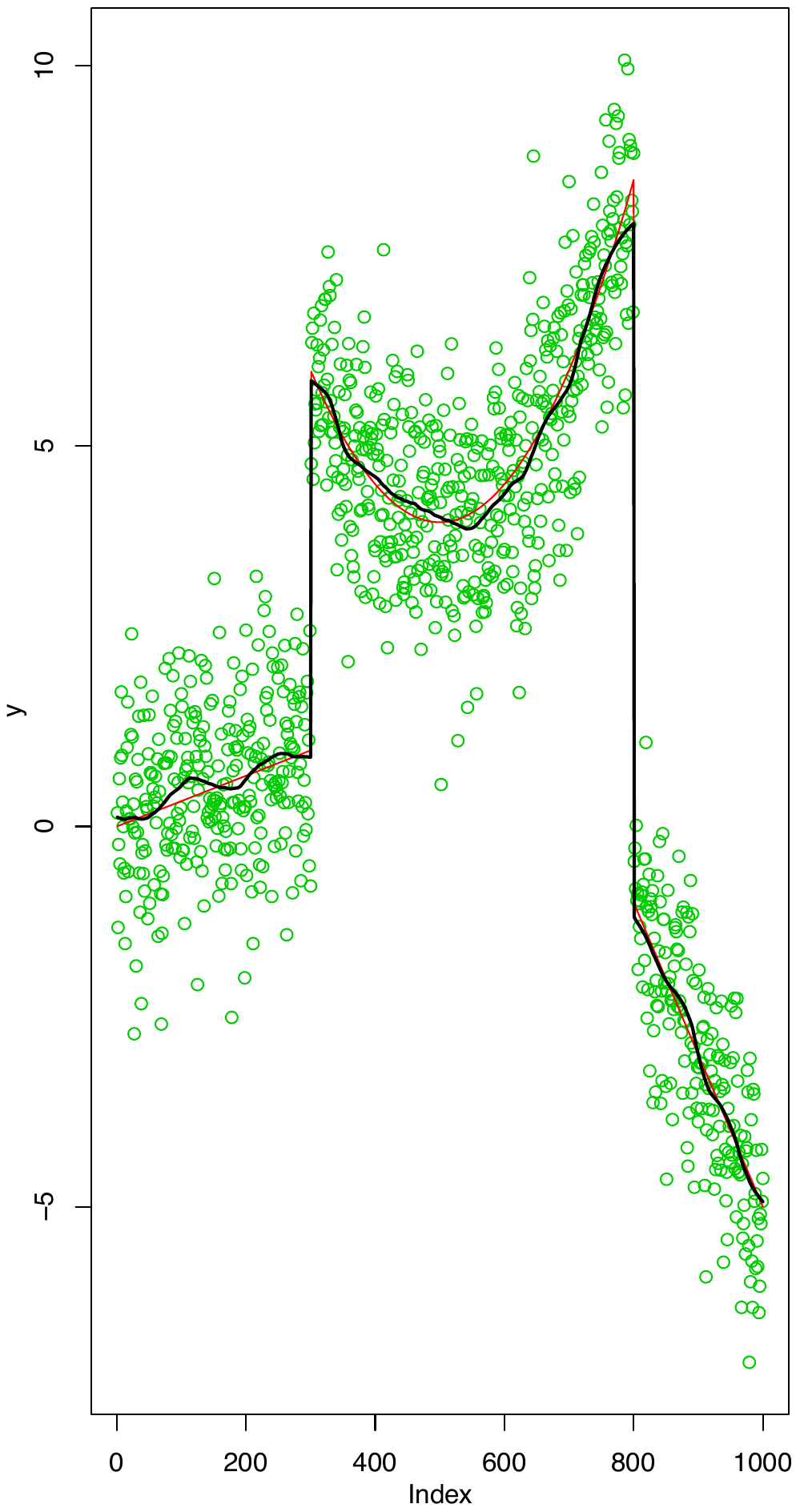} \hspace{1 pt}
\includegraphics[width = 0.32\textwidth]{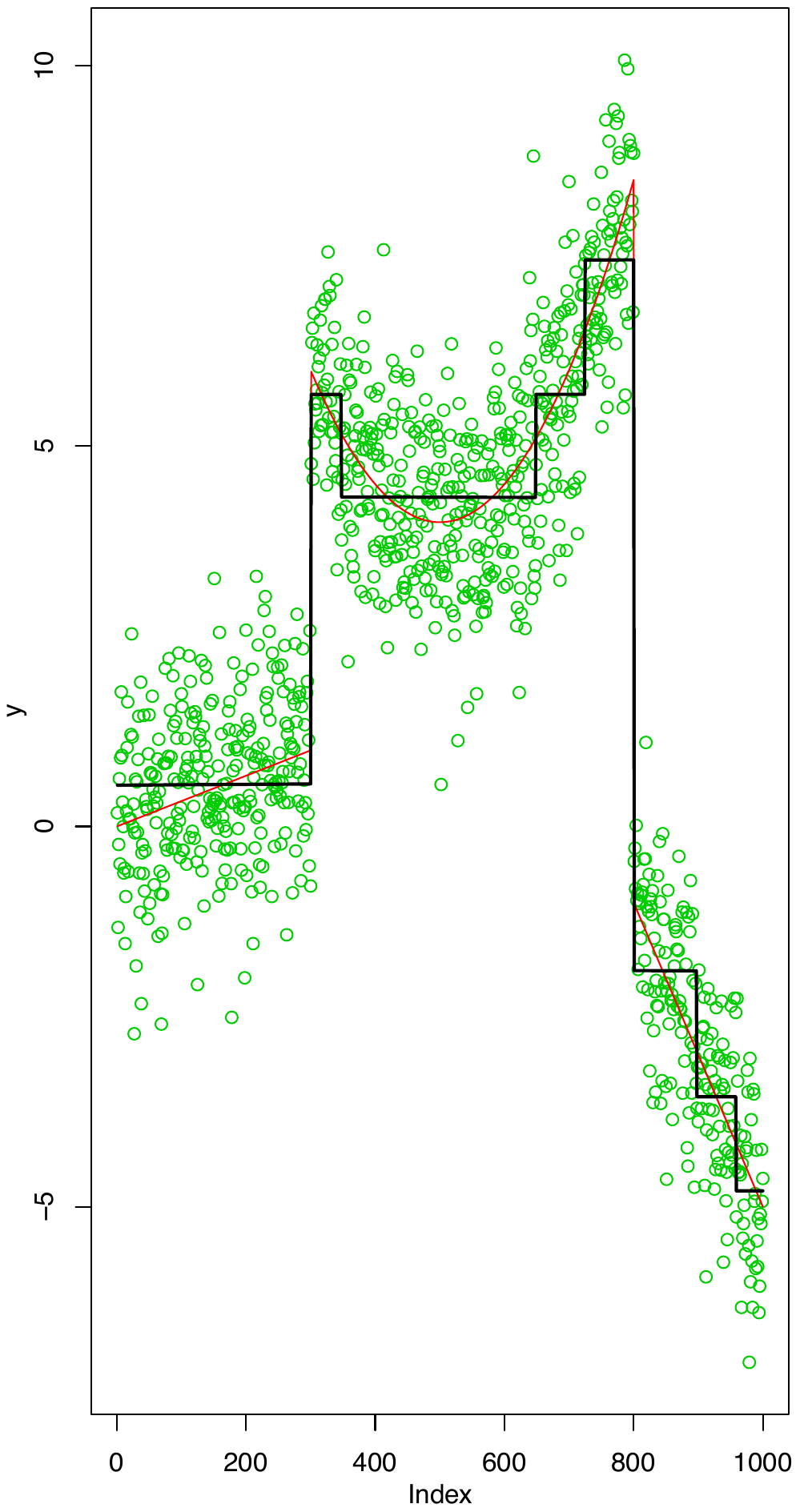}
\caption{{\footnotesize Results of Algorithm~\ref{algorithm} (black line) for the piecewise polynomial parameter function~$\theta_2(.)$ (red line) with adaptation bandwidth $\lambda_2 = 16$ and location bandwidths (f.l.t.r.) $h_2 = 4.42, 41.6, 947$. The green circles correspond to the Gaussian distributed observations. }}
\label{fig:polynF}
\end{figure}

The plots were provided by the function \texttt{aws} setting $\texttt{hmax} :=h^{(k^*)} := 1000$ and \texttt{lkern = "Triangle"}, such that
\begin{equation}\label{eq:kernels}
	K_{\mathrm{loc}}(x) := 1-x^2 
	\quad \text{ and } \quad
	K_{\mathrm{ad}}(x) := \min \lbrace 1,2-x \rbrace_+.
\end{equation}
In Figure~\ref{fig:stepF}, we show the results for the piecewise constant function~$\theta_1(.)$ with $\lambda_1 = 14.6$ and increasing location bandwidths $h_1 = 17.1, 52, 947$ corresponding to the iteration steps $k_1 = 15, 20, 33$. Figure~\ref{fig:polynF} is based on the piecewise smooth function~$\theta_2(.)$ setting $\lambda_2 = 16$ and $h_2 = 4.42, 41.6, 947$, that is $k_2 = 9, 19, 33$. For both examples, it holds $k^* = 33$ representing the final iteration step. The corresponding mean squared error (MSE) is similar to the MSE in step $k_1 =15$ and $k_2 = 9$, respectively. In the steps $k_1 = 20$ an $k_2 = 19$ the MSE is minimal.

We summarize the following heuristic observations.
\begin{itemize}
\item Homogeneous compartments with sufficiently large discontinuities are separated by the algorithm leading to a consistent estimator, see $x \in \lbrace 1,..., 400 \rbrace$ in Figure~\ref{fig:stepF}.
\item If the discontinuities are too small, separation fails. Then, different homogeneous compartments are treated as one yielding a bounded estimation bias. This is illustrated in the right part of Figure~\ref{fig:stepF}, where $x \in \lbrace 401,..., 1000 \rbrace$.
\item In Figure~\ref{fig:polynF}, we consider the case of model misspecification, that is a parameter function~$\theta(.)$ that is not piecewise constant. Here, the algorithm forces the final estimator into a step function. The step size depends mainly on the smoothness of the parameter function~$\theta(.)$ and the adaptation bandwidth~$\lambda$.
However, the estimation bias can be reduced by an accurate stopping criterion. 
The maximal location bandwidth $h^{(k^*)}$ should be chosen such that the non-adaptive estimator in Definition~\ref{def:MLE} behaves good within regions without discontinuities. Then, supposing an appropriate choice of the adaptation bandwidth~$\lambda$, within these regions, Algorithm~\ref{algorithm} would yield similar results as non-adaptive smoothing while smoothing among distinct regions would be avoided as sharp discontinuities could be detected by the adaptive weights. 
\end{itemize}


Thus, the heuristic properties are quite clear. However, the iterative approach complicates a theoretical verification considerably. Therefore, in Section~\ref{sec:theory} we concentrate on piecewise constant functions with sharp discontinuities. Here, our new propagation condition, see Section~\ref{sec:propCond} ensures propagation within homogeneous regions and stability of estimates due to separation of distinct compartments. The case of model misspecification will be analyzed in an upcoming study.

\section{Theoretical properties}\label{sec:theory}

Now, we analyze the behavior of the algorithm in more detail. First, we consider a homogeneous setting, where propagation and stability of estimates follow as direct consequence of the propagation condition. Then, we show the separation property. For locally constant parameter functions with sufficiently sharp discontinuities this restricts smoothing to the respective homogeneous regions yielding again propagation and a certain stability of estimates. We assume that we have identified~$\lambda$ and~$\epsilon$ such that the propagation condition holds.

\subsection[Homogeneity]{Propagation and stability under homogeneity}\label{sec:PropMS}

We show for a homogeneous setting that the propagation condition yields with (PS~\ref{PS 2.1}) in Appendix~\ref{app:reminder} an exponential bound for the excess probability $ \mathbb{P} \left( \overline{N}_i^{(k)} \mathcal{KL}(\tilde{\theta}_i^{(k)}, \theta) > z \right)$ of the Kullback-Leibler divergence between the adaptive estimator~$ \tilde{\theta}_i^{(k)} $ and the true parameter~$ \theta $.

\begin{Prop}[Propagation and stability under homogeneity]\label{prop:propCond}
Suppose $\theta(.) \equiv \theta$, Assumption~(\ref{A1}), and let the
adaptation bandwidth~$\lambda$ be chosen in accordance with the
propagation condition at level~$\epsilon$ for $\theta \in
\Theta$. Then, for each $i \in \lbrace 1,...,n \rbrace$, $k \in \lbrace 0,...,k^* \rbrace$, and all~$z>0$, it holds 
\begin{equation}\label{eq:propCond1}
	\mathbb{P} \left( \overline{N}_i^{(k)} \mathcal{KL} \left( \tilde{\theta}_i^{(k)}, \theta \right) > z \right) 
	\leq \max \left\{ 2 e^{-z}, \epsilon \right\}.
\end{equation}
In particular, we get for all~$k' \geq k$ that
\begin{equation}\label{eq:propCond2}
	\mathbb{P} \left( \overline{N}_i^{(k')} \mathcal{KL} \left( \tilde{\theta}_i^{(k')}, \theta \right) > z \right) \leq \max \left\{ \mathbb{P} \left( \overline{N}_i^{(k)} \mathcal{KL} \left( \tilde{\theta}_i^{(k)}, \theta \right) > z \right), \epsilon \right\}.
\end{equation}
\end{Prop}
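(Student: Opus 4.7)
The plan is to obtain (\ref{eq:propCond2}) directly from the propagation condition, and then derive (\ref{eq:propCond1}) as a special case by combining it with (PS~\ref{PS 2.1}) applied at the initial step. Throughout, it is convenient to abbreviate $P_k(z) := \mathbb{P}\bigl(\overline{N}_i^{(k)} \mathcal{KL}(\tilde\theta_i^{(k)}, \theta) > z\bigr)$. Then $z \mapsto P_k(z)$ is non-increasing and right-continuous, and $\mathfrak{Z}_\lambda(k, p; \theta)$ is nothing but its generalized inverse at level $p$. In this language the propagation condition (Definition~\ref{def:propCond}) says $\mathfrak{Z}_\lambda(k', p; \theta) \leq \mathfrak{Z}_\lambda(k, p; \theta)$ for every $k \leq k'$ and every $p \in (\epsilon, 1)$.

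For (\ref{eq:propCond2}) I would fix $k \leq k'$ and $z > 0$, set $q := P_k(z)$, and consider an arbitrary level $p$ with $p > \max\{q, \epsilon\}$. Since $P_k(z) = q < p$, the definition of the infimum immediately gives $\mathfrak{Z}_\lambda(k, p; \theta) \leq z$. Because $p > \epsilon$, the propagation condition applies and yields $\mathfrak{Z}_\lambda(k', p; \theta) \leq \mathfrak{Z}_\lambda(k, p; \theta) \leq z$. Using that $z \mapsto P_{k'}(z)$ is right-continuous, this quantile bound translates back into $P_{k'}(z) \leq p$. Letting $p$ decrease to $\max\{q, \epsilon\}$ then yields $P_{k'}(z) \leq \max\{q, \epsilon\} = \max\{P_k(z), \epsilon\}$, which is exactly (\ref{eq:propCond2}).

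For (\ref{eq:propCond1}) I would invoke the initialization $\tilde\theta_i^{(0)} = \overline\theta_i^{(0)}$ from Algorithm~\ref{algorithm}, so that (PS~\ref{PS 2.1}) directly delivers $P_0(z) \leq 2 e^{-z}$ under homogeneity. Applying (\ref{eq:propCond2}) with starting index $0$ then gives $P_k(z) \leq \max\{P_0(z), \epsilon\} \leq \max\{2 e^{-z}, \epsilon\}$, as required.

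The only slightly delicate point is the step from a quantile bound $\mathfrak{Z}_\lambda(k', p; \theta) \leq z$ back to the tail bound $P_{k'}(z) \leq p$, which really does need the right-continuity of the survival function $z \mapsto P_{k'}(z)$; without it, one would only get $P_{k'}(z + \delta) \leq p$ for every $\delta > 0$, and would have to take a limit. I would state this explicitly but not dwell on it. Apart from that, all the substantive content is encoded in the propagation condition itself, and the rest is a routine passage between the quantile and the tail formulations.
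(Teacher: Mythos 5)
Your proposal is correct and follows the same route as the paper: the paper likewise derives~(\ref{eq:propCond2}) directly from the monotonicity of $\mathfrak{Z}_{\lambda}(\cdot,p;\theta)$ required by the propagation condition, and then obtains~(\ref{eq:propCond1}) by applying~(\ref{eq:propCond2}) with starting index $k=0$, using $\tilde{\theta}_i^{(0)}=\overline{\theta}_i^{(0)}$ and (PS~\ref{PS 2.1}). The only difference is that you spell out the passage between the quantile bound and the tail bound (via right-continuity of the survival function), which the paper treats as immediate; this is a welcome clarification, not a deviation.
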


\begin{proof}
Equation~(\ref{eq:propCond2}) follows from the propagation condition,
which ensures that the function~$\mathfrak{Z}_{\lambda}(., p; \theta)$
is non-increasing for all~$p \in (\epsilon, 1)$.
Since, see  Algorithm~\ref{algorithm}, we have~$\tilde{\theta}_i^{(0)} = \overline{\theta}_i^{(0)} $ this yields
\begin{eqnarray*}
	\mathbb{P} \left( \overline{N}_i^{(k)} \mathcal{KL} \left( \tilde{\theta}_i^{(k)}, \theta \right) > z \right) 
	&\overset{\text{Eq.(\ref{eq:propCond2})}}{\leq}&
	\max \left\{ \mathbb{P} \left( \overline{N}_i^{(0)} \mathcal{KL} \left( \overline{\theta}_i^{(0)}, \theta \right) > z \right), \epsilon \right\}\\
	&\overset{\text{(PS~\ref{PS 2.1})}}{\leq}& \max \left\{ 2 e^{-z}, \epsilon \right\},
\end{eqnarray*}
leading to the assertion.
\end{proof}

\subsection{Separation property}\label{sec:separation}

For considerably different parameter values the corresponding adaptive weights become zero, see Proposition~\ref{prop:separationMS} below.
To show this, we need (PS~\ref{PS 5.2}) in Appendix~\ref{app:reminder}. 
This requires an appropriate choice of the constant~$\varkappa > 0$, introduced in Lemma~\ref{lem:A1}. The iteration step $k \in \lbrace 0,...,k^* \rbrace$ will be specified in each case where the assumption is used.

\begin{Ass}[Choice of~$\varkappa$]\label{AEst}
Let~$\varkappa > 0$ be sufficiently large such that the true parameter and its estimator satisfy $\theta_i, \tilde{\theta}_i^{(k)} \in \Theta_{\varkappa}$ for all $i \in \lbrace 1,...,n \rbrace$.
\end{Ass}

\begin{Rem}\label{rem:AEst}
Suppose that~$\varkappa$ satisfies $\theta_i \in \Theta_{\varkappa}$ for all $i \in \lbrace 1,...,n \rbrace$.
Then it holds with high probability, for sufficiently large iteration steps~$k$, that $\tilde{\theta}_i^{(k)} \in \Theta_{\varkappa}$, too. However, in Theorem~\ref{thm:locPropNeuMS} we require Assumption~(\ref{AEst}) for all iteration steps. 
In order to ensure this, we could increase~$\varkappa$ leading to a larger set~$\Theta_{\varkappa}$, but this would weaken our theoretical results.
Instead, we recommend a slight modification of the algorithm. 
We replace Equation~(\ref{eq:adEst}) by
\[
	\tilde{\theta}_i^{(k)} := 
	\underset{\theta' \in \Theta_{\varkappa}}{\mathrm{argmin}} \, \left| \theta' - \sum_{j=1}^n \tilde{w}_{ij}^{(k)} Y_j / \tilde{N}_i^{(k)} \right|,
\]
projecting the adaptive estimator into the set~$\Theta_{\varkappa}$. 
This approach corresponds to Bayesian estimation with a priori knowledge $\theta_i \in \Theta_{\varkappa}$ for all $i \in \lbrace 1,...,n \rbrace$. 
Analogously, we redefine the initial estimates via projection of the non-adaptive estimator into~$\Theta_{\varkappa}$
\[
	\tilde{\theta}_i^{(0)} := \underset{\theta' \in \Theta_{\varkappa}}{\mathrm{argmin}} \, \left| \theta' - \overline{\theta}_i^{(0)} \right|.
\]
Additionally, it might be advantageous to decrease the probability of $\overline{\theta}_i^{(0)} \notin \Theta_{\varkappa}$ by choosing the initial bandwidth~$h^{(0)}$ such that the neighborhood~$U_i^{(0)}$ contains more design points than~$X_i$ for each $i \in \lbrace 1,...,n \rbrace$. Else, the projection may change the adaptive weights in later iteration steps leading to slightly shifted estimators. On the other hand, initialization with $U_i^{(0)} = \lbrace X_i \rbrace$ avoids smoothing among distinct homogeneous regions before adaptation starts. 
\end{Rem}

The following proposition is similar to the first part of~\cite[Theorem 5.9]{PoSp05}.
It implies that different homogeneous compartments with sufficiently large discontinuities 
will be separated by the algorithm. 
In particular, we will see, that the lower bound for the discontinuities allowing exact separation of the distinct compartments depends mainly on the adaptation bandwidth~$\lambda$ and the achieved quality of estimation in the previous iteration step.

\begin{Prop}[Separation property]\label{prop:separationMS}
Suppose Assumptions~(\ref{A1}) and, at iteration step~$k$, Assumption~(\ref{AEst}). 
We consider two points~$X_{i_1}$ and~$X_{i_2}$ providing in iteration step~$k$ the estimation accuracy $\mathcal{KL}(\tilde{\theta}_{i_m}^{(k)}, \theta_{i_m}) \leq z_m^{(k)} := z / \overline{N}_{i_m}^{(k)} $ with some constant~$z > 0$, $m=1,2$. 
If 
\begin{equation}\label{eq:varphi1}
	\mathcal{KL}^{1/2} \left( \theta_{i_1}, \theta_{i_2} \right) > \varkappa \left( \sqrt{\lambda /\tilde{N}_{i_1}^{(k)}} + \sqrt{z_1^{(k)}} + \sqrt{z_2^{(k)}} \right)
\end{equation}
then it holds~$\tilde{w}_{i_1 i_2}^{(k+1)} = 0$.
\end{Prop}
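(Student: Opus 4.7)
The plan is to reduce the claim to a statement about the argument $s_{i_1 i_2}^{(k+1)}/\lambda$ of the adaptation kernel. Since $K_{\mathrm{ad}}$ has compact support~$[0,1]$, the weight $\tilde{w}_{i_1 i_2}^{(k+1)}$ vanishes as soon as $s_{i_1 i_2}^{(k+1)} > \lambda$. Unwinding the definition $s_{i_1 i_2}^{(k+1)} = \tilde{N}_{i_1}^{(k)} \mathcal{KL}(\tilde{\theta}_{i_1}^{(k)}, \tilde{\theta}_{i_2}^{(k)})$, it therefore suffices to show
\[
	\mathcal{KL}^{1/2} \bigl( \tilde{\theta}_{i_1}^{(k)}, \tilde{\theta}_{i_2}^{(k)} \bigr) > \sqrt{ \lambda / \tilde{N}_{i_1}^{(k)} }.
\]

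First, I would invoke (PS~\ref{PS 5.2}) from the appendix, which under Assumption~(\ref{AEst}) provides a quasi-triangle inequality for $\mathcal{KL}^{1/2}$ on the compact set $\Theta_{\varkappa}$, of the form
\[
	\mathcal{KL}^{1/2}( \theta_{i_1}, \theta_{i_2} )
	\leq \varkappa \bigl(
		\mathcal{KL}^{1/2}(\theta_{i_1}, \tilde{\theta}_{i_1}^{(k)} )
		+ \mathcal{KL}^{1/2}( \tilde{\theta}_{i_1}^{(k)}, \tilde{\theta}_{i_2}^{(k)} )
		+ \mathcal{KL}^{1/2}( \tilde{\theta}_{i_2}^{(k)}, \theta_{i_2} )
		\bigr).
\]
The multiplicative constant~$\varkappa$ is exactly the one whose existence is ensured by the Fisher-information bound~(\ref{eq:varkappa}); it absorbs both the asymmetry of the Kullback-Leibler divergence and the variation of~$I(\theta)$ across~$\Theta_{\varkappa}$, so that $\mathcal{KL}^{1/2}$ behaves like a metric up to a factor of~$\varkappa$.

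Second, I would solve this inequality for $\mathcal{KL}^{1/2}(\tilde{\theta}_{i_1}^{(k)}, \tilde{\theta}_{i_2}^{(k)})$ and insert the estimation-accuracy assumption $\mathcal{KL}^{1/2}(\tilde{\theta}_{i_m}^{(k)}, \theta_{i_m}) \leq \sqrt{z_m^{(k)}}$ for $m=1,2$, obtaining
\[
	\mathcal{KL}^{1/2} \bigl( \tilde{\theta}_{i_1}^{(k)}, \tilde{\theta}_{i_2}^{(k)} \bigr)
	\;\geq\; \varkappa^{-1} \mathcal{KL}^{1/2}( \theta_{i_1}, \theta_{i_2} ) - \sqrt{z_1^{(k)}} - \sqrt{z_2^{(k)}}.
\]
Finally, hypothesis~(\ref{eq:varphi1}) states exactly that $\varkappa^{-1}\mathcal{KL}^{1/2}(\theta_{i_1},\theta_{i_2})$ exceeds the sum $\sqrt{\lambda/\tilde{N}_{i_1}^{(k)}} + \sqrt{z_1^{(k)}} + \sqrt{z_2^{(k)}}$, so the right-hand side is strictly greater than $\sqrt{\lambda/\tilde{N}_{i_1}^{(k)}}$, yielding the desired strict inequality and hence $\tilde{w}_{i_1 i_2}^{(k+1)} = 0$.

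The main obstacle is matching the form of the quasi-triangle inequality provided by (PS~\ref{PS 5.2}) to the present three-point setting, since the Kullback--Leibler divergence is not symmetric and the constant~$\varkappa$ must be handled consistently; once (PS~\ref{PS 5.2}) is invoked with the correct directions of the arguments on $\Theta_{\varkappa}$, the remaining algebra is a one-line rearrangement.
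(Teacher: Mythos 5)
Your proposal is correct and follows essentially the same route as the paper: reduce the claim to $s_{i_1 i_2}^{(k+1)} > \lambda$ via the compact support of $K_{\mathrm{ad}}$, apply (PS~\ref{PS 5.2}) along the chain $\theta_{i_1}, \tilde{\theta}_{i_1}^{(k)}, \tilde{\theta}_{i_2}^{(k)}, \theta_{i_2}$ under Assumption~(\ref{AEst}), and rearrange together with the accuracy bounds and hypothesis~(\ref{eq:varphi1}). The asymmetry issue you flag (the order of arguments in $\mathcal{KL}^{1/2}(\theta_{i_1},\tilde{\theta}_{i_1}^{(k)})$ versus the assumed bound on $\mathcal{KL}(\tilde{\theta}_{i_1}^{(k)},\theta_{i_1})$) is glossed over in the paper's proof sketch in exactly the same way, so your treatment is no less complete than the original.
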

 
\begin{proof}[Proof sketch]
Due to the compact support of the adaptation kernel~$K_{\mathrm{ad}}$, it suffices to show that the statistical penalty introduced in Algorithm~\ref{algorithm} satisfies $ s_{i_1 i_2}^{(k+1)} > \lambda $.
(PS~\ref{PS 5.2}) in Appendix~\ref{app:reminder} yields for $ \mathcal{KL}(\tilde{\theta}_{i_m}^{(k)}, \theta_{i_m}) \leq z_m^{(k)}$ with~$ m = 1,2 $ that
\begin{eqnarray*}
	\mathcal{KL}^{1/2} \left( \tilde{\theta}_{i_1}^{(k)}, \tilde{\theta}_{i_2}^{(k)} \right)
	\overset{\text{(\ref{AEst})}}{\geq} \varkappa^{-1} \mathcal{KL}^{1/2} \left( \theta_{i_1}, \theta_{i_2} \right) 
	- \sqrt{z_1^{(k)}} - \sqrt{z_2^{(k)}}
\end{eqnarray*}
such that
\begin{eqnarray*}
	s_{i_1 i_2}^{(k+1)} 
	\geq \tilde{N}_{i_1}^{(k)} \left[ \varkappa^{-1} \sqrt{ \mathcal{KL} \left( \theta_{i_1}, \theta_{i_2} \right) }
	- \sqrt{z_1^{(k)}} - \sqrt{z_2^{(k)} } \right]^2	> \lambda,
\end{eqnarray*}
by Equation~(\ref{eq:varphi1}).
\end{proof}

\begin{Rem}
The lower bound~(\ref{eq:varphi1}) holds if
\[
	\mathcal{KL}^{1/2} \left( \theta_{i_1}, \theta_{i_2} \right) > 3 \varkappa \cdot \frac{ \max \left\{ \sqrt{\lambda }, \sqrt{z} \right\} }{ \min \left\{ \sqrt{ \tilde{N}_{i_1}^{(k)} }, \sqrt{ \overline{N}_{i_1}^{(k)} }, \sqrt{\overline{N}_{i_2}^{(k)} } \right\} }.
\]
This emphasizes the impact of the involved sample sizes.
\end{Rem}

\subsection[Local homogeneity]{Propagation and stability under local homogeneity} \label{sec:specialCase}

Next, we consider a locally homogeneous setting with sharp discontinuities. 
In this case, smoothing is restricted to the homogeneous compartments leading to similar results as under homogeneity, that is to propagation and to stability of estimates.

\begin{Ass}[Structural assumption]\label{AS}
There is a non-trivial partition $\mathcal{V}:= \lbrace \mathcal{V}_i \rbrace_i$ of~$\mathcal{X}$ into maximal homogeneity compartments, i.e. for each $X_{i} \in \mathcal{X}$ there are a vicinity $\mathcal{V}_{i} \subseteq \mathcal{X}$ and a constant~$\varphi_{i} > 0$ such that
\[
\begin{cases}
	\mathcal{KL} \left( \theta_{i}, \theta_j \right) = 0 &\text{ for all } X_j \in \mathcal{V}_{i}\\ 
	\mathcal{KL} \left( \theta_{i}, \theta_j \right) > \varphi_{i}^2 &\text{ for all } X_j \notin \mathcal{V}_{i}.
\end{cases}
\]
\end{Ass}

We deduce the propagation property for the present case. Here, we should take into account that the considered neighborhood~$U_i^{(k)}$ might be much larger than the respective homogeneity compartment~$\mathcal{V}_i$. Obviously, the divergence $\mathcal{KL} ( \tilde{\theta}_i^{(k)}, \theta_i ) $ cannot converge with rate~$\overline{N}_i^{(k)}$ in this case. 
Therefore, we introduce the notion of the effective sample size~$\overline{n}_i^{(k)}$.

\begin{Not}\label{not:effectiveSample}
We define for each $i \in \lbrace 1,...,n \rbrace$ and $ k \in \lbrace 0,..., k^* \rbrace $ the effective sample size and its local minimum
\begin{equation}\label{eq:effectiveSample}
	\overline{n}_i^{(k)} := 
	\sum_{X_{j} \in \mathcal{V}_{i} \cap U_{i}^{(k)}} \overline{w}_{ij}^{(k)}
	\qquad \text{ and } \qquad
	n_i^{(k)} := \underset{X_{j} \in U_i^{(k)} }{\min} \overline{n}_j^{(k)}.
\end{equation}
\end{Not}

As it turns out, the quantities $n_i^{(k)}$ determine the minimal
stepsizes $\varphi_{i}$ such that a discontinuity will be detected.
During the first iteration steps it holds~$\overline{n}_i^{(k)} = \overline{N}_i^{(k)}$. The quotient $ \overline{n}_i^{(k)} / \overline{N}_i^{(k)} $ decreases when~$U_i^{(k)}$ becomes larger than~$\mathcal{V}_i$.

In the following theorem, we consider the event
\[
	\mathcal{B}^{(k)} (z) := \left\{ \overline{n}_i^{(k)} \mathcal{KL} ( \tilde{\theta}_i^{(k)}, \theta_i ) \leq z \quad \text{for all } i  \right\}, \qquad z > 0.
\]

\begin{Thm}[Propagation property under local homogeneity]\label{thm:locPropNeuMS}
Suppose Assumptions~(\ref{A1}) and (\ref{AS}), and, for all iteration steps~$k < k'$ with $k' \in \lbrace 0,..., k^* \rbrace$ fixed, Assumption~(\ref{AEst}). 
Let the bandwidth~$\lambda$ be chosen in accordance with the propagation condition at level~$\epsilon$ for all~$\theta_i$, $i \in \lbrace 1,...,n \rbrace$.
If for all $i \in \lbrace 1,...,n \rbrace$ and every~$k < k'$ the constants~$\varphi_{i} > 0$ in Assumption~(\ref{AS}) satisfy
\begin{equation}\label{eq:varphi3}
	\varphi_{i} > \varkappa \left[ \sqrt{\lambda/ \tilde{N}_{i}^{(k)} } + 2 \sqrt{ z/ n_i^{(k)} } \right],
\end{equation}
then
\begin{equation}\label{eq:locPropNeuMS2}
	\mathbb{P} \left( \mathcal{B}^{(k')}(z) \right) \geq 1 - (k'+1) \, \max \left\{ 2 n e^{-z }, n \epsilon \right\}. 
\end{equation}
\end{Thm}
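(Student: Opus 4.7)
The plan is to prove (\ref{eq:locPropNeuMS2}) by induction on the iteration step $k = 0, 1, \ldots, k'$, with inductive hypothesis $\mathbb{P}(\mathcal{B}^{(k)}(z)^c) \leq (k+1) \max\{2 n e^{-z}, n \epsilon\}$. For the base case $k=0$, assuming the initial bandwidth $h^{(0)}$ is chosen small enough that $U_i^{(0)} \subseteq \mathcal{V}_i$ (so that $\overline n_i^{(0)} = \overline N_i^{(0)}$ and all observations entering $\overline\theta_i^{(0)}$ are drawn from $\mathbb{P}_{\theta_i}$), the identity $\tilde\theta_i^{(0)} = \overline\theta_i^{(0)}$ allows a direct application of (PS~\ref{PS 2.1}) followed by a union bound over $i$.

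For the inductive step I would first exploit the separation property on $\mathcal{B}^{(k)}(z)$: if $X_{i_2} \in U_{i_1}^{(k+1)} \setminus \mathcal{V}_{i_1}$, then Assumption~(\ref{AS}) yields $\mathcal{KL}^{1/2}(\theta_{i_1}, \theta_{i_2}) > \varphi_{i_1}$, and combining the definition of $n_{i_1}^{(k)}$ (which satisfies $n_{i_1}^{(k)} \leq \min\{\overline n_{i_1}^{(k)}, \overline n_{i_2}^{(k)}\}$) with (\ref{eq:varphi3}) verifies the hypothesis (\ref{eq:varphi1}) of Proposition~\ref{prop:separationMS} with $z_m^{(k)} := z / \overline n_{i_m}^{(k)}$. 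Hence $\tilde w_{i_1 i_2}^{(k+1)} = 0$, meaning that on $\mathcal{B}^{(k)}(z)$ all adaptive weights at step $k+1$ remain inside a single compartment of the partition $\mathcal{V}$.

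Consequently, on $\mathcal{B}^{(k)}(z)$ the update (\ref{eq:adEst}) for $\tilde\theta_i^{(k+1)}$ involves only observations drawn from $\mathbb{P}_{\theta_i}$. The idea is then to couple the process with the auxiliary algorithm run on a fully homogeneous dataset with common parameter $\theta_i$, obtained by replacing every $Y_j$ for $X_j \notin \mathcal{V}_i$ by an independent draw from $\mathbb{P}_{\theta_i}$. Because separation has already discarded any contribution from outside $\mathcal{V}_i$, the two processes coincide on $\mathcal{V}_i$ through step $k+1$, so Proposition~\ref{prop:propCond} applied in the auxiliary homogeneous model at parameter $\theta_i$ produces $\mathbb{P}(\overline n_i^{(k+1)} \mathcal{KL}(\tilde\theta_i^{(k+1)}, \theta_i) > z) \leq \max\{2 e^{-z}, \epsilon\}$. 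A union bound over $i$ and combination with the inductive hypothesis yields the desired bound at step $k+1$.

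The main technical difficulty will lie in this coupling: the propagation condition of Definition~\ref{def:propCond} is formulated with the sample size $\overline N_i^{(k)}$ from a fully homogeneous model, whereas the conclusion requires the effective sample size $\overline n_i^{(k+1)}$. Showing that, conditional on $\mathcal{B}^{(k)}(z)$, the adaptive process restricted to $\mathcal{V}_i$ agrees pathwise with the homogeneous surrogate run on the same data — so that the propagation bound transfers with the correct normalization — is the step that requires the most care.
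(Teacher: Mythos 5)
Your proposal is essentially the paper's own proof: induction over the iteration steps, the base case via $\tilde{\theta}_i^{(0)}=\overline{\theta}_i^{(0)}$, (PS~\ref{PS 2.1}) and a union bound, and the inductive step combining the separation property (Proposition~\ref{prop:separationMS}, enabled by~(\ref{eq:varphi3}) on the previous event $\mathcal{B}^{(k)}(z)$) with the propagation bound of Proposition~\ref{prop:propCond} and a union bound over $i$, exactly as in Equations~(\ref{eq:Bkmu}) and~(\ref{eq:locPropNeuMS3}). The difficulty you flag at the end --- transferring the propagation bound, formulated for the fully homogeneous model with sample size $\overline{N}_i^{(k)}$, to the compartment-restricted estimator normalized by the effective sample size $\overline{n}_i^{(k)}$ --- is precisely the step the paper itself passes over by directly invoking Proposition~\ref{prop:propCond} once smoothing is restricted to $\mathcal{V}_i$ (note only that your homogeneous surrogate does not literally coincide pathwise with the true process, since its adaptive weights outside $\mathcal{V}_i$ need not vanish; the correct comparison is with the algorithm run on the sub-design $\mathcal{V}_i$), so your route matches the paper's.
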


\begin{proof}
Let~$M^c$ denote the complement of the set~$M$.
Then it holds
\begin{align}
	\mathbb{P} \left( \mathcal{B}^{(k)}(z) \right)
	&=1 - \mathbb{P} \left( (\mathcal{B}^{(k)}(z) )^c \cap \mathcal{B}^{(k-1)}(z) \right) 
	- \mathbb{P} \left( (\mathcal{B}^{(k)}(z) )^c \cap (\mathcal{B}^{(k-1)}(z))^c \right) \nonumber\\
	&\geq  1 - n \cdot \mathbb{P} \left( \lbrace \overline{n}_{i}^{(k)} \mathcal{KL} \left( \tilde{\theta}_{i}^{(k)}, \theta_{i} \right) > z \rbrace \cap \mathcal{B}^{(k-1)}(z) \right) 
	- \mathbb{P} \left( (\mathcal{B}^{(k-1)}(z))^c \right). \label{eq:Bkmu}
\end{align}
Due to~(\ref{eq:varphi3}) the conditions of Proposition~\ref{prop:separationMS} are satisfied on~$\mathcal{B}^{(k-1)}(z)$. Therefore, it follows  on~$\mathcal{B}^{(k-1)}(z)$ that~$\tilde{w}_{ij}^{(k)} = 0$ for all $ X_{j} \notin U_{i}^{(k)} \cap \mathcal{V}_{i}$. Hence, smoothing is restricted to the homogeneous compartment~$\mathcal{V}_{i}$ and $ \mathbb{E} \tilde{\theta}_{i}^{(k)} = \theta_{i} $. 
We get with Proposition~\ref{prop:propCond}
\begin{eqnarray}
	\mathbb{P} \left( \lbrace \overline{n}_{i}^{(k)} \mathcal{KL} \left( \tilde{\theta}_{i}^{(k)}, \theta_{i} \right) > z \rbrace \cap \mathcal{B}^{(k-1)}(z) \right) 
	\leq 	\max \left\{ 2 e^{- z }, \epsilon \right\} \label{eq:locPropNeuMS3}
\end{eqnarray}
 for all~$k \in \lbrace 1,..., k' \rbrace $. Now, we proceed by induction. 
Since $\tilde{\theta}_{i}^{(0)} = \overline{\theta}_{i}^{(0)}$ by Algorithm~\ref{algorithm} it follows from (PS~\ref{PS 2.1}) in Appendix~\ref{app:reminder} that
\begin{eqnarray*}
	\mathbb{P} \left( \mathcal{B}^{(0)} (z)\right)
	\overset{\overline{n}_{i}^{(0)} \leq \overline{N}_{i}^{(0)}}{\geq}  1 - n \cdot \mathbb{P} \left( \left\{ \overline{N}_{i}^{(0)} \mathcal{KL} ( \overline{\theta}_{i}^{(0)}, \theta_{i} ) > z \right\} \right)
	\overset{\text{(PS~\ref{PS 2.1})}}{\geq} 1 - 2 \, n e^{- z}.
\end{eqnarray*}
Finally, Equations~(\ref{eq:Bkmu}) and~(\ref{eq:locPropNeuMS3}) lead for all~$k \leq k'$ to
\begin{eqnarray*}
	\mathbb{P} \left( \mathcal{B}^{(k)}(z) \right)
	&\geq&  1 - n \max \left\{ 2 e^{- z}, \epsilon \right\}  - k \max \left\{ 2 n e^{- z }, n \epsilon \right\}\\
	&=& 1 - (k+1) \max \left\{ 2 n e^{- z }, n \epsilon \right\}.
\end{eqnarray*}
This terminates the proof.
\end{proof}

\begin{Rem}\hspace{0 pt}
\begin{itemize}
\item In Equation~(\ref{eq:locPropNeuMS2}), we observe an additional
  factor~$(k+1)$, which appeared in the propagation property of
  \citet{PoSp05} as well, see Equation~(\ref{eq:propPS06}) in Section~\ref{sec:comparison}, below. This
  factor results from the proof only and might be avoidable. In
  particular, we notice that the given bound is not sharp as we did
  not take advantage of the intersections of the sets $ \left(
    \mathcal{B}^{(k)}(z) \right)^c $ in Equation~(\ref{eq:Bkmu}). The
  above theorem provides a meaningful result for $z \geq q \log(n)$
  and $\epsilon := c_{\epsilon} n^{-q}$ with~$c_{\epsilon} > 0$ and~$q
  > 1$.
\item Separation depends via the statistical penalty on the estimation quality of all data within the local neighborhood~$U_i^{(k)}$. Therefore, the extension of the smallest homogeneous compartment, denoted by~$n_i^{(k)}$, determines the lower bound~(\ref{eq:varphi3}) for the discontinuities that provide an exact separation of the distinct homogeneous compartments. This bound is closely related to Equation~(\ref{eq:varphi1}) that involves only two points  such that
the term $2/ \sqrt{ n_i^{(k)} }$ from Equation~(\ref{eq:varphi3}) can be replaced by
\[
	\left( 1 / \sqrt{\overline{N}_{i_1}^{(k)}} + 1 / \sqrt{\overline{N}_{i_2}^{(k)}} \right)
\]
having the same effect.
\end{itemize}
\end{Rem}

%

Finally, we deduce a similar result as in Equation~(\ref{eq:propCond2}) under local homogeneity. 
Thus, we infer from the estimation quality in iteration step $k_1$ on the estimation quality in step $k_2 > k_1$.
To this end, we apply again the separation property, see Proposition~\ref{prop:separationMS}.
This requires sure knowledge on the previously achieved estimation quality. 
Therefore, we consider the conditional probability and verify an exponential bound.

\begin{Prop}[Stability of estimates under local homogeneity] \label{prop:locStability}
In the situation of Theorem~\ref{thm:locPropNeuMS}, it holds for all $k_1, k_2 \in \lbrace 0,..., k^* \rbrace$ with $k_1 < k_2 \leq k'$ such that $ (k_2 + 1) \, \max \left\{ 2 n e^{-z }, n \epsilon \right\} < 1 $ that
\begin{equation}\label{eq:locStability}
	\mathbb{P} \left( \mathcal{B}^{(k_2)} (z) \vert \mathcal{B}^{(k_1)} (z) \right) 
	\geq \frac{1 - (k_2 + 1) \, \max \left\{ 2 n e^{-z }, n \epsilon \right\}}{1 - (k_1 + 1) \, \max \left\{ 2 n e^{-z }, n \epsilon \right\}}
\end{equation}
\end{Prop}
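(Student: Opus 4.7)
The plan is to revisit the induction from the proof of Theorem~\ref{thm:locPropNeuMS}, rebased to start at iteration step~$k_1$ instead of at step~$0$. Write $p := \max\{2ne^{-z}, n\epsilon\}$ for brevity, so that Theorem~\ref{thm:locPropNeuMS} reads $\mathbb{P}(\mathcal{B}^{(k)}(z)) \geq 1 - (k+1)p$, and note that $(k_2+1)p < 1$ is precisely the hypothesis that makes this bound nontrivial at $k = k_2$.

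The essential ingredient, already contained in the proof of Theorem~\ref{thm:locPropNeuMS}, is that on $\mathcal{B}^{(k-1)}(z)$ the separation criterion of Proposition~\ref{prop:separationMS} is satisfied, so the adaptive smoothing at step~$k$ is confined to the homogeneity compartment~$\mathcal{V}_i$. Proposition~\ref{prop:propCond} then yields the per-index bound in Equation~(\ref{eq:locPropNeuMS3}), and a union bound over the $n$ indices produces
\[
\mathbb{P}\bigl((\mathcal{B}^{(k)}(z))^c \cap \mathcal{B}^{(k-1)}(z)\bigr) \leq p, \qquad k_1 < k \leq k_2.
\]
I would then prove by induction on $k \in \{k_1, \ldots, k_2\}$ that
\[
\mathbb{P}\bigl((\mathcal{B}^{(k)}(z))^c \cap \mathcal{B}^{(k_1)}(z)\bigr) \leq (k - k_1)\,p.
\]
The base case $k = k_1$ is trivial. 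For the inductive step the set-theoretic inclusion
\[
(\mathcal{B}^{(k)}(z))^c \cap \mathcal{B}^{(k_1)}(z) \subseteq \bigl[(\mathcal{B}^{(k)}(z))^c \cap \mathcal{B}^{(k-1)}(z)\bigr] \cup \bigl[(\mathcal{B}^{(k-1)}(z))^c \cap \mathcal{B}^{(k_1)}(z)\bigr]
\]
combined with the preceding display and the inductive hypothesis gives $p + (k-1-k_1)p = (k-k_1)p$.

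Evaluating at $k = k_2$ yields $\mathbb{P}((\mathcal{B}^{(k_2)}(z))^c \cap \mathcal{B}^{(k_1)}(z)) \leq (k_2 - k_1)\,p$. Theorem~\ref{thm:locPropNeuMS} supplies $\mathbb{P}(\mathcal{B}^{(k_1)}(z)) \geq 1 - (k_1 + 1)p$, which is strictly positive by the standing assumption, so the conditional probability is well defined and
\[
\mathbb{P}\bigl(\mathcal{B}^{(k_2)}(z) \mid \mathcal{B}^{(k_1)}(z)\bigr) = 1 - \frac{\mathbb{P}((\mathcal{B}^{(k_2)}(z))^c \cap \mathcal{B}^{(k_1)}(z))}{\mathbb{P}(\mathcal{B}^{(k_1)}(z))} \geq 1 - \frac{(k_2 - k_1)\,p}{1 - (k_1+1)p}.
\]
Rewriting $(k_2 - k_1)p = (k_2+1)p - (k_1+1)p$ collapses the right-hand side to $[1 - (k_2+1)p]/[1 - (k_1+1)p]$, which is the asserted bound.

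No substantive new obstacle arises: the entire analytic content (separation, followed by per-index control via the propagation condition) was already resolved in Theorem~\ref{thm:locPropNeuMS}. The only point to notice is that the per-step failure probability $p$ is generated \emph{solely} by the previous event $\mathcal{B}^{(k-1)}(z)$ holding, which is what allows the induction to be restarted at $\mathcal{B}^{(k_1)}(z)$ and to accumulate only $(k_2 - k_1)$ excess steps rather than the full $(k_2+1)$; this refinement is precisely what sharpens the estimate beyond a naive ratio bound of the form $\mathbb{P}(\mathcal{B}^{(k_2)}(z))/\mathbb{P}(\mathcal{B}^{(k_1)}(z))$.
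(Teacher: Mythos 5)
Your proposal is correct and follows essentially the same route as the paper: your induction on $k$ with the inclusion $(\mathcal{B}^{(k)}(z))^c \cap \mathcal{B}^{(k_1)}(z) \subseteq [(\mathcal{B}^{(k)}(z))^c \cap \mathcal{B}^{(k-1)}(z)] \cup [(\mathcal{B}^{(k-1)}(z))^c \cap \mathcal{B}^{(k_1)}(z)]$ is just the unrolled version of the paper's telescoping bound $\mathbb{P}\bigl((\mathcal{B}^{(k_2)}(z))^c \cap \mathcal{B}^{(k_1)}(z)\bigr) \leq \sum_{k=k_1+1}^{k_2} \mathbb{P}\bigl((\mathcal{B}^{(k)}(z))^c \cap \mathcal{B}^{(k-1)}(z)\bigr)$, after which both arguments invoke the per-step bound from Equation~(\ref{eq:locPropNeuMS3}) and the lower bound of Theorem~\ref{thm:locPropNeuMS} at $k_1$ and perform the same final algebra.
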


\begin{proof}
The lower bound holds since
\[
	\mathbb{P} \left( \mathcal{B}^{(k_2)} (z) \vert \mathcal{B}^{(k_1)} (z) \right)
	= 1 - \frac{\mathbb{P} \left( (\mathcal{B}^{(k_2)} (z))^c \cap \mathcal{B}^{(k_1)} (z) \right)}{\mathbb{P} \left( \mathcal{B}^{(k_1)} (z) \right)}
\]
and furthermore 
\begin{eqnarray*}
	& & \mathbb{P} \left( (\mathcal{B}^{(k_2)} (z))^c \cap \mathcal{B}^{(k_1)} (z) \right)\\
	& = & \mathbb{P} \left( (\mathcal{B}^{(k_2)} (z))^c \cap \mathcal{B}^{(k_2 - 1)} (z) \cap \mathcal{B}^{(k_1)} (z) \right) \\
	&& + \mathbb{P} \left( (\mathcal{B}^{(k_2)} (z))^c \cap (\mathcal{B}^{(k_2 - 1)} (z))^c \cap \mathcal{B}^{(k_1)} (z) \right)\\
	& \leq & \mathbb{P} \left( (\mathcal{B}^{(k_2)} (z))^c \cap \mathcal{B}^{(k_2 - 1)} (z) \right)
	+ \mathbb{P} \left( (\mathcal{B}^{(k_2 - 1)} (z))^c \cap \mathcal{B}^{(k_1)} (z) \right)\\
	& \leq & \sum_{k = k_1 + 1}^{k_2} \mathbb{P} \left( (\mathcal{B}^{(k)} (z))^c \cap \mathcal{B}^{(k - 1)} (z) \right).
\end{eqnarray*}
Additionally, we know from Equation~(\ref{eq:locPropNeuMS3}) that
\[
	\mathbb{P} \left( (\mathcal{B}^{(k)} (z))^c \cap \mathcal{B}^{(k - 1)} (z) \right) 
	\leq \max \left\{ 2 n e^{-z }, n \epsilon \right\}
\]
for every $k \leq k'$. Hence, we get from Equation~(\ref{eq:locPropNeuMS2}) that
\begin{eqnarray*}
	\mathbb{P} \left( \mathcal{B}^{(k_2)} (z) \vert \mathcal{B}^{(k_1)} (z) \right)
	& \geq & 1 - \frac{(k_2 - k_1) \max \left\{ 2 n e^{-z }, n \epsilon \right\}}{1 - (k_1 + 1) \max \left\{ 2 n e^{-z }, n \epsilon \right\}} \\
	& = & \frac{ 1 - (k_2 + 1) \max \left\{ 2 n e^{-z }, n \epsilon \right\}}{1 - (k_1 + 1) \max \left\{ 2 n e^{-z }, n \epsilon \right\}} 
\end{eqnarray*}
leading to the assertion.
\end{proof}

\begin{Rem}
The assumptions on the choices of~$k_1$ and~$k_2$ ensure that the lower bound in Equation~(\ref{eq:locStability}) is larger than zero and smaller than one. This lower bound for the conditional probability $\mathbb{P} \left( \mathcal{B}^{(k_2)} (z) \vert \mathcal{B}^{(k_1)} (z) \right) $ improves the lower bound of $\mathbb{P} \left( \mathcal{B}^{(k_2)} (z) \right) $ in Theorem~\ref{thm:locPropNeuMS}. However, this result allows a comparison of the established lower bounds only, but not of the exact probabilities.
\end{Rem}

\subsection{Relation to previous work}\label{sec:comparison}

In the original study by \citet{PoSp05}, the authors demonstrated propagation, separation and stability of estimates up to some constant. We will summarize these results briefly.
All associated proofs were based on the memory step. 
In this study, we have shown similar properties for the simplified algorithm, where the memory step is removed. 
However, our results are restricted to locally constant parameter functions with sharp discontinuities. 
Theoretical properties of the algorithm in case of model misspecification will be analyzed in an upcoming study.

Both studies include a certain separation property, see \citet[Section 5.5]{PoSp05} and Proposition~\ref{prop:separationMS}.
This justifies that in case of sufficiently large discontinuities smoothing is restricted to the homogeneity regions. 

For the propagation property, Polzehl and Spokoiny supposed, among other things, the statistical independence of the adaptive weights from the observations.
They then showed for $\theta(.) \equiv \theta$ that
\begin{equation}\label{eq:propPS06}
	\mathbb{P} \left( \overline{N}_i^{(k)} \mathcal{KL} \left( \hat{\theta}_i^{(k)}, \theta \right) \leq \mu \log(n) \quad \forall i \right) > 1 - 2k/n, \qquad \mu \geq 2,
\end{equation}
where~$\hat{\theta}_i^{(k)}$ denotes the adaptive estimator after modification by the memory step, see \citet[Section 3.2 and 3.3]{PoSp05}. For locally almost constant parameters they established a similar result.
Equation~(\ref{eq:propPS06}) could be improved by Proposition~\ref{prop:propCond} taking advantage of the new propagation condition introduced in Section~\ref{sec:propCond}. Setting $z := \mu \log(n)$ and $ \epsilon := c_{\epsilon} n^{-q} $ Proposition~\ref{prop:propCond} implies
\[
	\mathbb{P} \left( \overline{N}_i^{(k)} \mathcal{KL} \left( \tilde{\theta}_i^{(k)}, \theta \right) \leq \mu \log(n) \quad \forall i \right) > 1 - \max \left\{ 2/n, c_{\epsilon} / n \right\}, \qquad \mu, q \geq 2,
\]
where the additional factor~$k$ is avoided.
Theorem~\ref{thm:locPropNeuMS} sheds light on the interplay of propagation and separation during iteration. 
Here, we do not restrict the analysis to the respective homogeneous compartment as in Proposition~\ref{prop:propCond} and \citep{PoSp05}. Instead, we use the separation property to verify the propagation property for piecewise constant functions with sharp discontinuities. The resulting exponential bound in Equation~(\ref{eq:locPropNeuMS2}) complies with Equation~(\ref{eq:propPS06}) setting $z \geq q \log(n)$ and $\epsilon := c_{\epsilon} n^{-q}$ with~$c_{\epsilon} > 0$ and~$q \geq 2$.

The results on stability of estimates are difficult to compare.
Our corresponding results are stated in Propositions~\ref{prop:propCond} and~\ref{prop:locStability}. 
Polzehl and Spokoiny proved under weak assumptions stability of estimates up to some constant. More precisely, they showed that
\[
	\overline{N}_i^{(k)} \mathcal{KL} \left( \hat{\theta}_i^{(k)}, \theta_i \right) \leq \mu \log(n) 
\]
implies with probability one
\[
	\overline{N}_i^{(k)} \mathcal{KL} \left( \hat{\theta}_i^{(k^{*})}, \theta_i \right) \leq c \log(n), \qquad c := \varkappa^2 \left( \sqrt{c_1 C_{\tau}} + \sqrt{\mu} \right)^2,
\]
where~$\varkappa$ is as in Lemma~\ref{lem:A1}, $\tau := C_{\tau} \log(n)$ denotes the bandwidth of the memory kernel and $c_1 := \varkappa^2 \nu (1 - \sqrt{\nu})^{-2}$ depends on the constant~$\nu$ satisfying
$\nu_1 \leq \overline{N}_i^{(k-1)} / \overline{N}_i^{(k)} \leq \nu$
with $\nu_1, \nu \in (2/3, 1)$. Hence, the constant $c$ might be quite large. 
This result allowed to verify under smoothness conditions on the parameter function~$\theta(.)$ the optimal rate of convergence.

\section{Discussion}\label{sec:discussion}

In this section, we dwell into the propagation condition, discuss its application in practice and generalize the setting of our study.

\subsection[Independence]{(In-)dependence of the propagation condition of the parameter} \label{sec:Lambda} \label{sec:Independence} \label{sec:dependentPropCond}

The propagation condition in Definition~\ref{def:propCond} is formulated w.r.t. the unknown parameter $\theta \in \Theta$. In this section, we evaluate its dependence of this parameter. To this end, we start with a more general problem yielding a sufficient criterion.
This criterion suggests the independence of the propagation condition of the parameter~$\theta$ in case of Gaussian and exponential distribution and as a consequence of log-normal, Rayleigh, Weibull, and Pareto distribution. Additionally, we discuss the choice of~$\lambda$ if the associated function~$\mathfrak{Z}_{\lambda}$ is not independent of the paremeter~$\theta$, where we concentrate on the Poisson distribution.

We introduce a general criterion for the independence of the composition of two functions of some parameter~$\theta$.

\begin{Prop}\label{prop:zeta}
Let $f: \Omega^f \to \mathbb{R}$ and $g: \Omega^g \to \mathbb{R}$ be continuously differentiable functions with open domains $\Omega^f, \Omega^g \subseteq \mathbb{R}^2$.  We denote $\Omega^f_{\theta} := \lbrace y: (y, \theta) \in \Omega^f \rbrace$, $f_{\theta}: \Omega^f_{\theta} \to \mathbb{R}$ with $ f_{\theta}(y) := f(y,\theta) $, and analogous~$\Omega^g$ and~$g_{\theta}$. Then, we suppose $g_{\theta} (\Omega^g_{\theta}) \subseteq \Omega^f_{\theta}$ and $\left| \frac{\partial g_{\theta} }{\partial y} \right| > 0 $, such that the composition $f_{\theta} \circ g_{\theta}^{-1}: g_{\theta} (\Omega^g_{\theta}) \to \mathbb{R}$ is well-defined. 
The function 
\[
	h(z, \theta) := f_{\theta} \left( g_{\theta}^{-1}(z) \right),  \qquad (z, \theta) \in g(\Omega^g), 
\]
is independent of~$\theta$ if a variable~$\zeta(y, \theta)$ and functions~$\tilde{f}$ and~$\tilde{g}$ exist such that 
\begin{equation}\label{eq:zeta}
	\tilde{f} (\zeta) = f_{\theta} (y)
	\quad \text{ and } \quad
	\tilde{g} (\zeta) = g_{\theta} (y).
\end{equation}
\end{Prop}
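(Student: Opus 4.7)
The plan is to prove that $h(z,\theta)$ is independent of $\theta$ by showing $\partial h/\partial\theta \equiv 0$ via the chain rule and the implicit function theorem, then observing that the hypothesis~(\ref{eq:zeta}) forces the two terms in this derivative to cancel. An alternative would be a direct change-of-variables argument, but the differential computation is cleanest because the hypothesis provides two relations that combine most naturally through derivatives, and the assumption $\left|\partial g_{\theta}/\partial y\right| > 0$ is exactly the non-degeneracy condition needed to apply the implicit function theorem.

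First I would compute $\partial_\theta h$. Since $h(z,\theta) = f(g_\theta^{-1}(z),\theta)$, differentiating $g(g_\theta^{-1}(z),\theta) = z$ implicitly in $\theta$ yields
\[
	\frac{\partial g_\theta^{-1}(z)}{\partial \theta}
	= - \frac{\partial g/\partial\theta}{\partial g/\partial y}\bigg|_{y = g_\theta^{-1}(z)},
\]
which is well-defined because $|\partial g_\theta/\partial y|>0$. The chain rule then gives
\[
	\frac{\partial h}{\partial \theta}(z,\theta)
	= \frac{\partial f}{\partial \theta} - \frac{\partial f}{\partial y}\cdot\frac{\partial g/\partial \theta}{\partial g/\partial y},
\]
evaluated at $y = g_\theta^{-1}(z)$.

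Next I would insert the hypothesis. Writing $f_\theta(y) = \tilde f(\zeta(y,\theta))$ and $g_\theta(y) = \tilde g(\zeta(y,\theta))$ and differentiating,
\[
	\frac{\partial f}{\partial y} = \tilde f'(\zeta)\,\frac{\partial \zeta}{\partial y},\quad
	\frac{\partial f}{\partial \theta} = \tilde f'(\zeta)\,\frac{\partial \zeta}{\partial \theta},
\]
with analogous identities for $g$. Substituting these into the formula for $\partial_\theta h$, the ratio $(\partial g/\partial \theta)/(\partial g/\partial y)$ collapses to $(\partial\zeta/\partial\theta)/(\partial\zeta/\partial y)$ (the factor $\tilde g'(\zeta)$ cancels), and the whole expression reduces to
\[
	\tilde f'(\zeta)\,\frac{\partial \zeta}{\partial \theta} - \tilde f'(\zeta)\,\frac{\partial \zeta}{\partial y}\cdot\frac{\partial\zeta/\partial\theta}{\partial\zeta/\partial y} = 0.
\]
Hence $h$ depends only on $z$ on every connected component of its domain, and one may set $h(z) := \tilde f\bigl(\tilde g^{-1}(z)\bigr)$ wherever $\tilde g$ is locally invertible.

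The main obstacle I foresee is handling the implicit regularity: one must know that $\partial\zeta/\partial y$ does not vanish (so that $\tilde g'(\zeta)$ can be cancelled without dividing by zero) and that $\tilde g'(\zeta)\neq 0$ on the relevant range. Both follow from the assumption $|\partial g_\theta/\partial y| > 0$ together with the factorization $\partial g/\partial y = \tilde g'(\zeta)\,\partial\zeta/\partial y$, which forces each factor to be nonzero. A secondary point is that the argument is pointwise in $(z,\theta)$, so global independence of $\theta$ follows only on connected pieces of the domain $g(\Omega^g)$; on each such piece, integrating $\partial_\theta h = 0$ gives the asserted independence.
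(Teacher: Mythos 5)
Your proposal is correct and follows essentially the same route as the paper: both reduce the claim to showing $\partial h/\partial\theta=0$, arrive at the same expression $\partial_\theta h = \bigl(\partial_\theta f\,\partial_y g - \partial_y f\,\partial_\theta g\bigr)/\partial_y g$ (you via implicit differentiation of $g_\theta^{-1}$, the paper via differentiating $h(g_\theta(y),\theta)=f(y,\theta)$), and then cancel the cross-product using the factorizations through $\zeta$. Your remarks on nonvanishing of $\tilde g'(\zeta)$ and $\partial\zeta/\partial y$ and on connected components are sound minor refinements, not deviations.
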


\begin{proof}
Substitution with~$y := g_{\theta}^{-1} (z)$ yields $ h(g_{\theta} (y), \theta) = f \left( y, \theta \right) $ for $(y, \theta) \in \Omega^f$ and hence the total derivatives
\[
	\frac{d h}{d \theta} 
	= \frac{\partial h}{\partial z} \frac{\partial g}{\partial \theta} + \frac{\partial h}{\partial \theta} 
	= \frac{\partial f}{\partial \theta}
	\quad \text{ and } \quad
	\frac{d h}{d y} 
	= \frac{\partial h}{\partial z} \frac{\partial g}{\partial y}  
	= \frac{\partial f}{\partial y}.
\]
Then, it follows $ \frac{\partial h}{\partial z} = \frac{\partial f}{\partial y} / \frac{\partial g}{\partial y} $ and furthermore
\[
	\frac{\partial f}{\partial y} \frac{\partial g}{\partial \theta} + \frac{\partial h}{\partial \theta} \frac{\partial g}{\partial y}
	= \frac{\partial f}{\partial \theta} \frac{\partial g}{\partial y}.
\]
This leads with $\left| \frac{\partial g_{\theta} }{\partial y} \right| > 0 $ to
\[
	\frac{\partial h}{\partial \theta} 
	= \left( \frac{\partial f}{\partial \theta} \frac{\partial g}{\partial y} - \frac{\partial f}{\partial y} \frac{\partial g}{\partial \theta} \right)
	\cdot \left( \frac{\partial g }{\partial y} \right)^{-1} 
\]
such that
\[
	\frac{\partial h}{\partial \theta} = 0
	\quad \Longleftrightarrow \quad 
	\frac{\partial f}{\partial \theta} \frac{\partial g}{\partial y} = \frac{\partial f}{\partial y} \frac{\partial g}{\partial \theta}.
\]
The chain rule implies with Equation~(\ref{eq:zeta}) that indeed
\[
	\frac{\partial f}{\partial \theta} \frac{\partial g}{\partial y}
	= \frac{\partial \tilde{f}}{\partial \zeta} \frac{\partial \zeta}{\partial \theta} 
	\frac{\partial \tilde{g}}{\partial \zeta} \frac{\partial \zeta}{\partial y}
	= \frac{\partial \tilde{f}}{\partial \zeta} \frac{\partial \zeta}{\partial y} 
	\frac{\partial \tilde{g}}{\partial \zeta} \frac{\partial \zeta}{\partial \theta}
	= \frac{\partial f}{\partial y} \frac{\partial g}{\partial \theta}
\]
yielding that~$h$ is independent of~$\theta$.
\end{proof}

Now, we are well prepared to evaluate the (in-)dependence of the propagation condition in Definition~\ref{def:propCond}, and hence of the choice of~$\lambda$, of the parameter~$\theta$. 
The estimator is defined as linear combination of the terms~$T(Y_j)$, where the adaptive and the non-adaptive estimator differ only in the definition of the weights.
Thus, we approach the problem in three steps.
We start from the special case, where the estimator is restricted to a single point~$T(Y_j)$. Then, we consider the \emph{non-adaptive} estimator describing its probability density as convolution of the respective densities corresponding to the weighted observations. 
Here, we take advantage of the statistical independence of the
involved random variables~$\overline{w}_{ij}^{(k)} T(Y_j) /
\overline{N}_i^{(k)}$. In case of the \emph{adaptive} estimator we cannot
follow the same approach. This would require knowledge about the
probability distribution of the random variables~$\tilde{w}_{ij}^{(k)}
T(Y_j) / \tilde{N}_i^{(k)}$, where the adaptive weights follow an
unknown distribution. Further, these variables are not statistically
independent. To compensate the resulting lack of a theoretical proof,
we illustrate by simulations that the adaptive estimator shows almost
the same behavior as the non-adaptive estimator, if the propagation
condition is satisfied. This suggests that the probability
distribution of $\mathcal{KL} ( \tilde{\theta}_i^{(k)}, \theta )$ is
independent of~$\theta$ if the same holds true w.r.t. the non-adaptive
estimator. The single observation case is treated first.

\begin{Lem}\label{lem:zeta}
Let $\mathcal{P} = \lbrace \mathbb{P}_{\theta} \rbrace_{\theta \in \Theta}$ with $\Theta \subseteq \mathbb{R}$ be a parametric family of continuous probability distributions. 
Suppose that~$Y \sim \mathbb{P}_{\theta}$ and $T(Y) \in \Theta$ almost
surely, and that the density~$f^Y_{\theta}$ of $Y$ is continuously
differentiable.
Consider the random variable
$Z := g_{\theta}(Y) 
	:= \mathcal{KL} \left( \mathbb{P}_{T(Y(\omega))}, \mathbb{P}_{\theta} \right),$
and assume that $\frac{\partial g_{\theta} }{\partial y} \neq 0$.
The density~$f^Z_{\theta}$ of~$Z$ is independent of the parameter~$\theta$ if a variable~$\zeta(y, \theta)$ and functions~$\tilde{f}$ and~$\tilde{g}$ exist such that 
\begin{equation}\label{eq:indLambda}
	\tilde{f} (\zeta) = f^Y_{\theta}(y) \cdot \left| \frac{\partial g_{\theta} }{\partial y} (y) \right|^{-1}
	\quad \text{ and } \quad
	\tilde{g} (\zeta) = g_{\theta} (y).
\end{equation} 
\end{Lem}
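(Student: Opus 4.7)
The plan is to reduce the statement to a direct application of Proposition~\ref{prop:zeta} by identifying the density of $Z$ with the composition $h(z,\theta) = f_\theta(g_\theta^{-1}(z))$ for a suitable choice of $f$ and $g$.

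First, I would invoke the standard one-dimensional change-of-variables formula for densities. Since $g_\theta$ is continuously differentiable in $y$ with $\left|\partial g_\theta/\partial y\right| \neq 0$, the map $y \mapsto g_\theta(y)$ is strictly monotone on each connected component of its domain and hence locally invertible. The transformation rule then gives
\[
	f^Z_\theta(z) = f^Y_\theta\bigl(g_\theta^{-1}(z)\bigr) \cdot \left| \frac{\partial g_\theta}{\partial y}\bigl(g_\theta^{-1}(z)\bigr) \right|^{-1}
\]
for $z$ in the range of $g_\theta$, with the right-hand side replaced by a finite sum over preimages in the case that $g_\theta$ is only piecewise monotone on the support of $Y$; each summand is then treated by the same argument below.

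Second, I would set
\[
	f(y,\theta) := f^Y_\theta(y) \cdot \left| \frac{\partial g_\theta}{\partial y}(y) \right|^{-1} \quad \text{and} \quad g(y,\theta) := g_\theta(y)
\]
on the open set where all derivatives are well-defined and nonzero. With this choice, $f^Z_\theta(z) = h(z,\theta)$ in the notation of Proposition~\ref{prop:zeta}, and hypothesis~(\ref{eq:indLambda}) of the present lemma is literally hypothesis~(\ref{eq:zeta}) of that proposition applied to this $f$ and $g$. Invoking Proposition~\ref{prop:zeta} then yields $\partial h/\partial \theta \equiv 0$, which is the desired independence of $f^Z_\theta$ from $\theta$.

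The only delicate point is ensuring that the change-of-variables formula applies in the form written, i.e.\ that $g_\theta$ is either globally invertible or piecewise so, with a Jacobian that is continuously differentiable jointly in $(y,\theta)$; both follow from the stated regularity of $f^Y_\theta$ and $g_\theta$ together with $\partial g_\theta/\partial y \neq 0$. The essential conceptual observation is that the Jacobian factor $|\partial g_\theta/\partial y|^{-1}$ produced by the substitution is precisely the factor appearing in~(\ref{eq:indLambda}), so that Lemma~\ref{lem:zeta} is in effect a specialization of Proposition~\ref{prop:zeta} to densities under a smooth change of variable.
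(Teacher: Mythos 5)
Your proposal is correct and follows essentially the same route as the paper: the paper's proof likewise writes $f^Z_{\theta}(z) = f^Y_{\theta}(g_{\theta}^{-1}(z)) \cdot \left| \frac{\partial g_{\theta}}{\partial y}(g_{\theta}^{-1}(z)) \right|^{-1}$ via the change-of-variables formula and then invokes Proposition~\ref{prop:zeta} with exactly your choice of $f$ and $g$, using $\mathbb{P}_{\theta}(|\partial g_{\theta}/\partial y| > 0) = 1$. Your extra remark about handling piecewise monotonicity by summing over preimages is a minor refinement of the same argument, not a different approach.
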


\begin{proof}
The assertion follows with
\[
	h(z, \theta) := f^Z_{\theta} (z) 
	= f^Y_{\theta} \left( g_{\theta}^{-1}(z) \right) \cdot \left| \frac{\partial g_{\theta} }{\partial y} \left( g_{\theta}^{-1}(z) \right) \right|^{-1} 
\]
as special case of Proposition~\ref{prop:zeta} since $ \mathbb{P}_{\theta} \left( \left| \frac{\partial g_{\theta} }{\partial y} (y) \right| > 0 \right)= \mathbb{P}_{\theta}(T(Y) \neq \theta) =1 $.
\end{proof}

This Lemma yields the desired results for Gaussian and Gamma-distributed observations 
.

\begin{Ex}\label{ex:Independence}
We consider the same setting as in Lemma~\ref{lem:zeta}. In the following cases, the density of~$Z$ is independent of the parameter~$\theta$.
\begin{itemize}
\item $\mathcal{P} = \left\{ \mathcal{N} (\theta, \sigma^2) \right\}_{\theta \in \Theta}$ with~$\sigma > 0$ fixed:
Equation~(\ref{eq:KL}) and Table~\ref{tab:contDistr} yield for the Kullback-Leibler divergence of $\mathbb{P}_{\theta}, \mathbb{P}_{\theta'} \in \mathcal{P}$ the explicit formula
\[
	\mathcal{KL} \left( \theta, \theta' \right) =
	\frac{(\theta - \theta')^2 }{ 2 \sigma^2 }
	\quad \text{ such that } \quad
	\frac{\partial g_{\theta} }{\partial y} (y) = \frac{y - \theta}{\sigma^2}.
\]
Since $	f^Y_{\theta} (y) = \exp \left(- \frac{(y-\theta)^2}{2 \sigma^2} \right)/ \sqrt{2 \pi \sigma^2 }$
we get the independence of~$\theta$ from Lemma~\ref{lem:zeta} by setting
\[
	\zeta(y, \theta) := y - \theta,
	\qquad
	\tilde{f} (\zeta) := \frac{\sigma \, e^{-\frac{\zeta^2}{2 \sigma^2}}}{\zeta \sqrt{2 \pi}} ,
	\quad \text{ and } \quad
	\tilde{g} (\zeta) := \frac{\zeta^2}{2 \sigma^2}.
\]
\item $ \mathcal{P} = \left\{ \Gamma (p, \theta) \right\}_{\theta \in \Theta}$ with~$p >0$ fixed: 
It holds $f^Y_{\theta} (y) = \frac{y^{p-1} \, e^{-y/ \theta}}{\theta^p \Gamma(p)}$, such that
\[
	\mathcal{KL} \left( \theta, \theta' \right)
	= p \left[ \theta/\theta' - 1 - \ln \left( \theta/\theta' \right) \right]
	\quad \text{ and } \quad
	\frac{\partial g_{\theta} }{\partial y} (y) = p \left( \frac{1}{\theta} - \frac{1}{y} \right).
\]
Thus, Lemma~\ref{lem:zeta} can be applied with
\[
	\zeta(y, \theta) := \frac{ y }{ \theta },
	\qquad
	\tilde{f} (\zeta) = \frac{\zeta^p \, e^{- \zeta}}{p(\zeta - 1) \Gamma(p)}
	\quad \text{ and } \quad
	\tilde{g} (\zeta) = p \left[ \zeta - 1 - \ln \zeta \right].
\]
\end{itemize}
\end{Ex}

This extends to non-adaptive linear combinations as follows.
Lemma~\ref{lem:zeta} can be applied w.r.t. the non-adaptive estimator with~$Y := \overline{\theta}_i^{(k)}$ considering the composition of the density~$f_{\theta}^{\overline{\theta}_i^{(k)}}$ and the Kullback-Leibler divergence described by the function~$g_{\theta}$. 
While the latter depends on the assumed parametric family~$\mathcal{P}$ only, the density~$f_{\theta}^{\overline{\theta}_i^{(k)}}$ is determined via convolution of the probability densities of~$\overline{w}_{ij}^{(k)} T(Y_j) / \overline{N}_i^{(k)}$, where $Y_j \sim \mathbb{P}_{\theta} \in \mathcal{P}$. Hence, it depends directly on the function~$T(.)$ introduced in Assumption~(\ref{A1}).

\begin{Thm}\label{thm:Independence}
Let $\mathcal{P} = \lbrace \mathbb{P}_{\theta} \rbrace_{\theta \in \Theta}$ with $\Theta \subseteq \mathbb{R}$ be a parametric family of probability distributions.
We consider the random variable 
\[
	Z := g_{\theta}(\overline{\theta}_i^{(k)}) := \left[ \omega \mapsto \mathcal{KL} \left( \mathbb{P}_{\overline{\theta}_i^{(k)} (\omega)}, \mathbb{P}_{\theta} \right) \right], 
\]
where~$\overline{\theta}_i^{(k)}$ denotes the non-adaptive estimator
depending on the observations $Y_j \overset{\mathrm{iid}}{\sim}
\mathbb{P}_{\theta}$ with $j \in \lbrace 1,...,n \rbrace$ and
some~$\theta \in \Theta$. The density of~$Z$ is independent of the
parameter~$\theta$ in the following cases.
\begin{itemize}
\item $ \mathcal{P} = \left\{ \mathcal{N} (\theta, \sigma^2) \right\}_{\theta \in \Theta}$ with~$\sigma > 0$ fixed;
\item $ \mathcal{P} = \left\{ \mathrm{log}\mathcal{N} (\theta, \sigma^2) \right\}_{\theta \in \Theta}$ with~$\sigma > 0$ fixed;
\item $ \mathcal{P} = \left\{ \mathrm{Exp} (1/\theta) \right\}_{\theta \in \Theta}$;
\item $ \mathcal{P} = \left\{ \mathrm{Rayleigh} (\theta) \right\}_{\theta \in \Theta}$;
\item $ \mathcal{P} = \left\{ \mathrm{Weibull} (\theta, k) \right\}_{\theta \in \Theta}$ with~$k > 0$;
\item $ \mathcal{P} = \left\{ \mathrm{Pareto} (x_m, \theta) \right\}_{\theta \in \Theta}$ with~$x_m \geq 1$.
\end{itemize}
\end{Thm}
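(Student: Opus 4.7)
The overall plan is to apply Lemma~\ref{lem:zeta} with $Y := \overline{\theta}_i^{(k)}$, so that condition~(\ref{eq:indLambda}) must be verified for the density $f^{\overline{\theta}_i^{(k)}}_\theta$ of the non-adaptive estimator together with $g_\theta(y) = \mathcal{KL}(\mathbb{P}_y, \mathbb{P}_\theta)$. For each listed family I would search for a variable $\zeta(y,\theta)$ through which both the density of $\overline{\theta}_i^{(k)}$ and the KL divergence factor; the deterministic weights $\overline{w}_{ij}^{(k)}/\overline{N}_i^{(k)}$ are fixed parameters and not sources of $\theta$-dependence.

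I would start with the Gaussian case, which is cleanest: since a finite linear combination of independent $\mathcal{N}(\theta,\sigma^2)$ random variables with weights summing to one is again normal with mean $\theta$ and some variance $\sigma_k^2$ depending only on the weights and $\sigma$, the density $f^{\overline{\theta}_i^{(k)}}_\theta(y)$ is a function of $y-\theta$ alone. Combining this with the Gaussian KL formula from Example~\ref{ex:Independence}, the substitution $\zeta := y-\theta$ satisfies~(\ref{eq:indLambda}) and Lemma~\ref{lem:zeta} closes the case.

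For the exponential family $\mathrm{Exp}(1/\theta)$ the argument is structurally parallel but rests on \emph{scale invariance} rather than closure under convolution. Writing $Y_j = \theta E_j$ with $E_j \overset{\mathrm{iid}}{\sim} \mathrm{Exp}(1)$, one has $\overline{\theta}_i^{(k)} = \theta \cdot \sum_j (\overline{w}_{ij}^{(k)}/\overline{N}_i^{(k)}) E_j$, so the distribution of $\overline{\theta}_i^{(k)}/\theta$ is independent of $\theta$ and the density takes the form $f^{\overline{\theta}_i^{(k)}}_\theta(y) = \theta^{-1}\phi(y/\theta)$ for some $\phi$ determined purely by the fixed weights (a hypoexponential law). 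Together with the KL formula from Example~\ref{ex:Independence}, which depends on $y/\theta$ only, the choice $\zeta := y/\theta$ verifies~(\ref{eq:indLambda}).

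The remaining four families are reduced to these two via the invertible transformation $T$ of Remark~\ref{rem:A1}: for $\mathrm{log}\mathcal{N}(\theta,\sigma^2)$ take $T(y):=\log y$ to land in the Gaussian case; for $\mathrm{Rayleigh}(\theta)$, $\mathrm{Weibull}(\theta,k)$ and $\mathrm{Pareto}(x_m,\theta)$ take $T(y):=y^2$, $T(y):=y^k$, and $T(y):=\log(y/x_m)$ respectively to land in an $\mathrm{Exp}(1/\theta')$ scale family after reparametrization. Since the KL divergence is invariant under invertible $T$, the independence result transfers. The main obstacle, as I see it, is not the Gaussian case (which is trivial by convolution) but ensuring that the scale-invariance argument for the exponential case genuinely applies to $\overline{\theta}_i^{(k)}$: one must verify that the weighted sum preserves the scale structure despite not preserving the exponential form, and that the reductions via $T$ do not introduce hidden $\theta$-dependence in the Jacobian once one passes from $Y$ to $T(Y)$. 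Once this scale-invariance identity $f^{\overline{\theta}_i^{(k)}}_\theta(y) = \theta^{-1}\phi(y/\theta)$ is in hand, the rest is a bookkeeping exercise in applying Lemma~\ref{lem:zeta}.
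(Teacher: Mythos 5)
Your proposal is correct, and its overall architecture coincides with the paper's: reduce the log-normal, Rayleigh, Weibull and Pareto cases to the Gaussian and exponential ones via the transformation $T$ (the estimator is by definition a weighted mean of the $T(Y_j)$, and the Kullback--Leibler divergences coincide after the induced reparametrization), treat the Gaussian case through closure of the normal family under weighted averaging with $\zeta = y-\theta$, and then invoke Lemma~\ref{lem:zeta} with $Y:=\overline{\theta}_i^{(k)}$. Where you genuinely depart from the paper is the exponential case. The paper computes the density of the weighted sum of exponentials explicitly by convolution, distinguishing the case of equal nonzero weights (Gamma-distributed estimator) from distinct weights, and represents the density as a linear combination $\sum_j c_j f_j$ of Gamma densities $\Gamma(m_j,\theta a_j)$ whose coefficients depend only on the weights, applying Example~\ref{ex:Independence} summand by summand. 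You instead use scale equivariance: writing $Y_j=\theta E_j$ gives $\overline{\theta}_i^{(k)}=\theta\sum_j a_j E_j$, hence $f^{\overline{\theta}_i^{(k)}}_\theta(y)=\theta^{-1}\varphi(y/\theta)$ with $\varphi$ free of $\theta$, and since $g_\theta(y)=y/\theta-1-\ln(y/\theta)$ and $\partial g_\theta/\partial y=\theta^{-1}(1-\theta/y)$, both quantities in~(\ref{eq:indLambda}) are functions of $\zeta=y/\theta$ alone, so Lemma~\ref{lem:zeta} applies without ever identifying $\varphi$. This buys you two things: no case distinction between equal and distinct weights, and strictly more generality --- the same scale argument covers weighted means of $\Gamma(p,\theta)$ (and hence Erlang and scaled chi-squared) observations with \emph{arbitrary} weights, a case the paper's decomposition could not settle and which its subsequent remark restricts to $\{0,1\}$-valued weights. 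What the paper's route buys in exchange is an explicit closed-form density for the weighted exponential sum, which your argument deliberately avoids. The only caveats you share with the paper rather than add to it: the derivative $\partial g_\theta/\partial y$ vanishes at $y=\theta$ and $g_\theta$ is only piecewise invertible, which both proofs silently handle via a null set (or branchwise), and your final worry about "hidden $\theta$-dependence in the Jacobian" is moot because the estimator is already defined as a weighted mean of the $T(Y_j)$, so no change of variables on the observation scale is ever performed.
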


\begin{proof}
The non-adaptive estimator is defined as weighted mean of~$T(Y_j)$ with $j = 1,..,n$. We get from Table~\ref{tab:contDistr} that
\begin{itemize}
\item $T(Y) = \ln(Y) \sim \mathcal{N} (\mu, \sigma^2)$ if $Y \sim \mathrm{log} \mathcal{N} (\mu, \sigma^2)$; 
\item $T(Y) = Y^2 \sim \mathrm{Exp} \left( \frac{1}{2 \theta^2} \right)$ if $ Y \sim \mathrm{Rayleigh} (\theta)$;
\item $T(Y) = Y^k\sim \mathrm{Exp} \left( \frac{1}{\theta^k} \right)$ if $ Y \sim \mathrm{Weibull} (\theta, k)$ with~$k > 0$;
\item $T(Y) = \ln \left( y / x_m \right) \sim \mathrm{Exp} \left( \theta \right)$ if $ Y \sim \mathrm{Pareto} (x_m, \theta)$.
\end{itemize}
Hence, in each of these cases, the non-adaptive estimator follows the same distribution as for Gaussian or exponentially distributed observations. 
Additionally, the corresponding Kullback-Leibler divergences coincide with the respective divergences of Gaussian or exponential distributions. Therefore, it suffices to consider Gaussian and exponential distribution.

In the Gaussian case, it follows from the statistical independence of the observations~$ Y_j \overset{\mathrm{iid}}{\sim} \mathcal{N} (\theta, \sigma^2) $, that
\[
	\overline{\theta}_i^{(k)} \sim \mathcal{N} \left( \theta, \sigma^2 C_i\right), \qquad \text{where } C_i := \sum_{j=1}^n \left( \overline{w}_{ij}^{(k)} / \overline{N}_i^{(k)} \right)^2.
\]
Hence, the non-adaptive estimator is again Gaussian distributed and the independence of~$\theta$ follows analogous to Example~\ref{ex:Independence}, where~$\zeta$ and~$\tilde{g}$ remain unchanged and
\[
	\tilde{f} (\zeta) := \frac{\sigma^2 }{ \zeta \sigma_i \sqrt{2 \pi}} \exp \left( - \frac{\zeta^2}{2 \sigma_i^2} \right).
\]
Next, we consider the exponential distribution supposing $Y_j \overset{\mathrm{iid}}{\sim} \mathrm{Exp} ( 1/ \theta ) $. 
We distinguish two cases. 
First, if all non-zero weights are equal, and hence
$\overline{w}_{ij}^{(k)} \in \lbrace 0, 1 \rbrace$ as
$\overline{w}_{ii}^{(k)} = 1$ for all~$k$, then  the non-adaptive estimator~$\overline{\theta}_i^{(k)}$ is Gamma-distributed, i.e. 
\[
	\overline{\theta}_i^{(k)} \sim \Gamma \left( \overline{N}_i^{(k)}, \theta / \overline{N}_i^{(k)} \right).
\] 
This yields the desired independency of~$\theta$ via Example~\ref{ex:Independence} setting~$Y := \overline{\theta}_i^{(k)}$. 
Next, in the general case, we require the existence of non-zero weights $ \overline{w}_{ij}^{(k)} \neq \overline{w}_{ij'}^{(k)} $ with $j,j' \in \lbrace 1,...,n \rbrace$. If $Y_j \sim \mathrm{Exp}(1/\theta)$ then it holds $a_j Y_j \sim \mathrm{Exp} ( 1/(\theta a_j) )$ for all~$a_j > 0$, where we denote $a_j := \overline{w}_{ij}^{(k)}/ \overline{N}_i^{(k)}$ for the sake of simplicity.
The linear combination $Y := a_1 Y_1 + a_2 Y_2$ with $a_1 \neq a_2$ has the density
\begin{eqnarray*}
	f^Y (y) &=& \left( f^{a_1 Y_1} \ast f^{a_2 Y_2} \right) (y) \\
	&=& \int_0^y \frac{1}{\theta a_1} e^{- \frac{y-z}{\theta a_1}} \frac{1}{\theta a_2} e^{- \frac{z}{\theta a_2}} dz \\
	&=& \frac{ e^{- \frac{y}{\theta a_1}} }{\theta^2 a_1 a_2} \int_0^y e^{- z \frac{a_1 - a_2}{\theta a_1 a_2}} dz \\
	&=& \frac{ e^{- \frac{y}{\theta a_1}}}{\theta^2 a_1 a_2}  \cdot \frac{\theta a_1 a_2}{a_2 - a_1} \left( e^{- y \frac{a_1 - a_2}{\theta a_1 a_2}} -1 \right) \\
	&=& \frac{ 1 }{\theta (a_1 - a_2)} e^{- \frac{y}{\theta a_1}} 
	- \frac{ 1 }{\theta (a_1 - a_2)} e^{- \frac{y}{\theta a_2}}\\
	&=& \frac{ a_1 }{a_1 - a_2} f^{a_1 Y_1} (y) - \frac{ a_2 }{a_1 - a_2} f^{a_2 Y_2} (y),
\end{eqnarray*}
which is a weighted sum of the component densities. Therefore, this extends to the more general case $\overline{Y} := a_1 Y_1 + ... + a_m Y_m$ with $a_j \neq a_{j'}$ for all~$j \neq j'$. 
Including the case of equal weights $ a_j = a_{j'} $ for some $j,j' \in \lbrace 1,...,n \rbrace$ we conclude that
\[
	f_{\theta}^{\overline{\theta}_i^{(k)}} = \sum_{j=1}^m c_j f_j,\]
where the constants~$c_j \in \mathbb{R}$ depend again on $a_1,...,a_m$ only. The densities~$f_j$ follow the distribution~$\Gamma(m_j, \theta a_j )$, where ~$m_j$ denotes the number of observations~$Y_{j'}$ with weights $a_{j'} = a_j$.
Thus, we get from Example~\ref{ex:Independence} the independence
of~$\theta$ for each summand~$c_j f_j$ yielding the assertion for weighted sums of exponentials. 
\end{proof}

\begin{Rem}
We know from Example~\ref{ex:Independence} that the random variable $ \left[ \omega \mapsto \mathcal{KL} \left( \mathbb{P}_{T(Y(\omega))}, \mathbb{P}_{\theta} \right) \right] $
is independent of the parameter~$\theta$ if the observations follow
a Gamma distribution. 
However, the probability distribution of the corresponding non-adaptive estimator has a quite sophisticated form \citep{MR695077, MR818052}, where the corresponding summands could not been proven to be independent of~$\theta$. Though, in case of a location kernel that attains only values in~$\lbrace 0,1 \rbrace$ we get
\[
	Y_j \overset{\mathrm{iid}}{\sim} \Gamma (p, \theta) \, \Longrightarrow \, \overline{\theta}_i^{(k)} \sim \Gamma (\overline{N}_i^{(k)} p, \theta/ \overline{N}_i^{(k)}) \qquad \text{ if } \overline{w}_{ij}^{(k)} \in \lbrace 0, 1 \rbrace \text{ for all } j.
\]
This yields via Example~\ref{ex:Independence} the independence of~$\theta$.
The same holds true for the Erlang and scaled chi-squared distribution since 
\[
	\mathrm{Erlang}(n, 1/ \theta) = \Gamma(n, \theta)
	\quad \text{ and } \quad
	Y \sim \Gamma(k/2,2 \theta/ k) \text{ if } k Y/ \theta \sim \chi^2 (k) = \Gamma(k/2,2).
\]
\end{Rem}

The new propagation condition is included into the R-package
\texttt{aws} \citep{aws}. First tests yield smaller values of the
adaptation bandwidth~$\lambda$ than the previous version of the
propagation condition, hence allowing for better smoothing results with a smaller estimation bias. 

In Figures~\ref{fig:Gauss} and~\ref{fig:Exp}, we show some examples to illustrate the close relation of the
adaptive and the non-adaptive estimator under a satisfied propagation
condition. 
Both Theorem~\ref{thm:Independence} and the numerical simulations suggest the independence of the propagation condition of the parameter~$\theta$.

The plots have been realized using the function \texttt{awstestprop} on a two-dimensional design with $5000 \times 5000$ points and the same kernels as in Equation~(\ref{eq:kernels}).
The maximal location bandwidth~$h^{(k^*)}$ was set to~$50$ requiring~$38$ iteration steps. 
Running the simulation with different parameters~$\theta$ yield exactly the same plots. In Figure~\ref{fig:Gauss}, we show the results for the Gaussian distribution with three different values of~$\lambda$. In Figure~\ref{fig:Exp}, we consider the same setting w.r.t. the exponential distribution.

\begin{figure}
 \includegraphics[width = 0.32\textwidth]{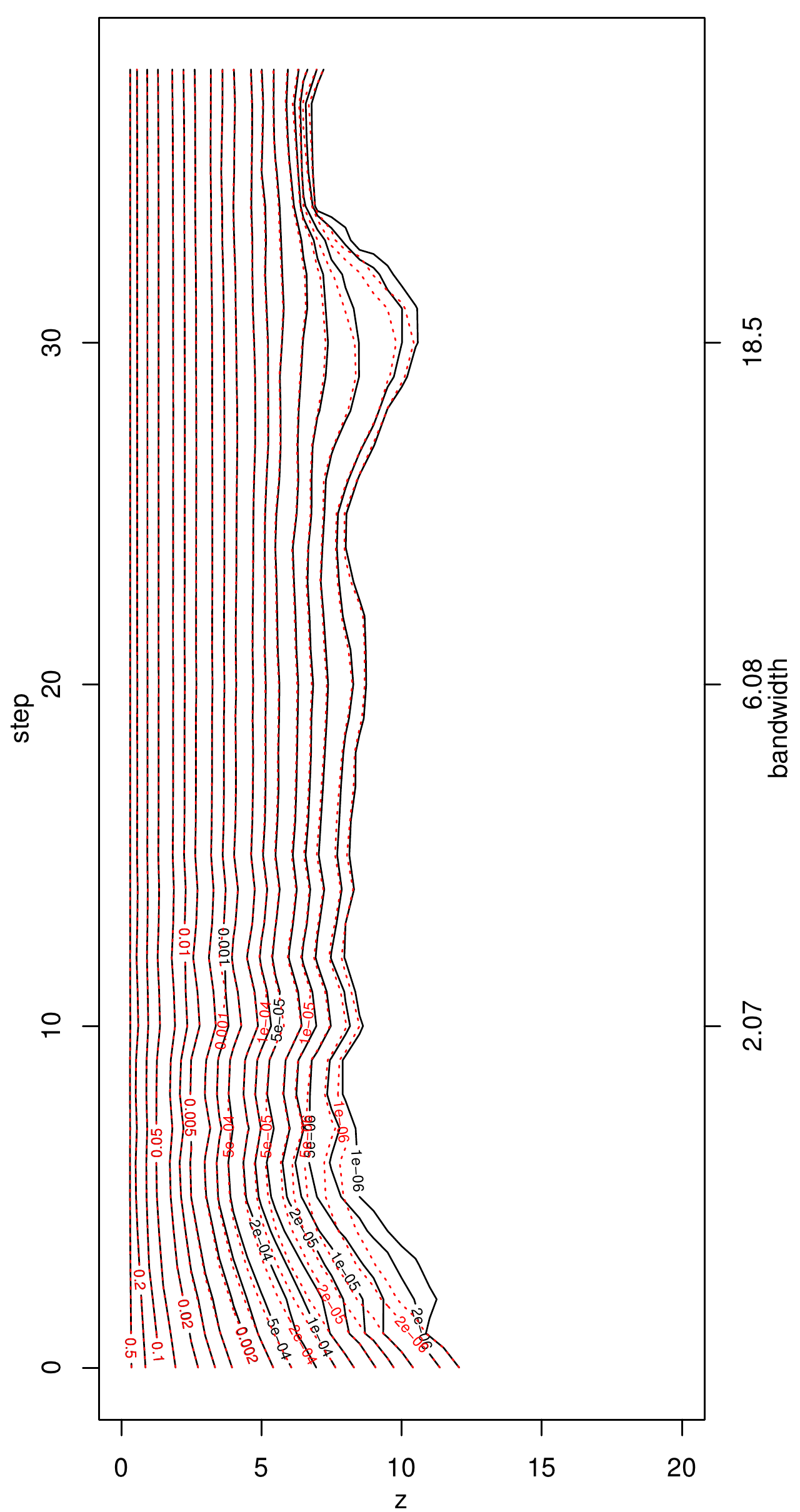} \hspace{1 pt}
 \includegraphics[width = 0.32\textwidth]{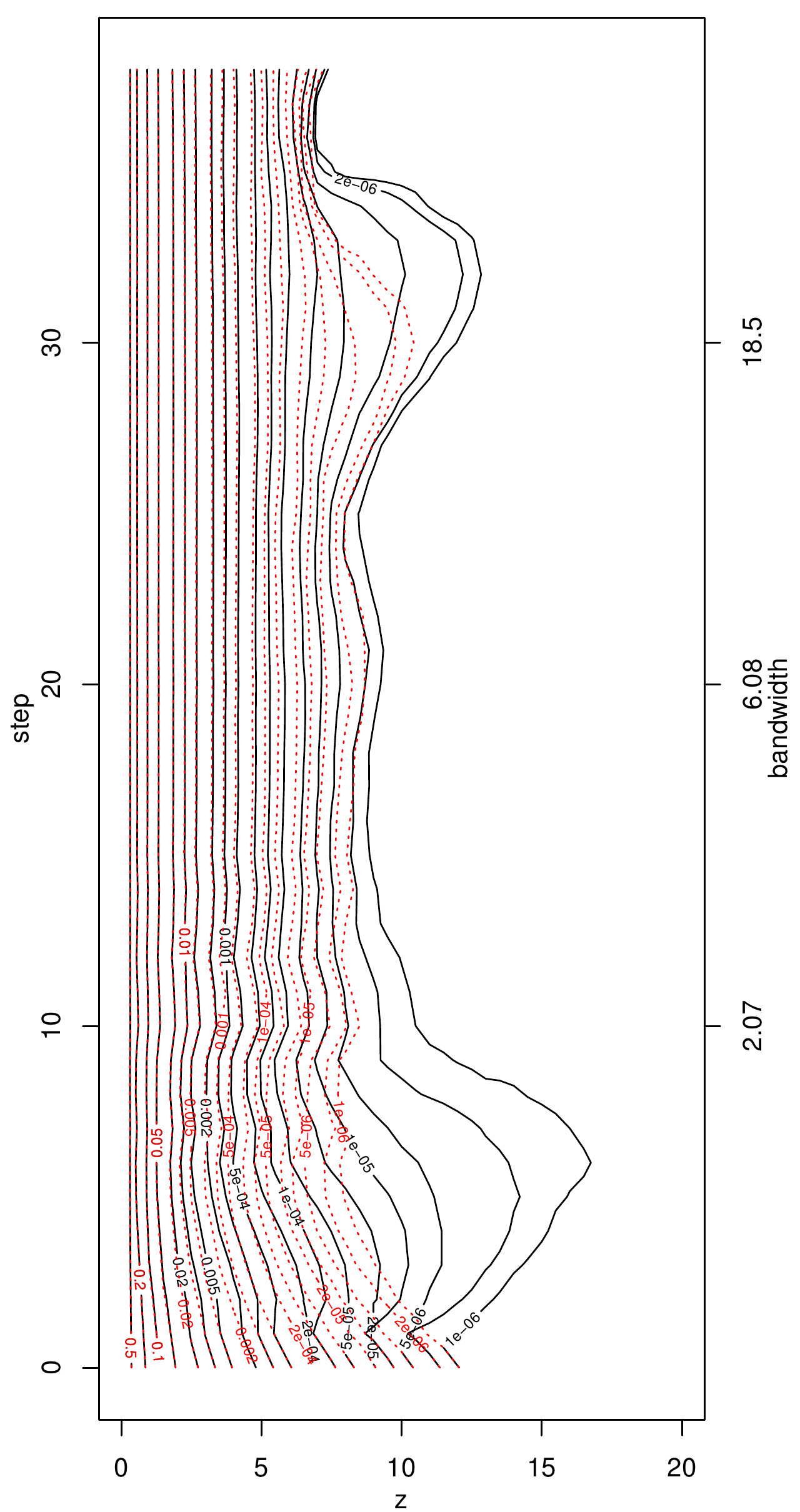} \hspace{1 pt}
 \includegraphics[width = 0.32\textwidth]{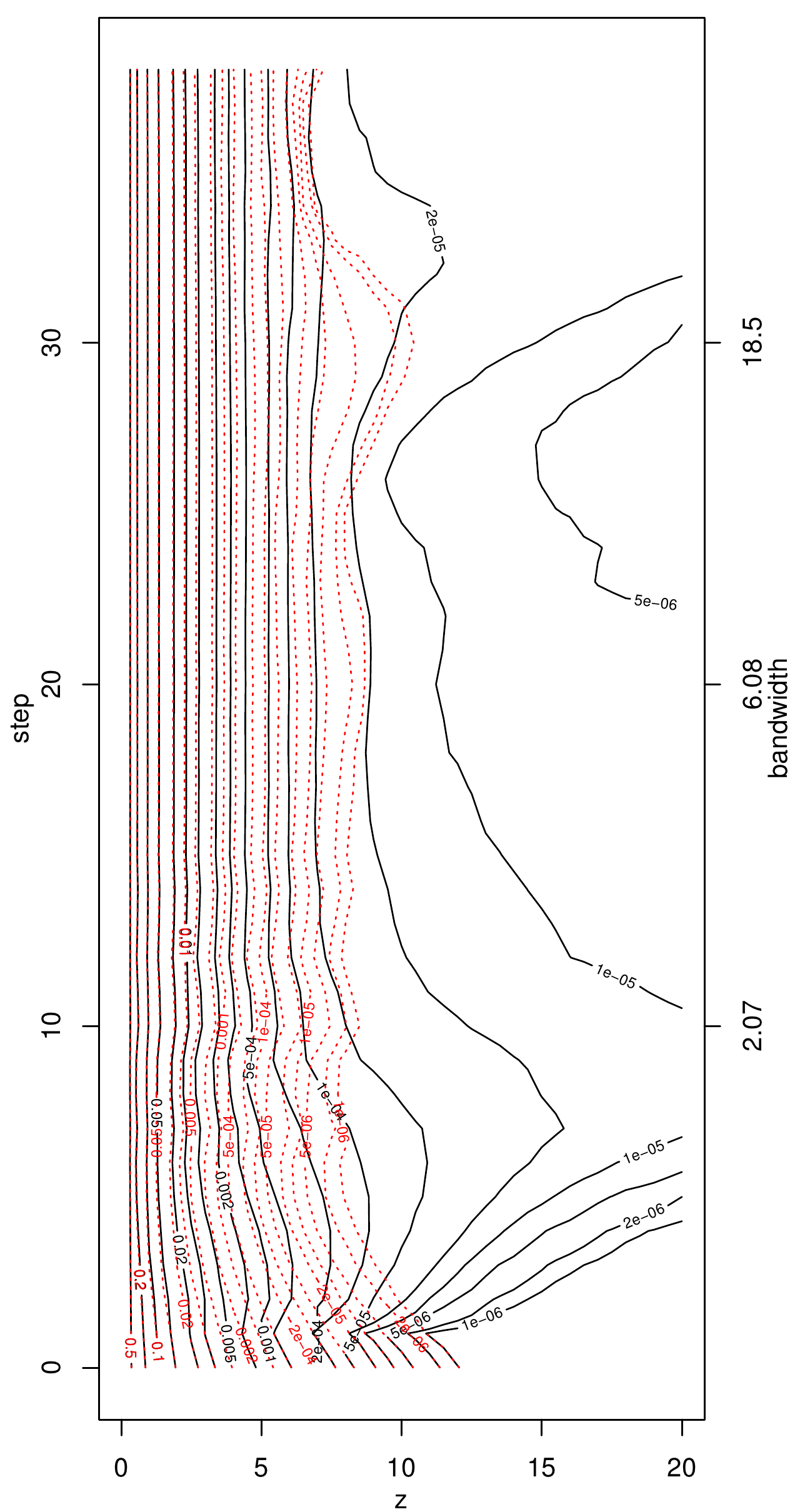}
 \caption{{\footnotesize Plots of the propagation condition for the Gaussian distribution with (f.l.t.r.) $\lambda = 22.4, 13.6, 9.72$. The isolines of the probability~$p$ for values between~$10^{-6}$ and~$0.5$ are plotted w.r.t. the location bandwidth~$h^{(k)}$ described by the iteration step~$k$ and the corresponding value $z = \mathfrak{Z}_{\lambda} (k, p; \theta = 1)$. The black solid lines represent the isolines of the adaptive estimator, the red dotted lines correspond to the non-adaptive estimator. }}
 \label{fig:Gauss}
\end{figure}

\begin{figure}
 \includegraphics[width = 0.32\textwidth]{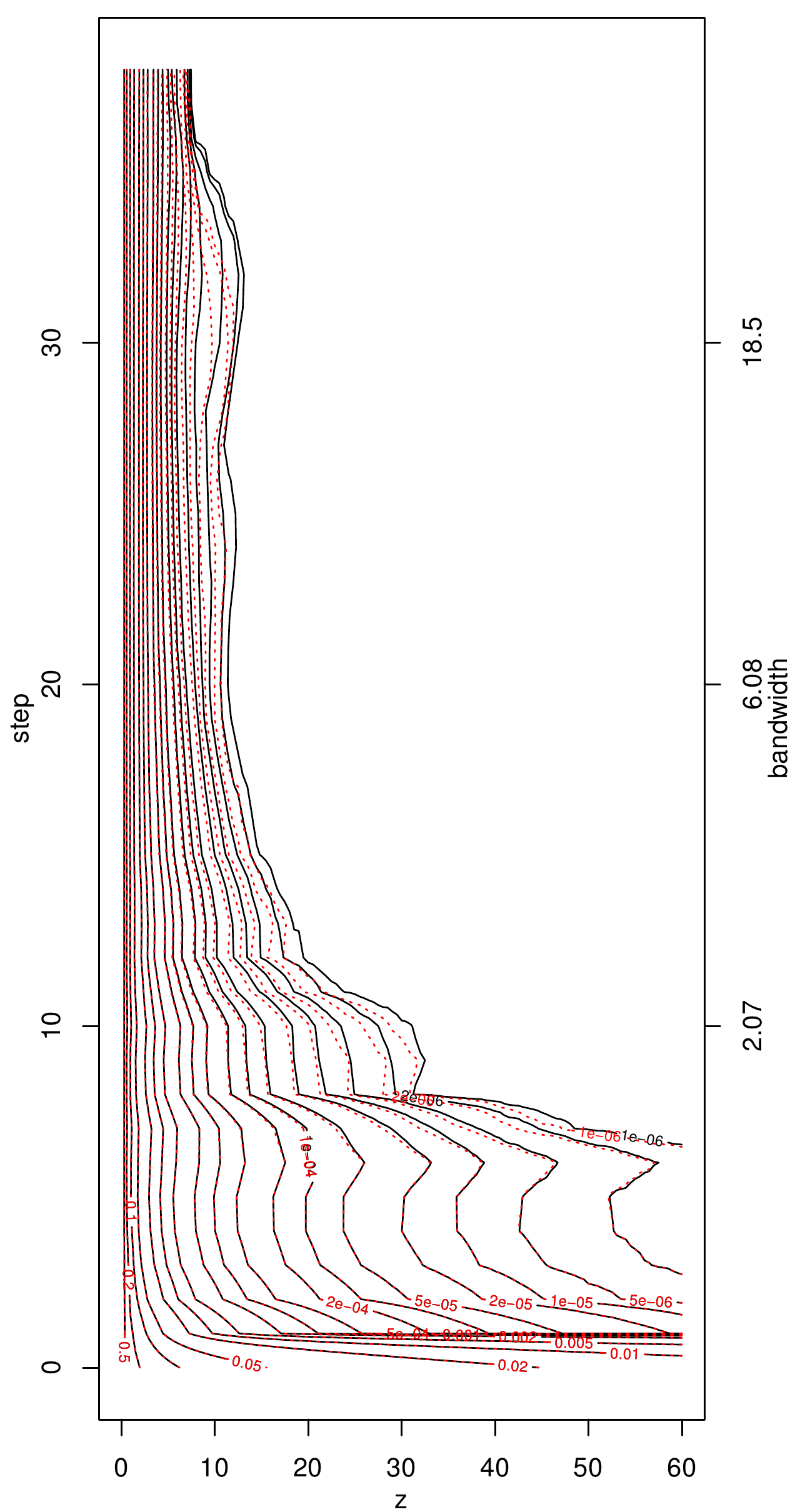} \hspace{1 pt}
 \includegraphics[width = 0.32\textwidth]{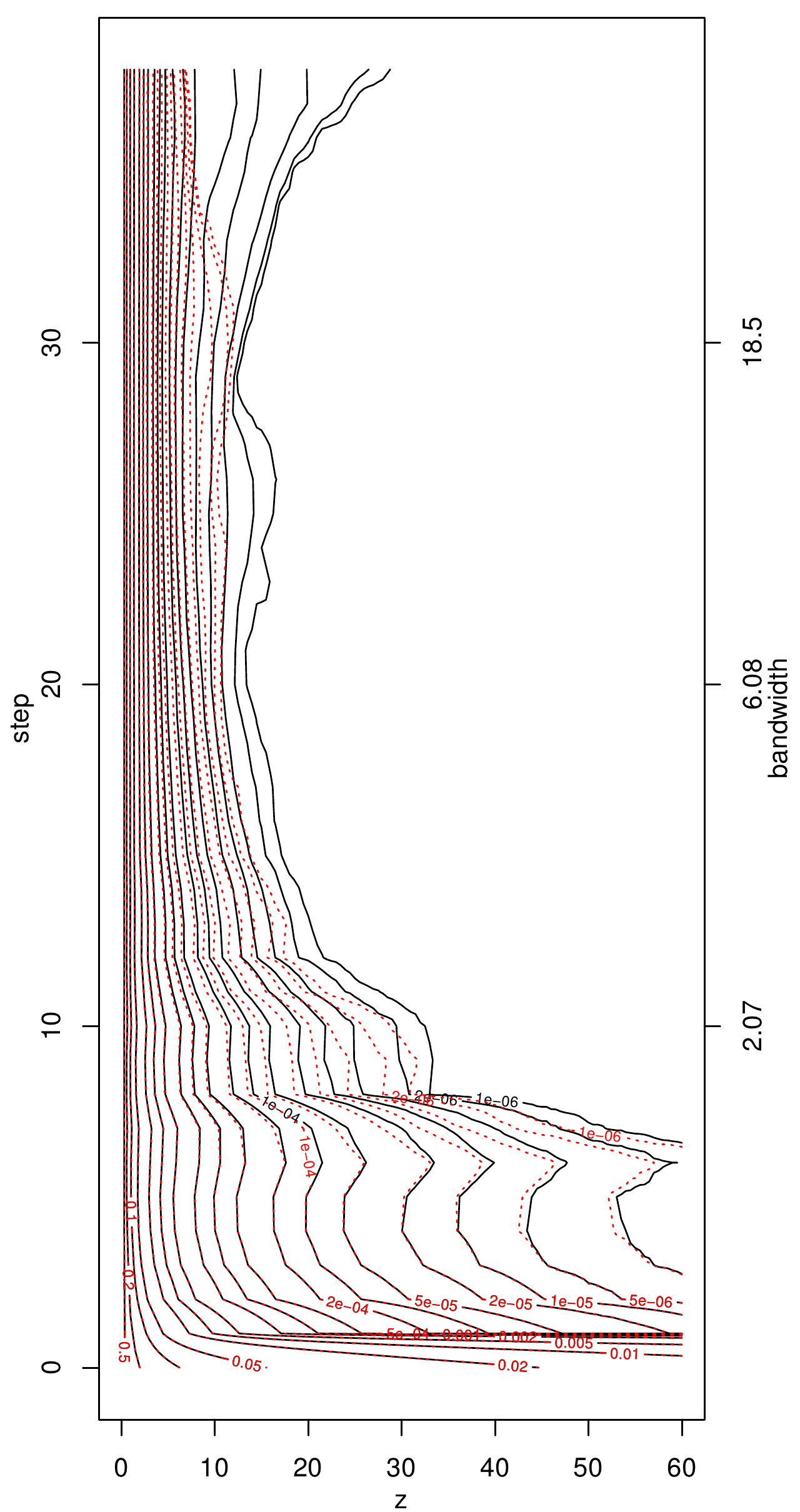} \hspace{1 pt}
 \includegraphics[width = 0.32\textwidth]{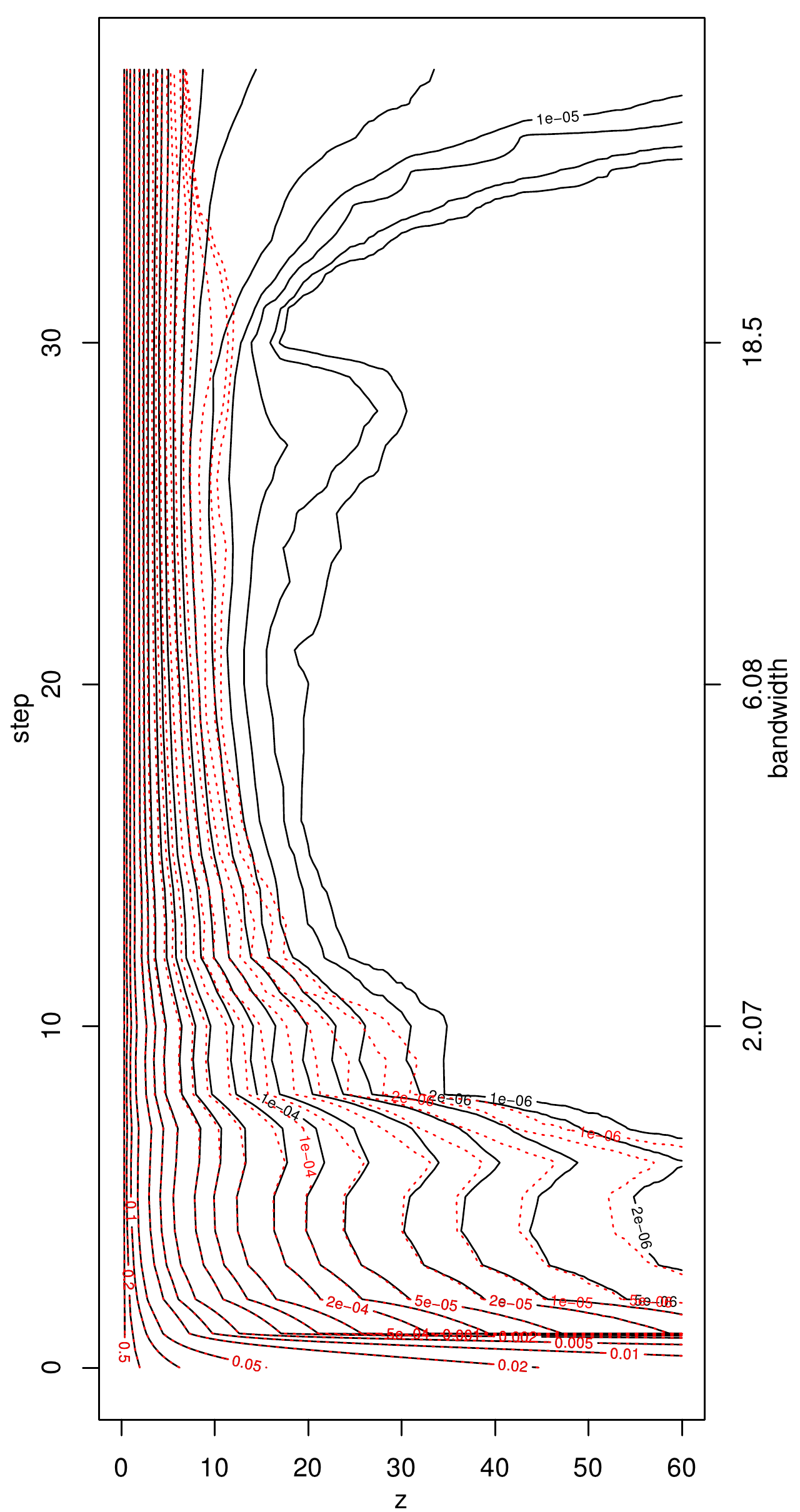}
 \caption{{\footnotesize Plots of the propagation condition for the exponential distribution with (f.l.t.r.) $\lambda = 13.2, 10.2, 8.78$.}}
 \label{fig:Exp}
\end{figure}


Finally, we discuss how to proceed if the function~$ \mathfrak{Z}_{\lambda} $ depends on the parameter~$\theta$. 
We want to ensure that our choice of the adaptation bandwidth~$\lambda$ is in accordance with the propagation condition for all~$\theta_i$, $i \in \lbrace 1,...,n \rbrace$. Certainly, we do not know the exact parameters~$\lbrace \theta_i \rbrace_i$. Instead, we could analyze the monotonicity of the optimal choice~$\lambda_{\mathrm{opt}} (\epsilon, \theta)$, see Remark~\ref{rem:propCond}, for a fixed constant~$\epsilon>0$ and varying parameters~$\theta \in \Theta$. 
For the sake of simplicity, we prefer to observe for a fixed adaptation bandwidth~$\lambda$ and varying parameters~$\theta$ for which probabilities~$p$ the propagation condition is satisfied. This can be done by the function \texttt{awstestprop} in the R-package \texttt{aws}. Thus, we get for every~$\theta$ the corresponding value~$\epsilon_{\lambda}(\theta)$. Then, $\epsilon_{\lambda} (\theta) \geq \epsilon_{\lambda} (\theta')$ indicates that the parameter~$\theta$ requires a larger adaptation bandwidth than the parameter~$\theta'$. 
Taking the range of our observations into account, 
we tempt to identify a finite number of parameters $\theta^* \in \Theta$ such that every~$\lambda$ that satisfies the propagation condition 
for these parameters $\theta^* \in \Theta$ remains valid with high probability for the unknown parameters~$\theta_i$, $i \in \lbrace 1,...,n \rbrace$.

For  observations following a Poisson distribution it turned out that different parameters~$\theta$ yield comparable propagation levels~$\epsilon_{\lambda} (\theta)$, even though the resulting isolines differ clearly. 
This is illustrated in Figure~\ref{fig:Poisson}, where we consider the same kernels as in Equation~\eqref{eq:kernels}, a regular design with $5000 \times 5000$ points, and $h^{(k^*)} = 50$, i.e.~$38$ iteration steps. In case of Bernoulli distributed observations it seems to be recommendable to ensure the propagation condition for~$\theta^* := 0.5$.
In both cases the implemented algorithm avoids that the
Kullback-Leibler divergence becomes infinity by slightly shifting the estimator.

\begin{figure}[p]
 \includegraphics[width = 0.245\textwidth]{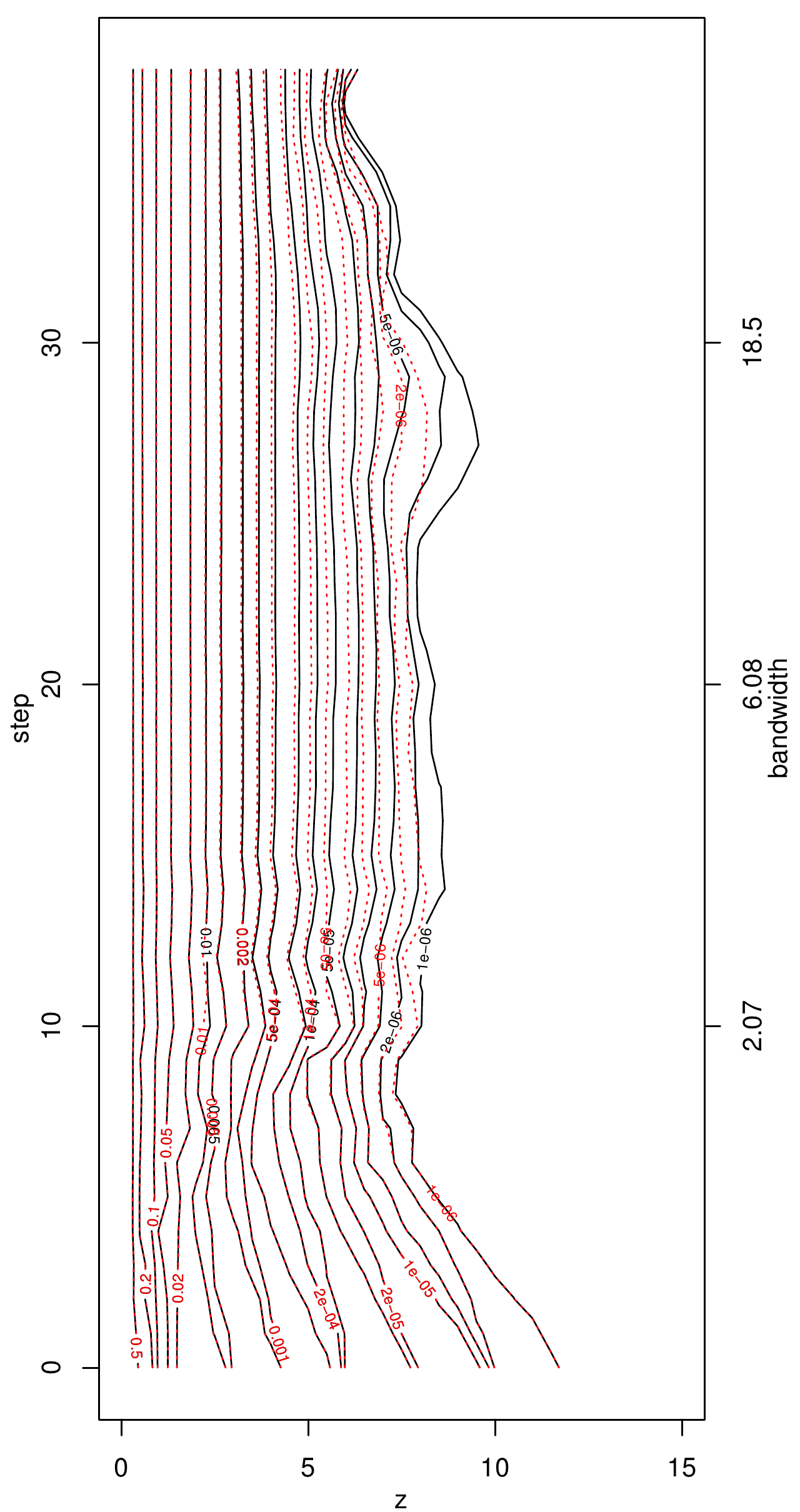}
 \includegraphics[width = 0.245\textwidth]{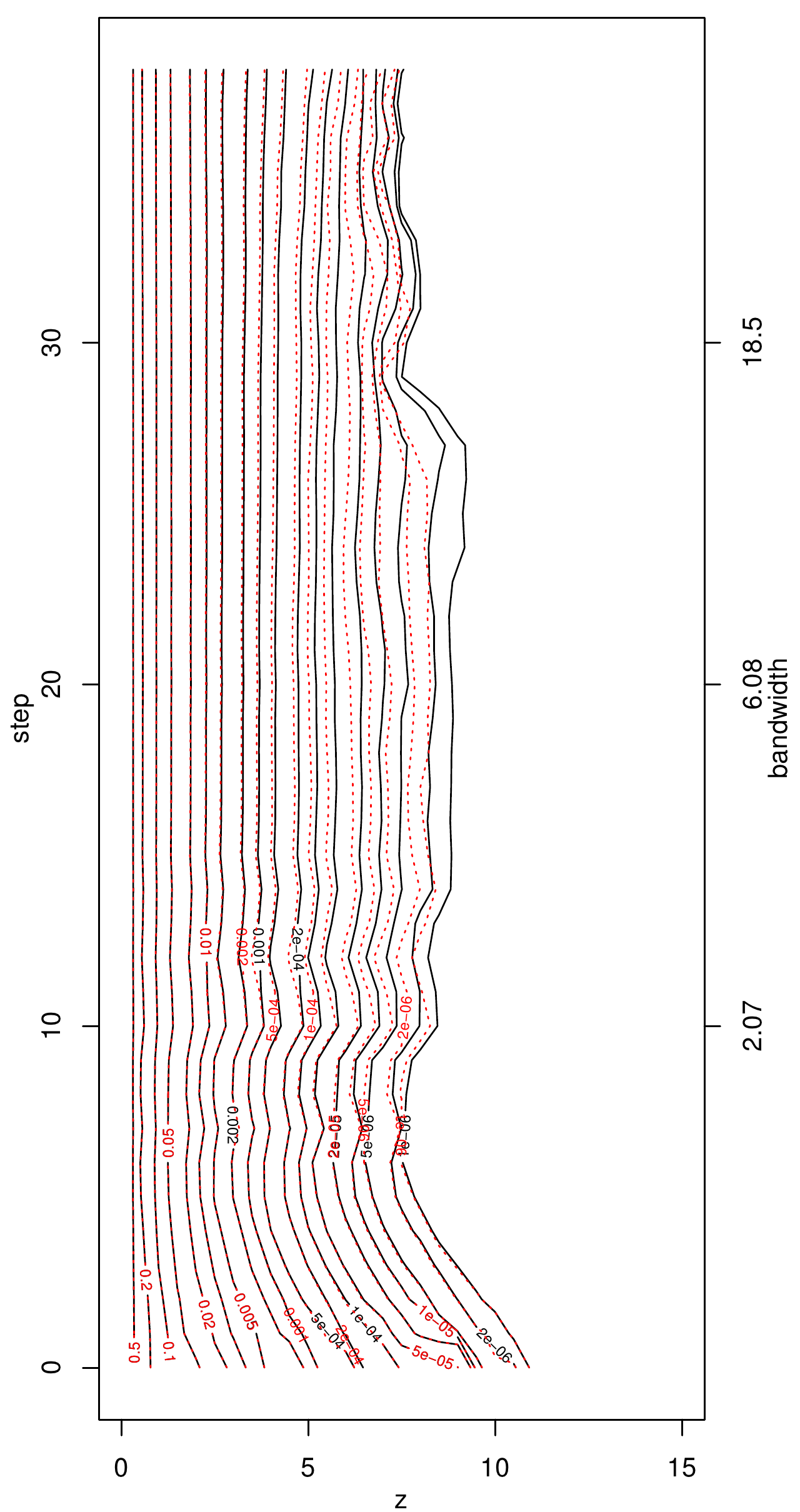}
 \includegraphics[width = 0.245\textwidth]{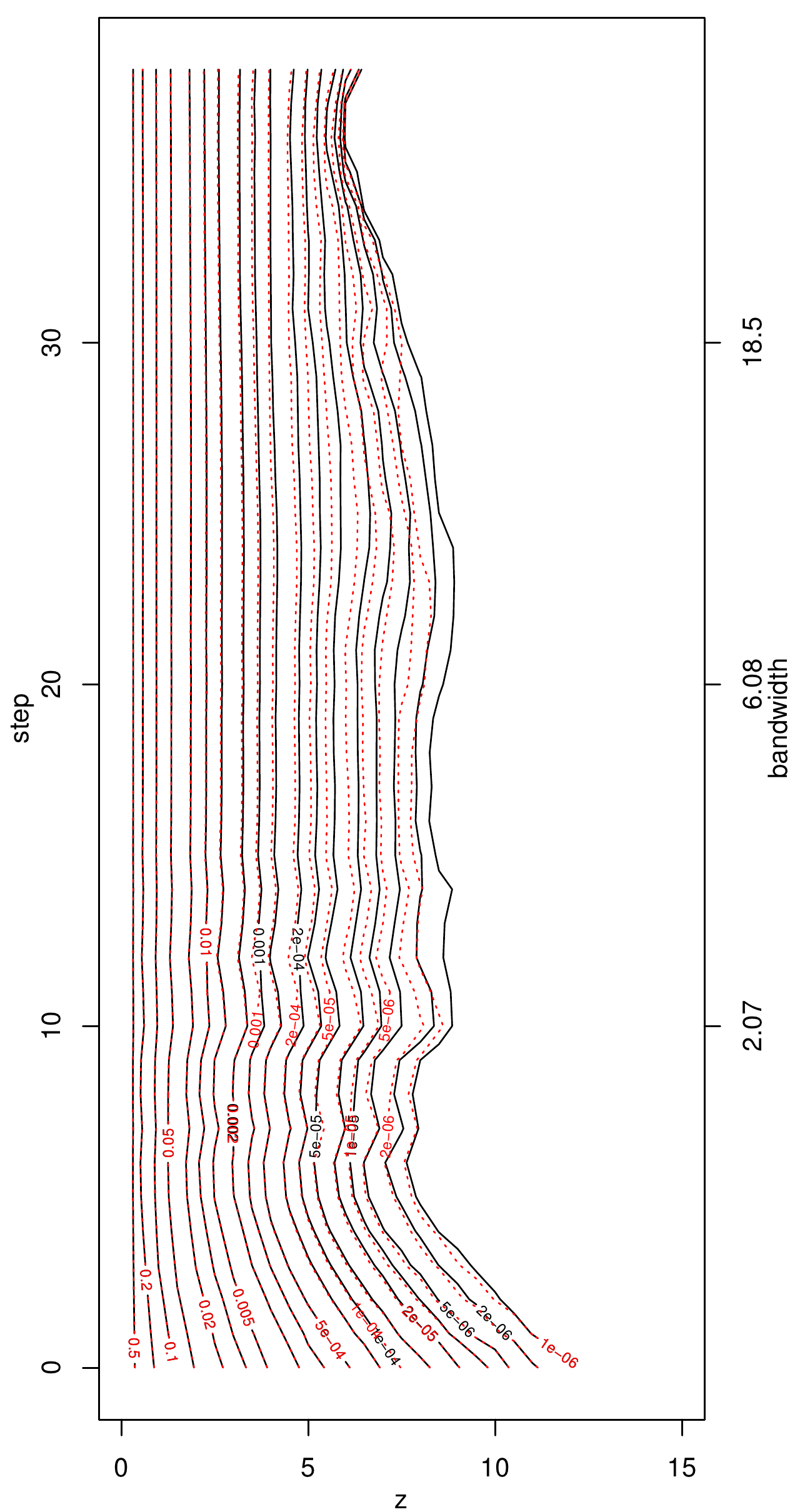}
 \includegraphics[width = 0.245\textwidth]{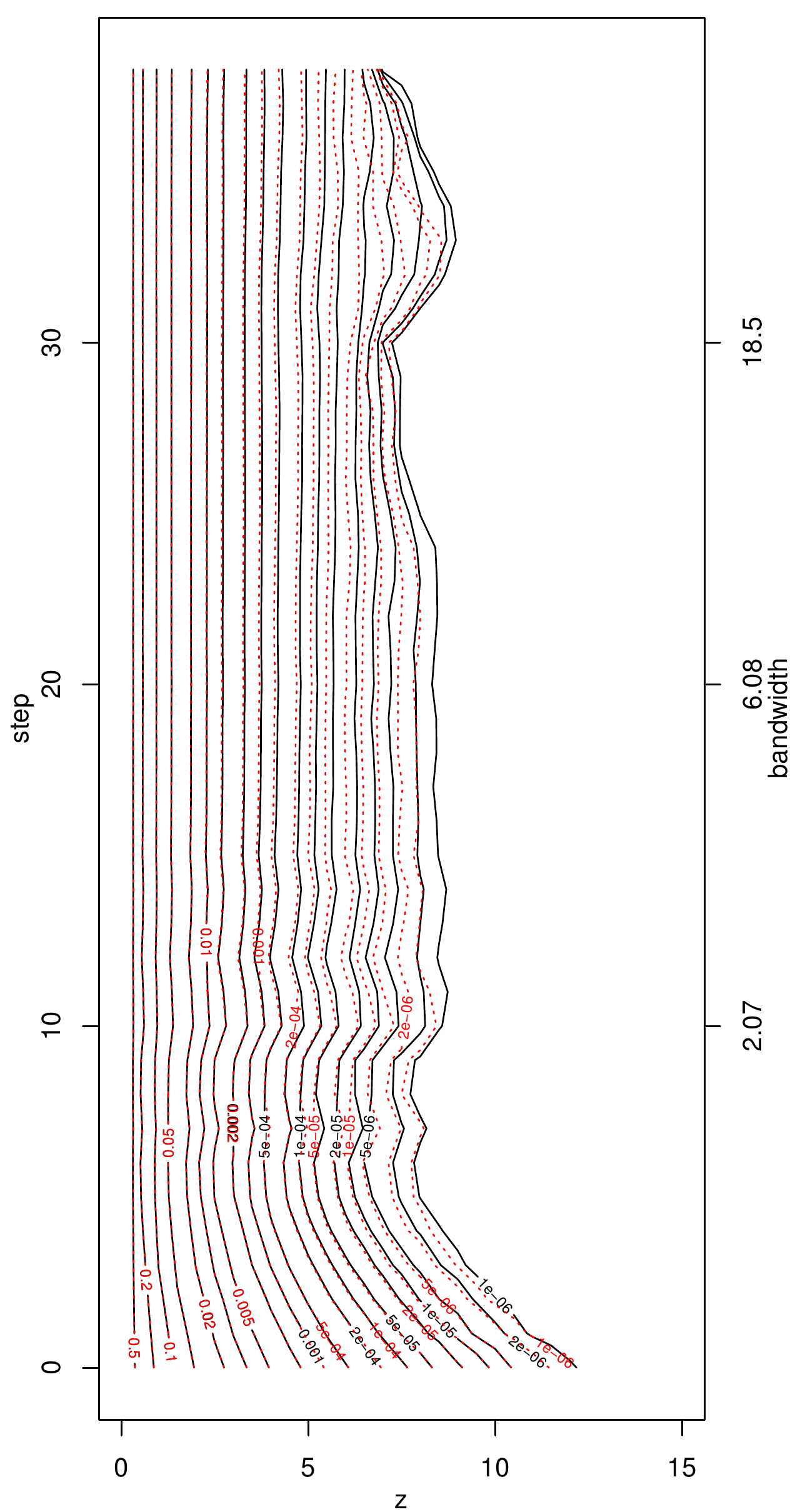}\\
 \includegraphics[width = 0.245\textwidth]{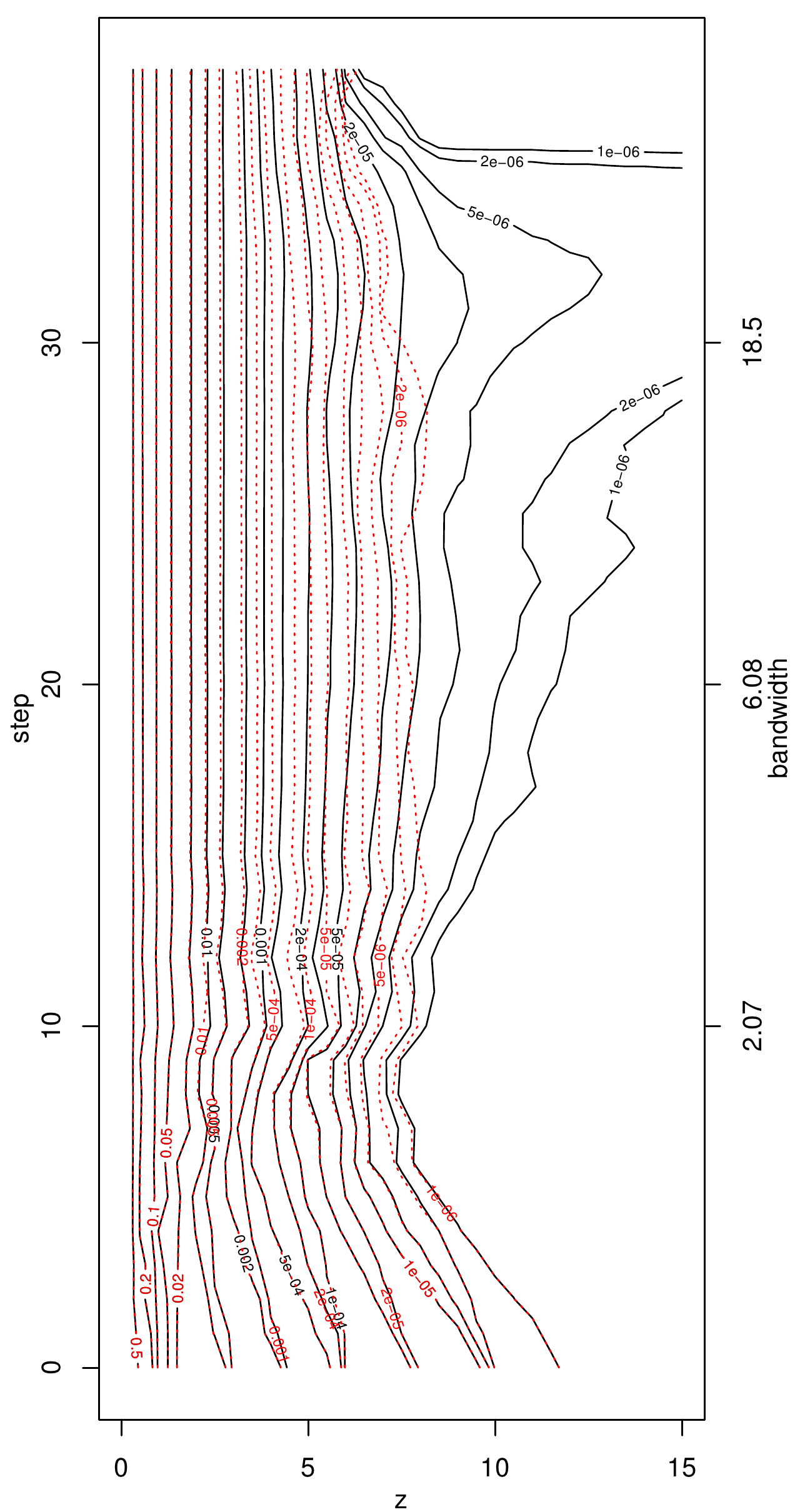}
 \includegraphics[width = 0.245\textwidth]{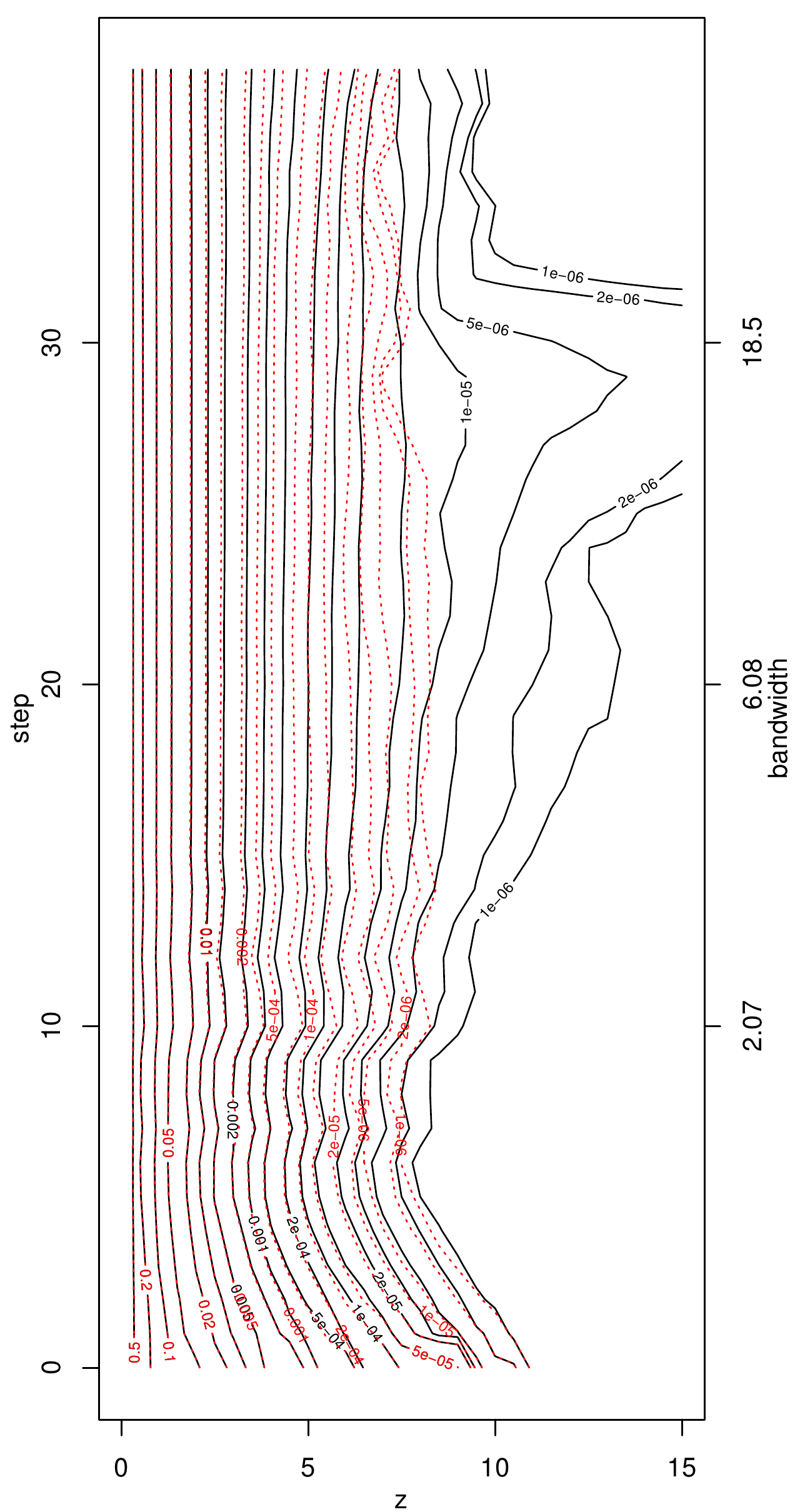}
 \includegraphics[width = 0.245\textwidth]{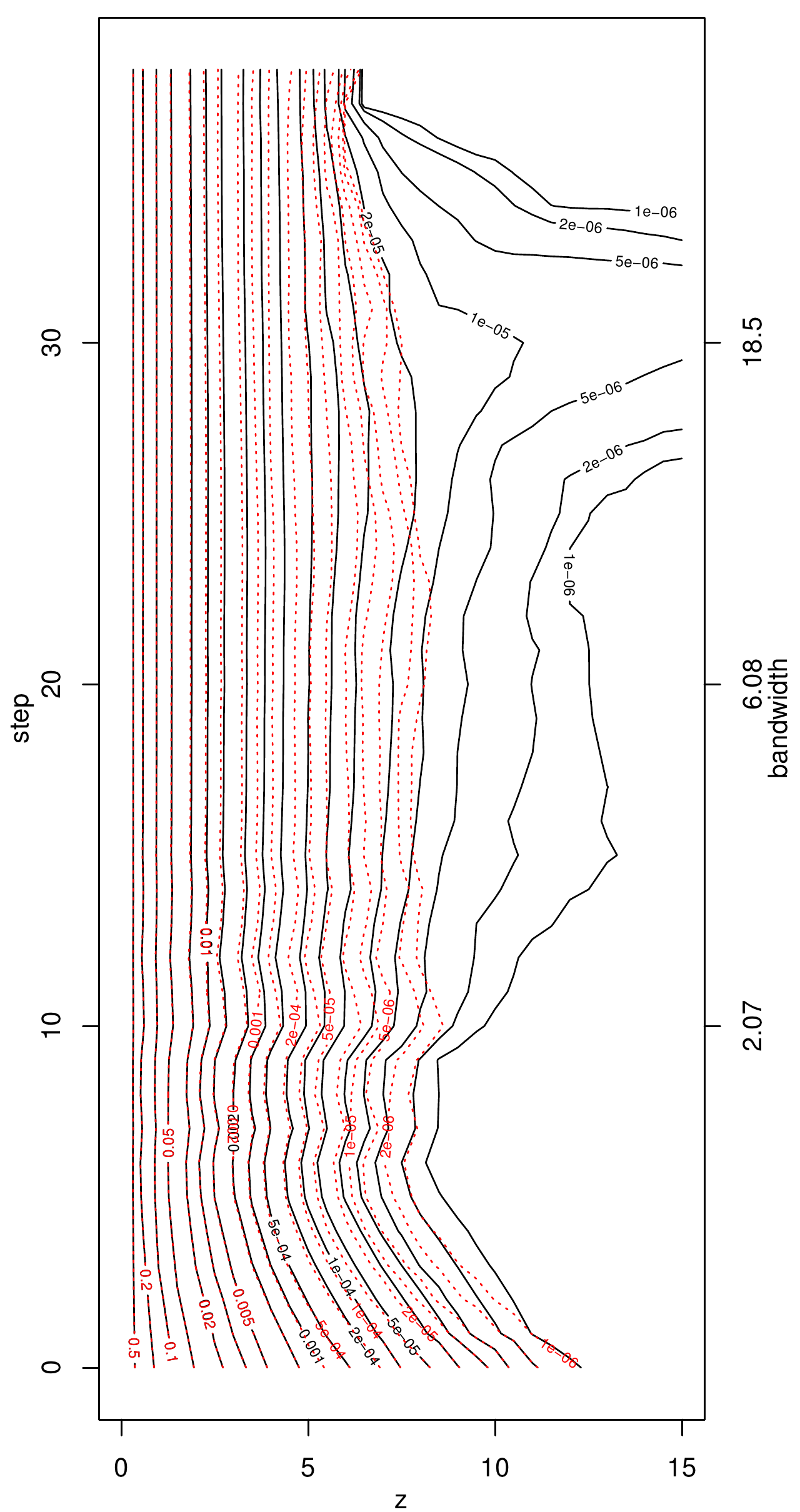}
 \includegraphics[width = 0.245\textwidth]{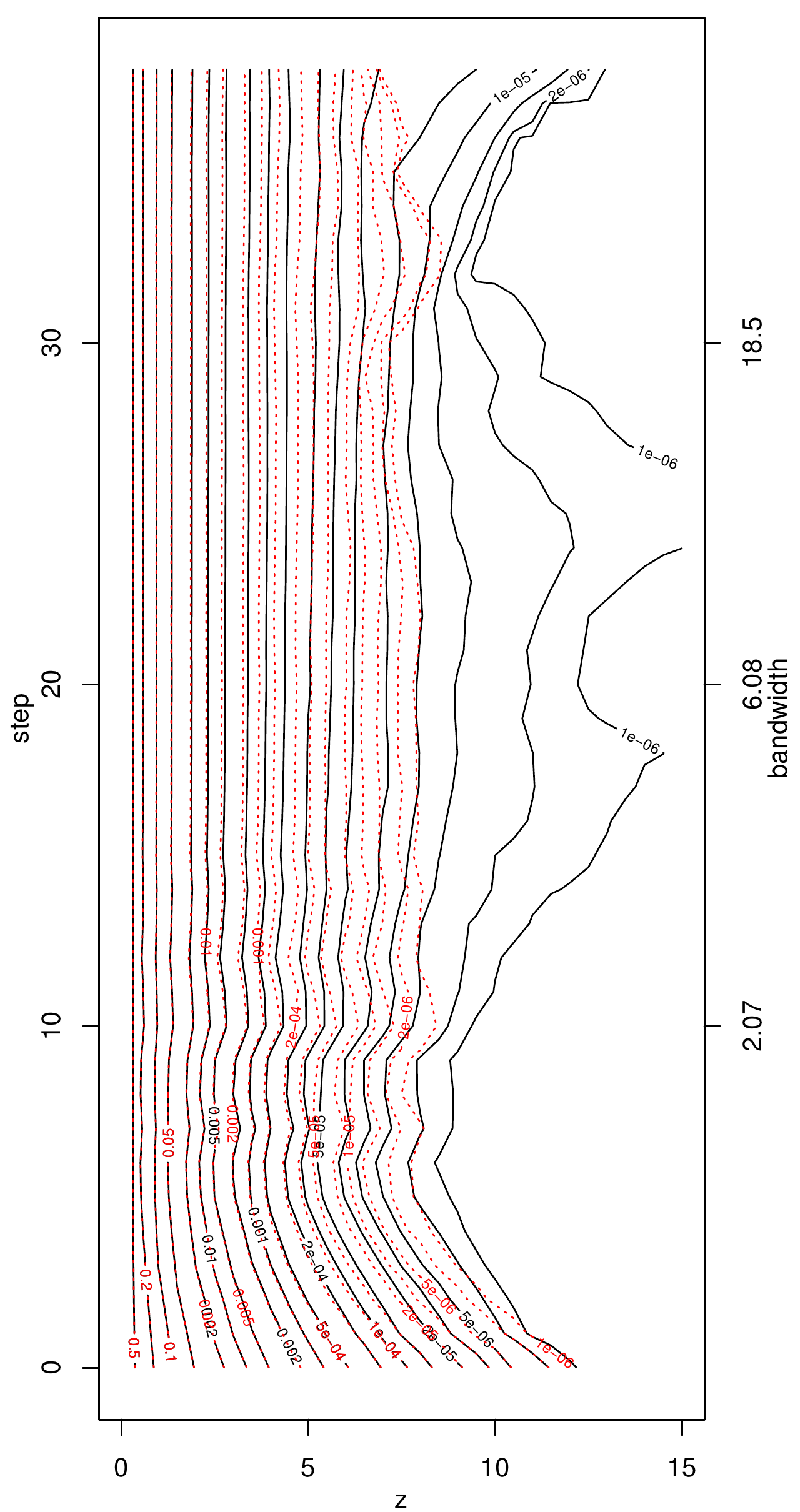}\\
 \includegraphics[width = 0.245\textwidth]{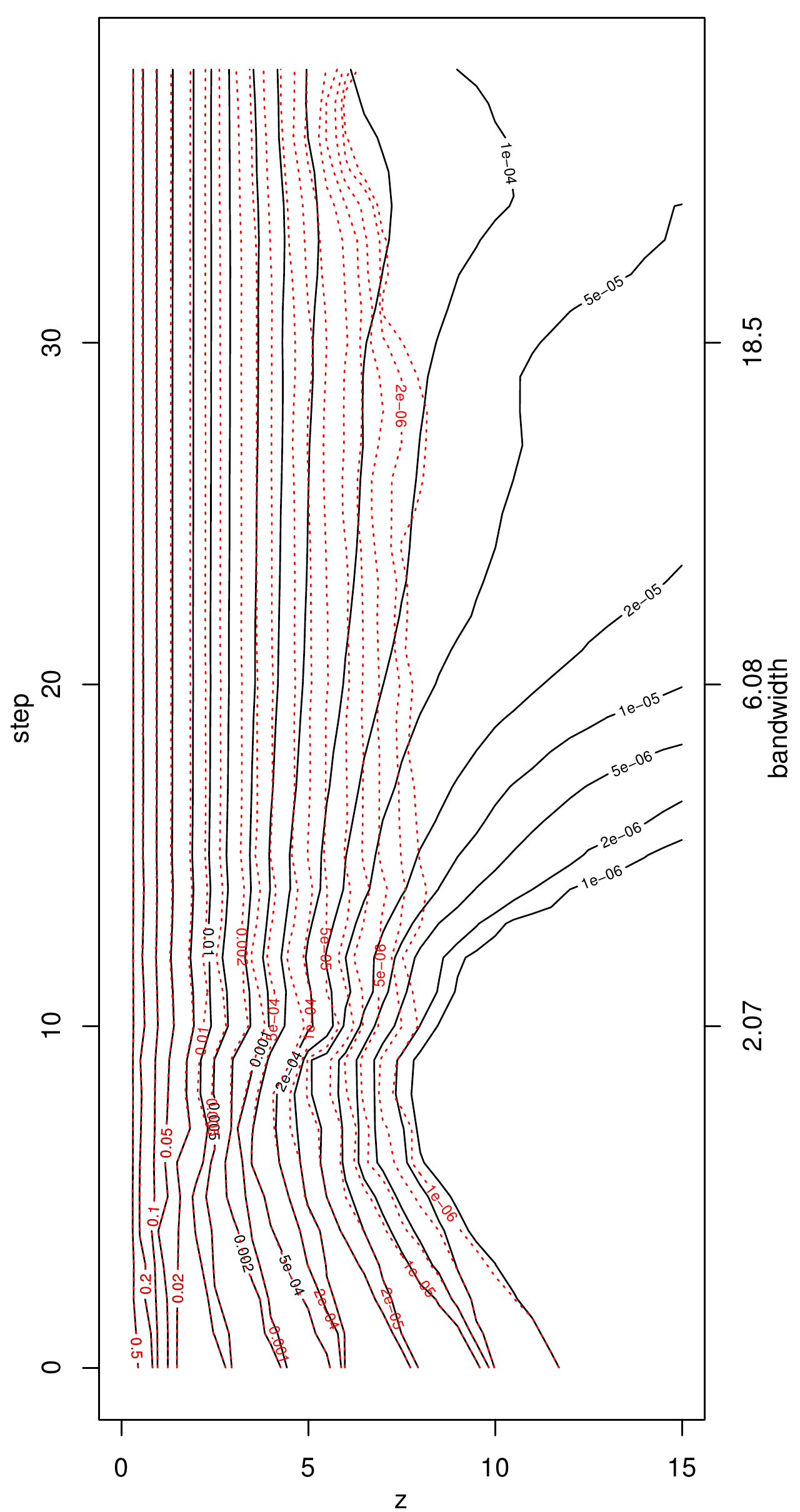}
 \includegraphics[width = 0.245\textwidth]{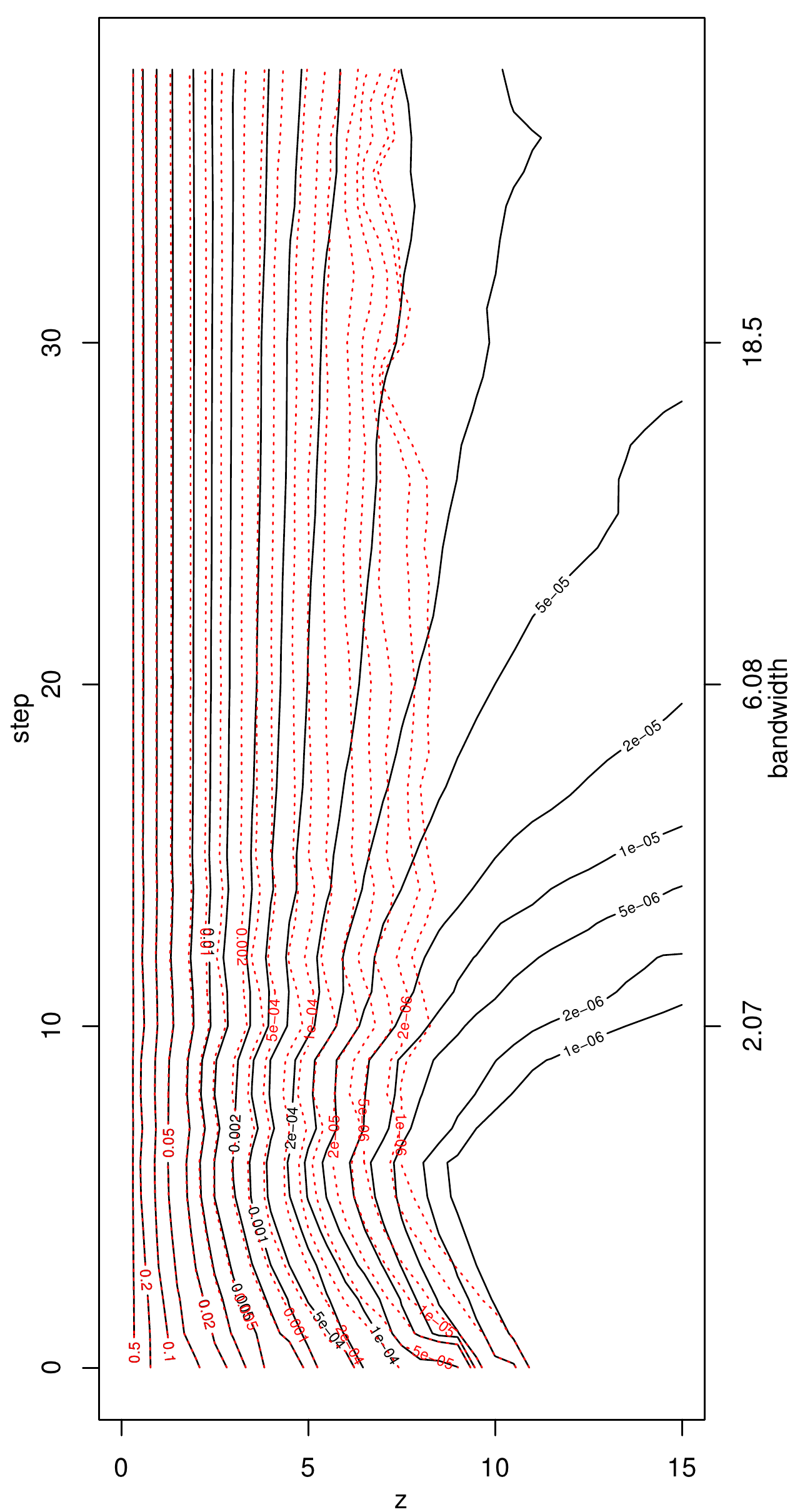}
 \includegraphics[width = 0.245\textwidth]{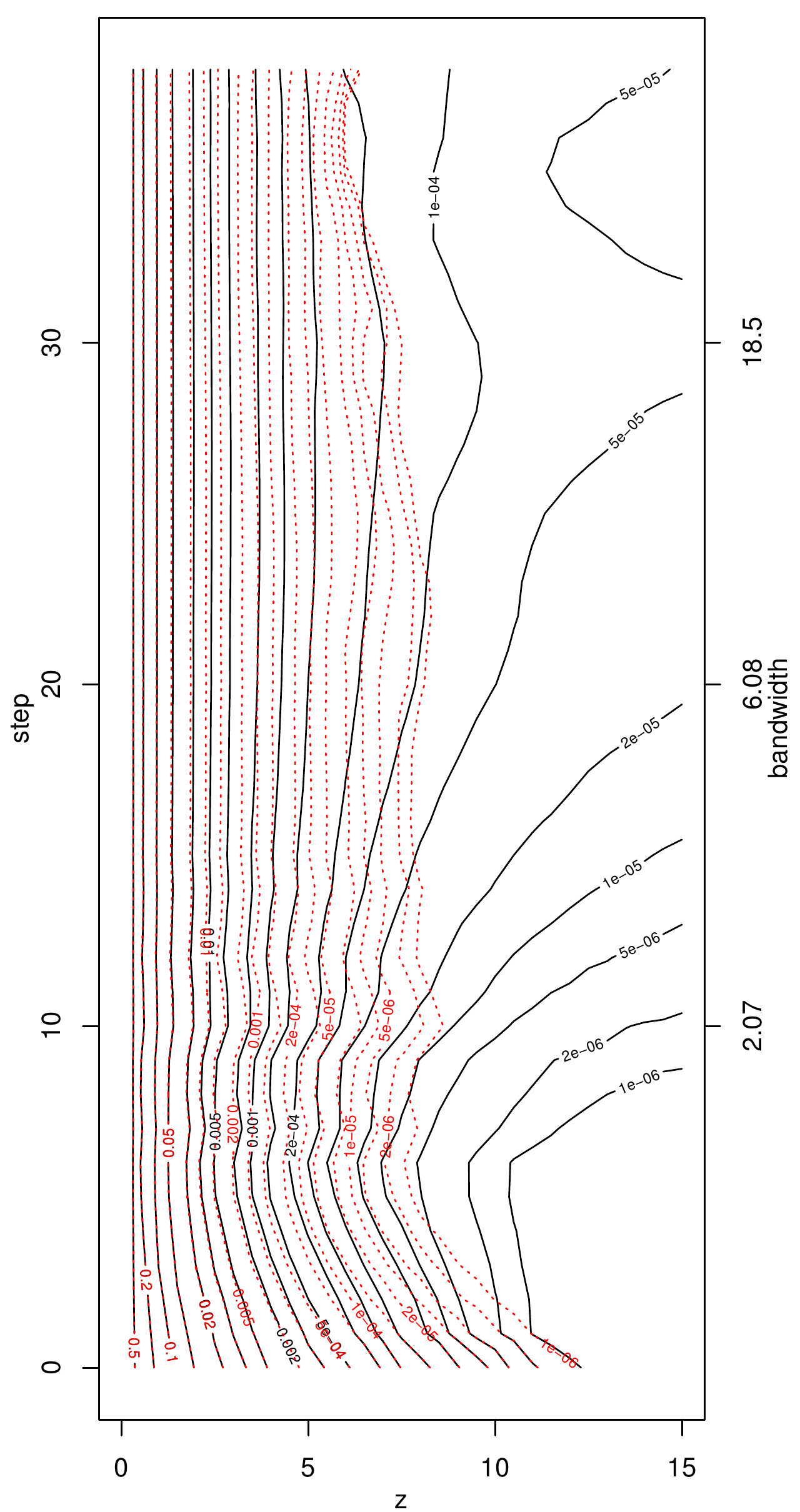}
 \includegraphics[width = 0.245\textwidth]{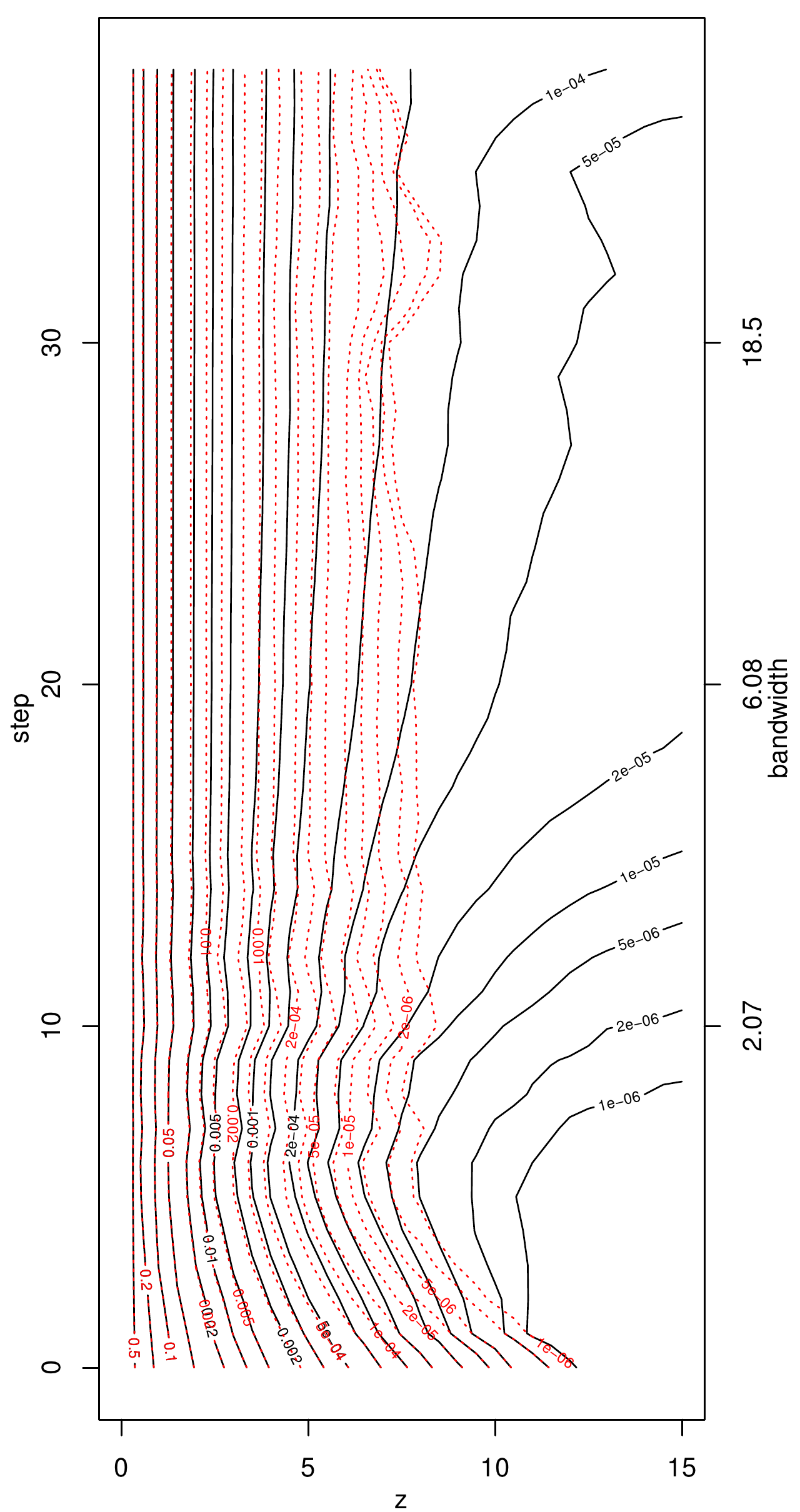}
 \caption{{\footnotesize Plots of the propagation condition for the Poisson distribution with (f.l.t.r.) $\theta = 1, 10, 100, 1000$ and (from top to bottom) $\lambda = 13.2, 9.88, 7.69$ yielding $\epsilon_{13.2} (\theta) \leq 10^{-6}$, $\epsilon_{9.88} (\theta) \approx 5 \cdot 10^{-5}$, and $\epsilon_{7.69} (\theta) \approx  5 \cdot 10^{-4}$}}
 \label{fig:Poisson}
\end{figure}

\subsection[In practice]{The propagation condition in practice}\label{sec:approxPropCond}
\label{sec:propCondPraxis}

The propagation condition 
is based on the function~$\mathcal{Z}_{\lambda}$. This depends on the probability 
$\mathbb{P} \left( \overline{N}_i^{(k)} \mathcal{KL}(\tilde{\theta}_i^{(k)}(\lambda), \theta) > z \right)$ 
which cannot be calculated exactly. Therefore, in practice, we need an appropriate approximation. This can be achieved by the relative frequency of design points~$X_i \in \mathcal{X}$ with 
$\overline{N}_i^{(k)} \mathcal{KL}(\tilde{\theta}_i^{(k)}(\lambda), \theta) > z $ as we discuss in Definition~\ref{def:approx} and Lemma~\ref{lem:approx}. In order to avoid boundary effects, we restrict 
the approximation to the interior of the design space, that is to all points~$X_i \in \mathcal{X}$ where the final neighborhood~$U_i^{(k^*)} $ is not restricted by the boundaries of the considered compartment~$\lbrace X_i \rbrace_{i=1}^n$. This subset of~$\lbrace X_i \rbrace_{i=1}^n$ is denoted by~$\mathcal{X}^{0}$. Without loss of generality we assume that $\mathcal{X}^0 = \lbrace X_i \rbrace_{i=1}^{n_{0}}$ for some~$n_{0} < n$.

\begin{Def}[Approximation]\label{def:approx}
We consider the same setting as in Definition~\ref{def:propCond} and set
\[
	M_{\lambda}^{(k)}(z) := \lbrace X_i \in \mathcal{X}^{0} : \overline{N}_i^{(k)} \mathcal{KL}(\tilde{\theta}_i^{(k)}(\lambda), \theta) > z \rbrace.
\]
Then we define the following estimator
\begin{equation}\label{eq:estimator}
	\hat{p}_{\lambda}^{(k)}(z) := n_0^{-1} \sum_{i=1}^{n_{0}} \boldsymbol{1}_{M_{\lambda}^{(k)}(z)}(X_i)
\end{equation}
where~$\boldsymbol{1}$ denotes the indicator function with~$\boldsymbol{1}_M(x) =1$ if~$x \in M$ and~$\boldsymbol{1}_M(x) = 0$, else. 
\end{Def}

\begin{Lem}\label{prop:approx}\label{lem:approx}
We consider the same setting as in Definition~\ref{def:propCond} and suppose the conditions of Proposition~\ref{prop:propCond} to be satisfied. Then, it holds for each $j \in \lbrace 1,...,n_0 \rbrace $ that
\[
	\left| \mathbb{E} \left[ \hat{p}_{\lambda}^{(k)}(z) \right] - \mathbb{P} \left( \overline{N}_j^{(k)} \mathcal{KL} (\tilde{\theta}_{j}^{(k)}(\lambda), \theta ) > z \right) \right| 
	\leq \max \lbrace 2 e^{-z}, \epsilon \rbrace
\]
and
\begin{equation} \label{eq:propCond3}
	\mathrm{Var} \left[ \hat{p}_{\lambda}^{(k)}(z) \right] \leq \max \lbrace 2 e^{-z}, \epsilon \rbrace.
\end{equation}
\end{Lem}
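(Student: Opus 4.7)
The plan is to reduce both assertions to the pointwise bound supplied by Proposition~\ref{prop:propCond}. Write, for each design point, the individual excess probability
\[
	p_i^{(k)} := \mathbb{P}\!\left(\overline{N}_i^{(k)} \mathcal{KL}(\tilde{\theta}_i^{(k)}(\lambda), \theta) > z\right), \qquad i \in \{1,\dots,n_0\},
\]
so that $\mathbb{E}[\boldsymbol{1}_{M_\lambda^{(k)}(z)}(X_i)] = p_i^{(k)}$. Under the hypotheses of Proposition~\ref{prop:propCond}, every $p_i^{(k)}$ satisfies $p_i^{(k)} \leq \max\{2e^{-z},\epsilon\}$.

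For the bias statement, I would simply rewrite $\mathbb{E}[\hat{p}_\lambda^{(k)}(z)] = n_0^{-1}\sum_{i=1}^{n_0} p_i^{(k)}$ as a convex combination of the $p_i^{(k)}$. Both this average and the single value $p_j^{(k)}$ lie in the interval $[0,\max\{2e^{-z},\epsilon\}]$, so their difference is bounded in absolute value by the length of that interval, yielding the first inequality.

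For the variance bound, the natural worry is that the indicators $\boldsymbol{1}_{M_\lambda^{(k)}(z)}(X_i)$ are dependent (the estimators $\tilde{\theta}_i^{(k)}$ share observations), so one cannot invoke a product structure. However, this dependence can be circumvented entirely by exploiting that $\hat{p}_\lambda^{(k)}(z) \in [0,1]$. Indeed, $\hat{p}^2 \leq \hat{p}$ pointwise, which gives
\[
	\mathrm{Var}\!\left[\hat{p}_\lambda^{(k)}(z)\right] \leq \mathbb{E}\!\left[\bigl(\hat{p}_\lambda^{(k)}(z)\bigr)^2\right] \leq \mathbb{E}\!\left[\hat{p}_\lambda^{(k)}(z)\right] = n_0^{-1}\sum_{i=1}^{n_0} p_i^{(k)} \leq \max\{2e^{-z},\epsilon\},
\]
where the final step again uses Proposition~\ref{prop:propCond}. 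The only mildly subtle point is recognising that bounded-by-one together with the uniform mean bound obviates any need to analyse the covariance structure of the indicators; once this observation is in place both claims are essentially immediate.
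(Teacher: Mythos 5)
Your proposal is correct and takes essentially the same route as the paper: both claims are reduced to the pointwise bound of Proposition~\ref{prop:propCond} combined with the fact that all quantities involved lie in $[0,\max\{2e^{-z},\epsilon\}]$ resp.\ $[0,1]$. The only cosmetic difference is in the variance step, where the paper first applies the triangle inequality in $\mathbb{L}^2$ to get $\mathrm{Var}\bigl[\hat{p}_{\lambda}^{(k)}(z)\bigr] \leq \max_i \mathrm{Var}\bigl[\boldsymbol{1}_{M_{\lambda}^{(k)}(z)}(X_i)\bigr]$ and then uses $\mathrm{Var}[X]\leq\mathbb{E}[X]$ for $[0,1]$-valued $X$, whereas you apply that same variance--mean bound directly to the average $\hat{p}_{\lambda}^{(k)}(z)$; both variants bypass the covariance structure of the dependent indicators and yield the identical bound.
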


\begin{proof}
It holds
\begin{eqnarray*}
	&&\left| \mathbb{E} \left[ \hat{p}^{(l)}(z) \right] - \mathbb{P} \left( \overline{N}_{j}^{(k)} \mathcal{KL} (\tilde{\theta}_{j}^{(k)}(\lambda), \theta ) > z \right) \right| \\
	&\leq& n_0^{-1} \sum_{i=1}^{n_0} \left| \mathbb{E} \left[ \boldsymbol{1}_{M_{(\lambda)}^{(k)}(z)}(X_i) \right] - \mathbb{P} \left( \overline{N}_{j}^{(k)} \mathcal{KL} (\tilde{\theta}_{j}^{(k)}(\lambda), \theta ) > z \right) \right| \\
	&\leq& \max_{i \in \lbrace 1,...,n_0 \rbrace} \left\{ \left| \mathbb{P} \left( \overline{N}_{i}^{(k)} \mathcal{KL} (\tilde{\theta}_{i}^{(k)}(\lambda), \theta ) > z \right) - \mathbb{P} \left( \overline{N}_{j}^{(k)} \mathcal{KL} (\tilde{\theta}_{j}^{(k)}(\lambda), \theta ) > z \right) \right| \right\}\\
	&\leq& \max_{i \in \lbrace 1,...,n_0 \rbrace} \mathbb{P} \left( \overline{N}_{i}^{(k)} \mathcal{KL} (\tilde{\theta}_{i}^{(k)}(\lambda), \theta ) > z \right)\\
	&\overset{\text{Prop.~\ref{prop:propCond}}}{\leq}& \max \lbrace 2 e^{-z}, \epsilon \rbrace.
\end{eqnarray*}
Furthermore, we get
\begin{eqnarray*}
	\mathrm{Var} \left[ \hat{p}_{\lambda}^{(k)}(z) \right] 
	&=& \left\| n_0^{-1} \sum_{i=1}^{n_{0}} \left(  \boldsymbol{1}_{M_{\lambda}^{(k)}(z)}(X_i) - \mathbb{E} \left[ \boldsymbol{1}_{M_{\lambda}^{(k)}(z)}(X_i) \right] \right) \right\|_{\mathbb{L}^2}^2 \\
	&\leq& \left( n_0^{-1} \sum_{i=1}^{n_{0}} \left\| \boldsymbol{1}_{M_{\lambda}^{(k)}(z)}(X_i) - \mathbb{E} \left[ \boldsymbol{1}_{M_{\lambda}^{(k)}(z)}(X_i) \right] \right\|_{\mathbb{L}^2} \right)^2 \\
	&\leq& \max_{i \in \lbrace 1,...,n_0 \rbrace} \mathrm{Var} \left[ \boldsymbol{1}_{M_{\lambda}^{(k)}(z)}(X_i) \right].
\end{eqnarray*}
Obviously, it holds for any random variable~$X$ with values in~$[0,1]$
that $\mathrm{Var}[X] \leq \mathbb{E}[X]$. 
By definition of~$M_{\lambda}^{(k)}(z)$ this yields
\begin{eqnarray*}
	\max_{i \in \lbrace 1,...,n_0 \rbrace} \mathbb{E} \left[ \boldsymbol{1}_{M_{\lambda}^{(k)}(z)}(X_i) \right]
	&=& \max_{i \in \lbrace 1,...,n_0 \rbrace} \mathbb{P} \left( \overline{N}_i^{(k)} \mathcal{KL}(\tilde{\theta}_i^{(k)}(\lambda), \theta) > z \right) \\
	&\overset{\text{Prop.~\ref{prop:propCond}}}{\leq}& \max \lbrace 2 e^{-z}, \epsilon \rbrace
\end{eqnarray*}
leading to Equation~(\ref{eq:propCond3}).
\end{proof}

\begin{Rem}\hspace{1 pt}
Theorem~\ref{thm:locPropNeuMS} provides a meaningful result only if $\epsilon := c_{\epsilon} n^{-q}$ with~$c_{\epsilon} > 0$ and~$q > 1$. We approximate the probability $\mathbb{P} \left( \overline{N}_i^{(k)} \mathcal{KL}(\tilde{\theta}_i^{(k)}(\lambda), \theta) > z \right)$ by the corresponding relative frequency~(\ref{eq:estimator}). This estimate can be calculated for $\epsilon \geq 1/n$ only. Additionally, it becomes instable if~$\epsilon$ is close to~$1/n$. 
In case of a regular design, the sample can be extended in a natural way allowing arbitrary sample sizes and as a consequence any~$\epsilon>0$. Otherwise, that is for random or irregular designs, we can achieve $\epsilon := c_{\epsilon} n^{-q}$ with~$c_{\epsilon} > 0$ and~$q > 1$ solely by application of the propagation condition on an artificial data set with~$m$ design points, where~$m \gg n$. In this case, one should evaluate carefully under which conditions the propagation condition generalizes from the artificial data set to the data set at hand.
\end{Rem}

\subsection[Generalization]{Generalization of the setting}\label{sec:genModel}

Assumption~(\ref{A1}) and hence the whole study were restricted to the case $\mathbb{E}_{\theta} \left[ T(Y) \right] = \theta$. Which modifications and additional assumptions are required in order to take the previous results over to the case where $ t(\theta) := \mathbb{E}_{\theta} \left[ T(Y) \right]$ is some invertible function? 


As mentioned in Remark~\ref{rem:A1}, $\mathbb{E}_{\theta} \left[ T(Y) \right] = \theta$ for all~$\theta \in \Theta$ can be achieved via reparametrization. Estimation of a parameter~$\vartheta$ with $t(\vartheta) := \mathbb{E}_{\vartheta} \left[ T(Y) \right] \neq \vartheta$ can still be done for invertible functions~$t(.)$, setting $\tilde{\vartheta}_i^{(k)} := t^{-1} ( \tilde{\theta}_i^{(k)} ) $ for all $i \in \lbrace 1,...,n \rbrace$ and $k \in \lbrace 0,...,k^* \rbrace$, where~$\tilde{\theta}_i^{(k)}$ denotes the adaptive estimator resulting from Algorithm~\ref{algorithm}. Hence the algorithm remains unmodified! We will see that all results in Sections~\ref{sec:theory}, \ref{sec:Lambda} and~\ref{sec:propCondPraxis} remain valid if~$t(\vartheta)$ is linear in~$\vartheta$. This generalizes our previous results to the Gamma, Erlang, Rayleigh, Binomial, and negative Binomial distributions, see Appendix~\ref{app:A1}.

\setcounter{Ass}{0}
\renewcommand{\theAss}{A\arabic{Ass}g}
\begin{Ass}[Parametrized exponential family model]\label{A1g}
$\mathcal{P}^{(t)} =(\mathbb{P}^{(t)}_{\vartheta}, \vartheta \in \Theta)$ is an exponential family 
with a compact and convex parameter set~$\Theta$ and strictly monotone functions $C_t, B_t \in C^2\left( \Theta, \mathbb{R} \right)$ such that
\[
	p_t(y, \vartheta) := d \mathbb{P}^{(t)}_{\vartheta} / d \mathbb{P} (y) = p(y) \exp \left[ T(y) C_t (\vartheta) - B_t (\vartheta) \right], \qquad \vartheta \in \Theta,
\]
where $T: \mathcal{Y} \to \mathbb{R} $ and~$p(y)$ is some non-negative function on~$\mathcal{Y}$. For the parameter~$\vartheta$ it holds
\[
	\int p_t(y, \vartheta) \mathbb{P}(dy) = 1 \quad \text{ and } \quad 
	\mathbb{E}^{(t)}_{\vartheta} \left[ T(Y) \right] = \frac{B_t' (\vartheta)}{C_t' (\vartheta)} =: t(\vartheta),
\]
where $t: \Theta \to \Theta$ denotes an invertible and continuously differentiable function.
\end{Ass}

\begin{Cor}\label{prop:A1g}
Let Assumption~(\ref{A1g}) be satisfied.
Reparametrization with $\theta := t(\vartheta)$ yields 
\begin{equation}\label{eq:KLt-g1}
	\mathcal{KL} \left( \vartheta_1, \vartheta_2 \right) 
	= \mathcal{KL} \left( \theta_1, \theta_2 \right) \qquad \text{  for all } \vartheta_1, \vartheta_2 \in \Theta.
\end{equation}
If~$t(\vartheta)$ is linear in~$\vartheta$, then it follows for the adaptive estimator $\tilde{\vartheta} := t^{-1} \left( \tilde{\theta} \right)$ that
\begin{equation}\label{eq:KLt-g}
	\mathcal{KL}(\tilde{\vartheta}, \mathbb{E} \tilde{\vartheta}) = \mathcal{KL}(\tilde{\theta}, \mathbb{E} \tilde{\theta}).
\end{equation}
\end{Cor}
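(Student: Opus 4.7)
The plan is to establish both equations by reducing them to direct calculations with the exponential family form, combined with the invariance of the Kullback--Leibler divergence under reparametrization.

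For Equation~(\ref{eq:KLt-g1}), I would first observe that by construction of the reparametrization $\theta := t(\vartheta)$, the densities satisfy $p_t(y,\vartheta) = p(y,t(\vartheta)) = p(y,\theta)$; equivalently $C_t = C\circ t$ and $B_t = B\circ t$. Consequently the two families $\mathcal{P}$ and $\mathcal{P}^{(t)}$ describe the same collection of probability measures, merely labelled differently, so that $\mathbb{P}^{(t)}_{\vartheta_i} = \mathbb{P}_{\theta_i}$ for $i=1,2$. Since the Kullback--Leibler divergence is defined purely in terms of the underlying probability measures, the identity $\mathcal{KL}(\vartheta_1,\vartheta_2) = \mathcal{KL}(\theta_1,\theta_2)$ follows at once. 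If desired, one can verify this explicitly by plugging $p_t$ into the definition and using $\mathbb{E}^{(t)}_{\vartheta_1}[T(Y)] = t(\vartheta_1) = \theta_1$, which reproduces the closed-form expression in Equation~(\ref{eq:KL}) with $C,B$ replaced by $C_t,B_t$.

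For Equation~(\ref{eq:KLt-g}) the idea is to apply Equation~(\ref{eq:KLt-g1}) pointwise in $\omega$ and then transport the expectation through $t$. Writing $t(\vartheta) = a\vartheta + b$ for the assumed linear $t$, we have $\tilde{\vartheta} = t^{-1}(\tilde{\theta}) = (\tilde{\theta}-b)/a$, so linearity of expectation gives
\[
	t(\mathbb{E}\tilde{\vartheta}) = a\,\mathbb{E}\tilde{\vartheta} + b = \mathbb{E}[a\tilde{\vartheta}+b] = \mathbb{E}[t(\tilde{\vartheta})] = \mathbb{E}\tilde{\theta}.
\]
Applying Equation~(\ref{eq:KLt-g1}) with $\vartheta_1 := \tilde{\vartheta}(\omega)$ and $\vartheta_2 := \mathbb{E}\tilde{\vartheta}$, and hence $\theta_1 = t(\tilde{\vartheta}(\omega)) = \tilde{\theta}(\omega)$ and $\theta_2 = t(\mathbb{E}\tilde{\vartheta}) = \mathbb{E}\tilde{\theta}$, yields the claimed identity $\mathcal{KL}(\tilde{\vartheta},\mathbb{E}\tilde{\vartheta}) = \mathcal{KL}(\tilde{\theta},\mathbb{E}\tilde{\theta})$.

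The only genuine subtlety is the commuting of $t$ with the expectation, which is precisely where linearity of $t$ enters and explains why it is singled out in the hypothesis. For a nonlinear invertible $t$, one would generally have $t(\mathbb{E}\tilde{\vartheta})\neq\mathbb{E}\tilde{\theta}$, so that $\mathcal{KL}(\tilde{\theta},\mathbb{E}\tilde{\theta})$ would compare $\tilde{\theta}$ with a point different from $t(\mathbb{E}\tilde{\vartheta})$, and Equation~(\ref{eq:KLt-g1}) alone would no longer suffice. Everything else in the argument is essentially the observation that $\mathcal{KL}$ is an invariant of the pair of measures, not of the labels used to index them.
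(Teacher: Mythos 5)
Your proposal is correct and matches the reasoning the paper leaves implicit: the Corollary is stated without a written-out proof, being treated as immediate from the fact that the Kullback--Leibler divergence depends only on the underlying measures $\mathbb{P}^{(t)}_{\vartheta}=\mathbb{P}_{t(\vartheta)}$ and not on their labels, which gives Equation~(\ref{eq:KLt-g1}), while linearity of $t$ is used exactly as you use it, namely to get $t(\mathbb{E}\tilde{\vartheta})=\mathbb{E}\tilde{\theta}$ so that Equation~(\ref{eq:KLt-g1}) can be applied pointwise with $\vartheta_1=\tilde{\vartheta}(\omega)$, $\vartheta_2=\mathbb{E}\tilde{\vartheta}$. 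Your closing remark on why nonlinearity breaks Equation~(\ref{eq:KLt-g}) correctly identifies the role of the linearity hypothesis.
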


If~$t(\vartheta)$ is linear in~$\vartheta$ and if the adaptive estimator of $\vartheta$ is defined by $\tilde{\vartheta}_i^{(k)} := t^{-1} ( \tilde{\theta}_i^{(k)} ) $ for all $i \in \lbrace 1,...,n \rbrace$ and $k \in \lbrace 0,...,k^* \rbrace$, then it follows from Corollary~\ref{prop:A1g} that all previous results remain valid under Assumption~(\ref{A1g}), where the formulations of the propagation condition and Assumptions~(\ref{AEst}) and~(\ref{AS}) can be adapted to the generalized setting via~$\vartheta = t^{-1}(\theta)$. 

The exponential bound~(PS~\ref{PS 2.1}) 
is the only result, where we really need that $ \mathbb{E}_{\theta} \left[ T(Y) \right] = \theta$. All other proofs could be shown directly, i.e. without reparametrization by~$\theta = t(\vartheta)$. 
Here, the convexity of the Kullback-Leibler divergence w.r.t. the first argument holds if
\[
	\tfrac{\partial^2}{\partial \theta^2} \, \mathcal{KL} \left( \theta, \theta' \right)
	= t'' (\theta) \left[ C(\theta) - C(\theta') \right] + t'(\theta) C'(\theta) > 0.
\]
Then, the proof of~(PS~\ref{PS 2.1}) can be generalized supposing Assumption~(\ref{A1g}) and 
\[
	D_t'(\hat{\nu}) \geq \sum_{j=1}^n w_j D_t'(\nu_j) \quad \text{ with }
	\nu := C_t (\vartheta), D_t(\nu) := B_t(\theta), \text{ and } \hat{\vartheta} := \sum_{j=1}^n w_j \mathbb{E} \left[ T(Y_j) \right].
\]
However, for many parametric families this inequality is violated. That is why we prefer to apply~(PS~\ref{PS 2.1}) in its original form, where $\mathbb{E}_{\theta} \left[ T(Y) \right] = \theta $, and generalize the exponential bound afterwards via Equation~(\ref{eq:KLt-g}).

\section{Conclusion}

This study provides theoretical properties for a simplified version of the Pro\-pa\-ga\-tion-Sepa\-ra\-tion approach, where the memory step is removed from the algorithm. In particular, we have verified the following results, which may help for a better understanding of the procedure.
\begin{itemize}
\item In Section~\ref{sec:propCond}, we introduced an advanced parameter choice strategy for the adaptation bandwidth~$\lambda$. Its dependence on the unknown parameter function is analyzed in Section~\ref{sec:Lambda} showing for the first time theoretical and numerical results that justify the propagation condition. 
\item This parameter choice yields strong results on propagation and stability of estimates for piecewise constant functions with sharp discontinuities, see Section~\ref{sec:theory}. 
\item Finally, we gave some more details concerning the application of the propagation condition in practice, see Section~\ref{sec:propCondPraxis}, and a generalization of the assumed setting, Section~\ref{sec:genModel}.
\item In Remark~\ref{rem:AEst}, we proposed a slight modification of the algorithm providing Assumption~(\ref{AEst}) on which the results in Section~\ref{sec:theory} were partially based.
\end{itemize}
The behavior of the algorithm and hence the achievable quality of estimation depend mainly on the extension of the homogeneous compartments, on the smoothness of the parameter function~$\theta(.)$,
and via the adaptation bandwidth~$\lambda$ on 
the parametric family $\mathcal{P} = \lbrace \mathbb{P}_{\theta} \rbrace_{\theta \in \Theta}$ of probability distributions. Our theoretical results give an intuition of the interplay of propagation and separation during iteration. 
Future research may concentrate on the case of model misspecification in order to justify the heuristic observations in Section~\ref{sec:Heurstics}, mathematically.

\appendix

\section[Reminder]{Exponential bound and technical lemma}\label{app:reminder}

We remind of two results which have been proven in \citep[Lemma 5.2, Theorem 2.1]{PoSp05}. 

\begin{PS}[Technical Lemma]\label{PS 5.2}
Under Assumption~(\ref{A1}) it holds
\[
	\mathcal{KL}^{1/2} \left( \theta_0, \theta_m \right)
	\leq \varkappa \, \sum_{l=1}^m \mathcal{KL}^{1/2} \left( \theta_{l-1}, \theta_l \right)
\]
for any sequence~$\theta_0, \theta_1,..., \theta_m \in \Theta_{\varkappa}$, where~$\varkappa > 0$ is as in Lemma~\ref{lem:A1}.
\end{PS}

\begin{PS}[Exponential bound]\label{PS 2.1}
If~$\theta(.) \equiv \theta$ and Assumption~(\ref{A1}) is satisfied then it holds
\[
	\mathbb{P} \left( N \, \mathcal{KL}(\overline{\theta}, \theta) > z \right) \leq 2 e^{-z}, \qquad \forall \, z > 0,
\]
where $N := \sum_{j=1}^n w_{j}$ and $\overline{\theta} := \sum_{j=1}^n w_{j} T(Y_j) / N$ with given weights $w_j \in [0,1]$.
\end{PS}

\newpage

\section[Examples]{Examples for parametric families}\label{app:A1}

\begin{table}[hp]
\begin{scriptsize}
\begin{center}
\begin{tabular}{l|cccccc}
$\mathcal{P}$, $\mathrm{support}(f_{\vartheta})$ & $\Theta$ & $p(y)$ & $T(y)$ & $C_t(\vartheta)$ & $B_t(\vartheta)$ & $\mathbb{E}_{\vartheta} \left[ T(Y) \right]$ 
\\ \hline
&&&&&& \\
$\mathcal{N} (\vartheta, \sigma^2)$ & $\mathbb{R}$ & $\dfrac{e^{-y^2/(2 \sigma^2)}}{\sqrt{2 \pi \sigma^2}}$ & $y$ & $\dfrac{\vartheta}{\sigma^2}$ & $\dfrac{\vartheta^2}{2 \sigma^2}$ & $\vartheta$ 
\\
$\qquad y \in \mathbb{R}$ &&&&&& \\
$\mathcal{N} (0, \vartheta)$ & $(0, \infty)$ & $\dfrac{1}{\sqrt{2 \pi}}$ & $y^2$ & $- \dfrac{1}{2 \vartheta}$ & $ \dfrac{\ln \vartheta}{2}$ & $\vartheta$ 
\\
$\qquad y \in \mathbb{R}$ &&&&&&\\
$\log \mathcal{N} (\vartheta, \sigma^2)$ & $(0, \infty)$ & $\dfrac{e^{-(\ln y)^2/(2 \sigma^2)}}{y \sqrt{2 \pi \sigma^2}}$ & $ \ln y $ & $\dfrac{\vartheta}{\sigma^2}$ & $\dfrac{\vartheta^2}{2 \sigma^2}$ & $ \vartheta$ 
\\ 
$\qquad y \in (0, \infty)$ &&&&&& \\
$\Gamma(p, \vartheta) $ & $(0, \infty)$ & $\dfrac{y^{p-1}}{\Gamma(p)}$ & $ y $ & $ - \dfrac{1}{\vartheta} $ & $ p \ln \vartheta $ & $ p \, \vartheta $ 
\\
$\qquad y \in (0, \infty)$ &&&&&& \\
$\mathrm{Exp}\left( \dfrac{1}{\vartheta} \right)$ &  $(0, \infty)$ & $ 1 $ & $ y $ & $ - \dfrac{1}{\vartheta} $ & $ \ln \vartheta $ & $ \vartheta $ 
\\
$\qquad y \in [0, \infty)$ &&&&&& \\
$\mathrm{Erlang} \left( n, \dfrac{1}{\vartheta} \right)$ &  $(0, \infty)$ & $\dfrac{y^{n-1}}{(n-1)!}$ & $ y $ & $ - \dfrac{1}{\vartheta} $ & $ n \ln \vartheta $ & $n \, \vartheta $ 
\\
$\qquad y \in [0, \infty)$ &&&&&& \\
$\mathrm{Rayleigh} (\vartheta) $ & $(0, \infty)$ & $ y $ & $ y^2 $ & $ - \dfrac{1}{2 \vartheta^2} $ & $ 2 \ln \vartheta $ & $ 2 \vartheta^2 $ 
\\
$\qquad y \in [0, \infty)$ &&&&&& \\
$\mathrm{Weibull} (\vartheta, k) $ & $(0, \infty)$ & $ k y^{k-1} $ & $ y^k $ & $ - \dfrac{1}{\vartheta^k} $ & $ k \ln \vartheta $ & $ \vartheta^k $ 
\\
$\qquad y \in [0, \infty)$ &&&&&& \\
$k Y / \vartheta \sim \chi^2 (k)$ & $(0, \infty)$ & $\dfrac{k^{k/2}	 y^{k/2-1}}{2^{k/2} \Gamma \left( k / 2 \right)}$ & $y$ & $- \dfrac{k}{2 \vartheta}$ & $ \dfrac{k \ln \vartheta}{2} $ & $ \vartheta $ 
\\
$\qquad y \in [0, \infty)$ &&&&&&  \\
$\mathrm{Pareto}(x_m, \vartheta)$ & $(1, \infty)$ & $ \dfrac{1}{y} $ & $ \ln \left( \dfrac{y}{x_m} \right) $ & $ - \vartheta $ & $ - \ln \left( \vartheta \right) $ & $ \dfrac{1}{\vartheta} $ \\
$\qquad y \in [x_m, \infty) $ &&&&&
\end{tabular}
\end{center}
\end{scriptsize}
\vspace{12 pt}
\caption{One-parametric exponential families which satisfy Assumption~(\ref{A1g}): 
Continuous distributions}
\label{tab:contDistr}
\end{table}

\begin{table}[hp]
\begin{scriptsize}
\begin{center}
\begin{tabular}{l|cccccc}
$\mathcal{P}$, $\mathrm{support}(f_{\vartheta})$ & $\Theta$ & $p(y)$ & $T(y)$ & $C_t(\vartheta)$ & $B_t(\vartheta)$ & $\mathbb{E}_{\vartheta} \left[ T(Y) \right]$ \\ \hline
&&&&&& \\
$\mathrm{Poiss} (\vartheta) $ & $(0, \infty)$ & $1/k!$ & $ k $ & $ \ln \vartheta $ & $ \vartheta $ & $ \vartheta $  \\
$ \qquad y := k \in \mathbb{N}$ &&&&& \\
$\mathrm{Bin}(n, \vartheta)$ & $(0,1]$ & $\left(\begin{array}{c} n \\ k \end{array}\right)$ & $k$ & $\ln \left( \dfrac{\vartheta}{1 - \vartheta} \right)$ & $- n \ln (1 - \vartheta) $ & $ n \, \vartheta $ \\
$\qquad y:= k \in \lbrace 0,1,...,n \rbrace$ &&&&& \\
$\mathrm{NegativeBin}(r, \vartheta)$ & $(0,1]$ & $\left(\begin{array}{c} k+r-1 \\ k \end{array}\right)$ & $k$ & $\ln \vartheta$ & $- r \ln (1 - \vartheta) $ & $ \dfrac{r \vartheta}{1 - \vartheta} $ \\
$\qquad y:= k \in \mathbb{N}$ &&&&& \\
$\mathrm{Bernoulli}(\vartheta)$ & $(0,1]$ & $1$ & $k$  & $\ln \left( \dfrac{\vartheta}{1 - \vartheta} \right)$ & $- \ln (1 - \vartheta) $ & $ \vartheta $ \\
$\qquad y:= k \in \lbrace 0,1 \rbrace$ &&&&&&
\end{tabular}
\end{center}
\end{scriptsize}
\vspace{12 pt}
\caption{One-parametric exponential families which satisfy Assumption~(\ref{A1g}): 
Discrete distributions}
\label{tab:discrDistr}
\end{table}

\section*{Acknowledgements}

This work was partially supported by the 
Stiftung der Deutschen Wirtschaft (SDW). The authors would like to thank J\"{o}rg Polzehl, Vladimir Spokoiny and Karsten Tabelow (WIAS Berlin) for helpful discussions.

\bibliographystyle{plainnat}
\bibliography{PS06-revision}

\end{document}